\documentclass[preprint,12pt]{elsarticle}
\usepackage[utf8]{inputenc}
\usepackage{lmodern}
\usepackage{microtype}
\usepackage{amsmath,amssymb,amsthm,mathtools}
\usepackage{booktabs}
\usepackage{array}
\usepackage{enumitem}
\usepackage{placeins}
\usepackage{xcolor}
\usepackage{hyperref}

\hypersetup{
  colorlinks=true,
  linkcolor=blue!50!black,
  citecolor=blue!50!black,
  urlcolor=blue!60!black,
  pdftitle={When More Generators Hurt: Shellsort on Full Product Grids},
  pdfauthor={Ziqi Zhao and Qingjian Ni}
}

\usepackage[nameinlink,capitalise]{cleveref}

\newtheorem{theorem}{Theorem}[section]
\newtheorem{proposition}{Proposition}[section]
\newtheorem{lemma}{Lemma}[section]
\newtheorem{corollary}{Corollary}[section]
\theoremstyle{definition}

\theoremstyle{remark}
\newtheorem{remark}{Remark}[section]
\AddToHook{env/theorem/begin}{\crefalias{section}{theorem}}
\AddToHook{env/proposition/begin}{\crefalias{section}{proposition}}
\AddToHook{env/lemma/begin}{\crefalias{section}{lemma}}
\AddToHook{env/corollary/begin}{\crefalias{section}{corollary}}
\AddToHook{env/definition/begin}{\crefalias{section}{definition}}
\AddToHook{env/remark/begin}{\crefalias{section}{remark}}
\makeatletter
\let\c@proposition\c@theorem
\let\c@lemma\c@theorem
\let\c@corollary\c@theorem
\let\c@definition\c@theorem
\let\c@remark\c@theorem
\makeatother

\newcommand{\bbT}{\mathbb T}
\newcommand{\ee}{\mathrm e}
\newcommand{\normfrac}[1]{\left\lVert #1\right\rVert_{\bbT}}

\journal{Information and Computation}

\begin{document}

\begin{frontmatter}

\title{When More Generators Hurt: Shellsort on Full Product Grids}

\author[seu]{Ziqi Zhao}
\ead{ziqizhao@seu.edu.cn}

\author[seu]{Qingjian Ni\corref{cor1}}
\cortext[cor1]{Corresponding author}
\ead{nqj@seu.edu.cn}

\affiliation[seu]{
  organization={School of Computer Science and Engineering, Southeast University},
  city={Nanjing},
  country={China}
}

\begin{abstract}
Shellsort repeatedly runs insertion sort with decreasing gaps, so its
worst-case cost depends on the gap sequence.  Pratt's $2^u3^v$ sequence, one
of the few systematic constructions with a proven $O(n\log^2 n)$ bound,
includes every product below $n$ of two base numbers, or generators.  We ask
whether adding more base numbers, and thus more intermediate gaps, can improve
this full product grid.

We show that it cannot when every product is retained and each base is at most
a fixed power of the smallest.  With $r$ independent bases---different
exponent choices give different products---and $\Theta(\log n)$ gaps, the best
possible worst-case cost is
\[
  n\exp\!\left(\Theta((\log n)^{1-1/r})\right).
\]
Thus two bases give the exponent $\sqrt{\log n}$, whereas three give
$(\log n)^{2/3}$: more bases are worse.  With a budget of $p$ gaps, matching
bounds give the factor $\exp(\Theta(\log n/p^{1/r}))$ beyond linear cost.

The reason is simple.  Few products force the smallest base $m$ to be large,
and fullness makes $m$ the next-to-last gap.  An input built from reversed
blocks is already sorted for every earlier gap, forcing $\Omega(nm)$ work in
the final pass.  Powers of distinct primes give a matching construction.

For arbitrary gaps, we count current-gap multiples that earlier gaps cannot
form.  This gives upper and lower bounds for individual passes.  A Fourier
argument gives necessary conditions for small total cost, while short
nonnegative sums give sufficient conditions.  In both settings, useful
distances must be available before they are needed.
\end{abstract}

\begin{keyword}
Shellsort \sep Sorting algorithms \sep Worst-case complexity \sep Gap sequences \sep Product grids
\end{keyword}

\end{frontmatter}

\section{Introduction}

Shellsort is a comparison sorting algorithm whose performance is determined by
the gaps used by its insertion-sort passes.

For a decreasing gap sequence
$H=(h_1>h_2>\cdots>h_p=1)$, the pass with gap $h_j$ sorts positions with the
same remainder modulo $h_j$.
Large gaps move keys over long distances; the final gap $1$ completes the
sort.  We write $W_n(H)$ and $C_n(H)$ for the worst-case numbers of exchanges
and comparisons on $n$ keys.  Since each exchange uses a comparison and each
pass has at most one additional linear term, lower bounds for $W_n$ also apply
to $C_n$, while an upper bound for $W_n$ becomes one for $C_n$ after adding
$O(np)$.

Pratt's sequence
\[
  \{2^u3^v<n:u,v\in\mathbb N_0\}
\]
uses $O(n\log^2 n)$ comparisons \cite{pratt1971}.  It is a full product grid:
all products below $n$ of the two generators are used as gaps.  This suggests
an immediate extension.  If two generators work well, can three or four
independent generators provide more intermediate gaps and make Shellsort
faster?

Why study this restricted family?  General bounds tell us what some gap
sequence might achieve, but they give little guidance for choosing the gaps.
Full product grids are among the few systematic families with provable
guarantees, and Pratt's sequence is their classical example.  Adding another
base number is the most direct way to create more intermediate gap sizes.  If
this worked, it would give a simple route to faster Shellsort; if it fails, a
different design principle is needed.

The question is not answered by existing bounds.  Bounds for arbitrary
$p$-pass sequences depend mainly on the number of passes and treat two-base
and three-base product constructions alike.  Merely counting the products is
also not enough: we must show that the forced gaps cause actual work on one
input.  To identify the right asymptotic cost, we must also construct a family
that meets the lower bound under the same gap budget.

Our answer is no for balanced full product grids.  For generators
$A=\{a_1,\ldots,a_r\}$, such a grid contains every product
$\prod_i a_i^{u_i}<n$.  Here ``full'' means that no such product is omitted,
``independent'' means that different exponent choices give different
products, and ``balanced'' means that every generator is at most a fixed power
of the smallest.

\paragraph{Main result in plain terms}
For each fixed $r\ge2$, among balanced full product grids with a budget of $p$
gaps, the factor beyond linear cost is
\[
  \exp\!\left(\Theta\!\left(\frac{\log n}{p^{1/r}}\right)\right)
\]
in the range where our lower and upper bounds match.  In particular, for
$p=\Theta(\log n)$,
\[
  \begin{aligned}
    r=2 &: \quad n\exp(\Theta(\sqrt{\log n})),\\
    r=3 &: \quad n\exp(\Theta((\log n)^{2/3})),\\
    r\ge2 &: \quad n\exp(\Theta((\log n)^{1-1/r})).
  \end{aligned}
\]
Each additional independent generator increases the exponent.  Thus the
extra intermediate gaps do not compensate for the larger gaps that the
product budget forces, and two generators are optimal in this class.  The
precise assumptions and full pass-budget statement appear in
\cref{thm:balanced-grid-barrier}.

\paragraph{Why more generators hurt}
The lower bound is the composition of two elementary facts:
\[
  \begin{gathered}
    \text{few products}
    \Longrightarrow \text{large smallest generator}\\
    \Longrightarrow \text{large next-to-last gap}
    \Longrightarrow \text{expensive final pass}.
  \end{gathered}
\]
Indeed, if only $p$ products lie below $n$, counting the possible exponent
choices and using balance give
$\log m=\Omega(\log n/p^{1/r})$ for the smallest generator $m$.  Because the
grid is full, $m$ is the next-to-last gap.  Reverse the keys inside consecutive
blocks of length $m$ and give later blocks larger values.  The input is already
sorted at every gap at least $m$, so only the final pass works; it must remove
$\Omega(nm)$ inversions.  For the upper bound, suitably sized powers of
distinct primes provide $\Theta(p)$ gaps and make enough distances available
early to match the lower bound up to constants in the exponent.

\paragraph{What survives beyond product grids}
The product-grid proof uses a more general fact: once a distance is a
nonnegative sum of earlier gaps, it remains sorted.  For an arbitrary gap
sequence, we therefore count the multiples of the current gap that cannot yet
be formed in this way.  \Cref{sec:directed} gives an upper bound from this
count, lower bounds showing when the count is unavoidable, and a
two-generator case in which the bounds differ only by a constant factor.

The appendices study two different kinds of arithmetic information.  A signed
integer relation allows subtraction and is useful for proving that a low cost
is impossible.  A nonnegative sum uses only addition and describes a distance
that earlier passes have actually sorted.  A standard Fourier argument turns
the first kind into necessary conditions; short sums of the second kind give
sufficient conditions.  These results complement rather than improve the
unrestricted $p^{-1/2}$ tradeoff of Poonen and Plaxton--Suel
\cite{poonen1993,plaxtonsuel1997}.  Their purpose is to explain why a relation
helps only if it becomes available before the pass that needs it.

\paragraph{Notation}
$\mathbb N=\{1,2,\ldots\}$ and $\mathbb N_0=\{0,1,\ldots\}$; all logarithms
are natural.  Integer representations may use coefficients of either sign,
whereas nonnegative representations use coefficients in $\mathbb N_0$.

\section{Background and related work}
\label{sec:prior}

Shell introduced diminishing-increment sorting in 1959 \cite{shell1959}.
Pratt's two-generator grid gives the classical $O(n\log^2 n)$ comparison bound
\cite{pratt1971}.  Sedgewick obtained an $O(n^{4/3})$ bound, and
Incerpi--Sedgewick obtained
$n^{1+O(1/\sqrt{\log n})}$ with $O(\log n)$ gaps
\cite{sedgewick1986,incerpi1985}.

For arbitrary sequences with $p$ passes, Poonen proved a worst-case lower
bound of the form
\[
  \Omega\!\left(n^{1+c/\sqrt p}\right)
\]
and a matching upper dependence
$O(pn^{1+O(1/\sqrt p)})$ \cite{poonen1993}.  Plaxton and Suel sharpened the
lower side and, together with Poonen, established the universal bound
\[
  \Omega\!\left(
    n\left(\frac{\log n}{\log\log n}\right)^2
  \right)
\]
\cite{plaxtonsuel1997}.  These results describe what is possible when the gaps
are unrestricted.  Because they depend on $p$ rather than on how the gaps are
generated, they treat product grids with two or three base numbers in the same
way.  We do not improve this universal tradeoff.

Pratt's gaps are products of $2$ and $3$.  Incerpi--Sedgewick's construction
likewise uses arithmetic coverage: once an array is sorted at distances $d$
and $e$, it is sorted at every nonnegative combination of $d$ and $e$
\cite{incerpi1985}.  Product grids make these useful multiples available at
successive scales.

Our question is narrower: what is the best cost when all products of $r$
balanced, independent base numbers must be used?  The exponent now depends on
$p$ as $p^{-1/r}$.  Two base numbers recover the classical $p^{-1/2}$
dependence, while every fixed $r>2$ is worse inside this family.  For fixed
generators, the same counting viewpoint gives
$\Theta_A(n(\log n)^{d(A)})$ comparisons, where $d(A)$ counts the independent
product directions.

For general sequences, classical upper bounds ask which distances are
nonnegative combinations of gaps already processed
\cite{incerpi1985,rosalesgarcia2009}.  Our bounds keep the order of the passes,
restrict attention to distances that occur in an array of length $n$, and
weight each distance by its number of position pairs.  Poonen's zero--one
construction supplies the corresponding hard-input idea \cite{poonen1993}.
The appendix Fourier bound is related to Zang's recent method \cite{zang2026};
we keep that complementary machinery outside the main argument.

\section{Generators versus passes in full product grids}
\label{sec:applications}

Let $A=\{a_1,\ldots,a_r\}\subseteq\{2,3,\ldots,n-1\}$, and define the full
product grid
\[
  H_A(n)=
  \left\{\prod_{i=1}^r a_i^{u_i}<n:
  u_i\in\mathbb N_0\right\},
\]
with duplicate values removed and the gaps used in decreasing order.  Write
$p_A(n)=|H_A(n)|$.  The generators are multiplicatively independent when
distinct exponent vectors give distinct products.  Put
\[
  m=\min_i a_i,
  \qquad R=\max_i a_i.
\]
For a fixed $K>1$, we call the grid balanced when $R\le m^K$; equivalently,
the generator logarithms are comparable up to a fixed factor.  When
$\gcd(A)=1$, write
\[
  \langle A\rangle
  =\left\{\sum_{i=1}^r x_i a_i:x_i\in\mathbb N_0\right\},
  \qquad
  g(\langle A\rangle)=|\mathbb N\setminus\langle A\rangle|.
\]
Thus $g(\langle A\rangle)$ is simply the number of positive integers that
cannot be written as a nonnegative sum of the generators.

When the lower and upper bounds meet, the theorem below has a simple meaning:
within this family, the best worst-case cost is linear cost multiplied by
$\exp(\Theta_{r,K}(\log n/p^{1/r}))$.  The full statement also records the two
bounds outside that matching range.

\begin{theorem}[Main theorem: generators versus passes]
\label{thm:balanced-grid-barrier}
Fix $r\ge2$ and $K>1$.  Let $p=p(n)$ be integer-valued, with
\[
  p\longrightarrow\infty,
  \qquad
  p=o((\log n)^r).
\]
Every multiplicatively independent set $A_n$ of $r$ generators satisfying
$\gcd(A_n)=1$, $R_n\le m_n^K$, and $|H_{A_n}(n)|\le p$ obeys
\[
  W_n(H_{A_n}),\ C_n(H_{A_n})
  \ge
  n\exp\!\left(
    \Omega_{r,K}\!\left(\frac{\log n}{p^{1/r}}\right)
  \right).
\]
Conversely, there are pairwise coprime, multiplicatively independent sets
$A_n$ with $R_n\le m_n^K$ such that
\[
  |H_{A_n}(n)|=\Theta_r(p),
  \qquad |H_{A_n}(n)|\le p,
\]
and
\[
  C_n(H_{A_n})
  \le np\exp\!\left(
    O_r\!\left(\frac{\log n}{p^{1/r}}\right)
  \right).
\]
If, in addition,
\[
  \log p=o\!\left(\frac{\log n}{p^{1/r}}\right),
\]
then the smallest possible worst-case exchange and comparison counts in this
class both have scale
\[
  n\exp\!\left(
    \Theta_{r,K}\!\left(\frac{\log n}{p^{1/r}}\right)
  \right).
\]
\end{theorem}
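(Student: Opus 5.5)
The proof splits into a lower bound, a construction, and their comparison.

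\emph{Lower bound.} First we relate the size of the grid to the smallest generator. Every exponent vector with $\sum_i u_i\log a_i<\log n$ gives a distinct gap, by independence. Since $\log a_i\le K\log m$, every vector with $\sum_i u_i<\log n/(K\log m)$ qualifies. The number of such vectors is at least $\binom{L+r}{r}\ge (L/r)^r$ with $L=\lfloor \log n/(K\log m)\rfloor$. Hence $p\ge c_{r}(\log n/(K\log m))^r$ whenever $L\ge1$, which gives $\log m\ge c_{r,K}\log n/p^{1/r}$. If $L=0$, then $m>n^{1/K}$, which is even stronger.

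Next, fullness matters: every gap other than $1$ is a product with some $u_i\ge1$, so it is at least $m$. Thus $m$ is the next-to-last gap, and all earlier gaps are at least $m$. Take the input made of consecutive blocks of length $m$, each reversed internally, with every key of a later block larger than every key of an earlier block. For any gap $h\ge m$, positions $i<i+h$ lie in different blocks, so the input is already $h$-sorted and those passes perform no exchanges. The final pass must then remove all inversions, about $(n/m)\binom m2=\Omega(nm)$ once $m\le n/2$. Otherwise take one reversed block of length $n$, for $\Omega(n^2)$. Substituting the bound on $\log m$ gives the stated lower bound for $W_n$, and therefore for $C_n$.

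\emph{Upper bound.} Choose $r$ distinct primes $a_1<\dots<a_r$, all lying in $[M,2M]$, with $\log M=\lambda\log n/p^{1/r}$. Such primes exist for large $M$ by Bertrand/PNT-type results. They are pairwise coprime and independent, and they are balanced since $2M\le M^K$. Because $p\to\infty$, $M\to\infty$, and because $p=o((\log n)^r)$, $\log n/\log M=p^{1/r}/\lambda\to\infty$. The counting above then gives $|H|=\Theta_r((\log n/\log M)^r)=\Theta_r(\lambda^{-r}p)$. Choosing the constant $\lambda$ makes this at most $p$ while still $\Theta_r(p)$.

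For the cost we use the coverage principle stated in \cref{sec:directed}: once a distance is a nonnegative sum of earlier gaps, it stays sorted. Consider the pass with gap $h$. The gaps $ha_1,\dots,ha_r$ come earlier whenever they are below $n$, since the grid is full. The array is therefore sorted at every distance $hs$ with $s\in\langle A\rangle$.

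The work of the $h$-pass on each residue chain is then bounded by the number of non-representable multiples, as in the Pratt argument. A key can move back only across positions $jh$ with $j\notin\langle A\rangle$. This costs at most $g(\langle A\rangle)+1\le$ (Frobenius number) $=O(M^2)$ per element, since the Frobenius number is at most $a_1a_2$. That gives a per-pass bound of $n\cdot O(M^2)$.

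Gaps $h>n/m$ are the exception, because their multiples $ha_i$ are not present. For these passes the chains have length less than $m\le 2M$, so the pass costs $O(nM)$ anyway. Summing over the $p$ passes gives $C_n\le np\,O(M^2)=np\exp(O_r(\log n/p^{1/r}))$.

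\emph{Matching and main obstacle.} If $\log p=o(\log n/p^{1/r})$, the factor $p$ is absorbed into the exponent, and the two bounds coincide at scale $n\exp(\Theta(\log n/p^{1/r}))$.

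The main obstacle is the per-pass bound. We must justify rigorously that an $h$-pass on an array that is $hs$-sorted for all $s\in\langle A\rangle$ costs $O(n\,g(\langle A\rangle))$ or $O(nM^2)$. This is where I would invoke the \cref{sec:directed} upper bound, which counts current-gap multiples not formable from earlier gaps. If that is unavailable, I would instead use the standard inversion argument: an element at position $i$ can exceed only elements at $i-jh$ with $j\notin\langle A\rangle$.
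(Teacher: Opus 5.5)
Your proposal is correct and follows the paper's proof essentially step for step. It uses the same three ingredients: the exponent-vector count forcing $\log m=\Omega(\log n/p^{1/r})$, the block-reversed input that makes the final pass the only one doing work, and the persistence argument bounding each pass by $n\,g(\langle A\rangle)$ via Sylvester's $(a_1-1)(a_2-1)/2$. The differences are minor. The paper takes powers $\pi_i^{\lfloor y_n/\log\pi_i\rfloor}$ of fixed primes rather than $r$ primes in $[M,2M]$, which avoids any prime-distribution input. Your separate handling of gaps $h>n/m$ is unnecessary, since distances $\ge n$ are vacuously sorted. Your fallback ``one reversed block of length $n$'' for $m>n/2$ should be dropped: a fully reversed input does not leave the earlier passes idle, whereas one block of length $m$ plus a remainder still works and gives $\binom m2\ge n(m-1)/4$.
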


The most transparent consequence occurs when the number of gaps is
logarithmic.

\begin{corollary}[Logarithmically many gaps]\label{cor:balanced-log-gaps}
For $p=\Theta(\log n)$, the optimal scale in \cref{thm:balanced-grid-barrier}
is
\[
  n\exp\!\left(\Theta((\log n)^{1-1/r})\right).
\]
In particular, the overhead for $r=2,3,4$ is respectively
\[
  \exp(\Theta(\sqrt{\log n})),\qquad
  \exp(\Theta((\log n)^{2/3})),\qquad
  \exp(\Theta((\log n)^{3/4})).
\]
Thus two generators are optimal among balanced full product grids with any
fixed number of independent generators.
\end{corollary}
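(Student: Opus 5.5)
The corollary follows from \cref{thm:balanced-grid-barrier} once we check its hypotheses at $p=\Theta(\log n)$. The plan has three steps: verify the growth and matching conditions, substitute into the exponent, and compare the resulting exponents across $r$.

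\textbf{Growth conditions.} Let $p=p(n)$ be integer-valued with $c_1\log n\le p\le c_2\log n$. Then $p\to\infty$. Since $r\ge 2$, we have $p=O(\log n)=o((\log n)^r)$. So the lower bound and the construction in \cref{thm:balanced-grid-barrier} both apply.

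\textbf{Matching condition.} The theorem's two bounds coincide only under $\log p=o(\log n/p^{1/r})$. Here
\[
  \frac{\log n}{p^{1/r}}=\Theta\bigl((\log n)^{1-1/r}\bigr).
\]
Because $r\ge2$, the exponent satisfies $1-1/r\ge 1/2$. On the other side, $\log p=\log\log n+O(1)$. Any positive power of $\log n$ dominates $\log\log n$, so the matching condition holds.

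\textbf{Substitution.} The optimal scale asserted in \cref{thm:balanced-grid-barrier} then becomes
\[
  n\exp\bigl(\Theta((\log n)^{1-1/r})\bigr),
\]
with constants depending on $r$ and $K$. One point needs explicit care: the upper-bound prefactor $p$. It contributes $p=\exp(\log p)$, and the matching condition absorbs this into the exponent. This is the only place any real care is needed, and it is exactly what that condition guarantees.

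\textbf{Specialization and comparison.} Taking $r=2,3,4$ gives the exponents $1/2$, $2/3$ and $3/4$. For the final claim, fix any $r\ge3$. Then
\[
  \frac{(\log n)^{1-1/r}}{(\log n)^{1/2}}=(\log n)^{1/2-1/r}\longrightarrow\infty,
\]
because $1/2-1/r>0$. Hence the two-generator cost $n\exp(O(\sqrt{\log n}))$ is eventually much smaller than $n\exp(\Omega_{r,K}((\log n)^{1-1/r}))$, which bounds every balanced full grid with $r$ generators.

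Two-generator grids also achieve this scale: Pratt's grid uses $\Theta(\log^2 n)$ gaps, which exceeds the budget, so instead we use the $r=2$ construction from the theorem. This shows that $r=2$ is optimal among balanced full product grids with a fixed number of independent generators.
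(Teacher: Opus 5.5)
Your proposal is correct and is essentially the paper's proof: you substitute $p=\Theta(\log n)$ into \cref{thm:balanced-grid-barrier} and verify the matching condition through $\log p=\log\log n+O(1)=o((\log n)^{1-1/r})$. Your explicit checks of $p\to\infty$ and $p=o((\log n)^r)$, and the comparison $(\log n)^{1/2-1/r}\to\infty$ for $r\ge3$ under a common gap budget, spell out what the paper's one-line argument leaves implicit.
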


The proof has one short lower-bound chain and one matching construction:
\[
  \begin{gathered}
    \text{few products}
    \Longrightarrow \text{large smallest generator}\\
    \Longrightarrow \text{large next-to-last gap}
    \Longrightarrow \text{expensive final pass}.
  \end{gathered}
\]
The next two lemmas make the two implications quantitative.  Balanced
prime-power generators then give the upper bound.

\subsection{The two lower-bound lemmas}

\begin{lemma}[Few products force a large generator]
\label{lem:few-products-large-generator}
Let $A$ contain $r$ multiplicatively independent generators, let
$m=\min A$ and $R=\max A$, and suppose that $R\le m^K$.  Then
\[
  \log m
  \ge
  \frac{\log n}{4K(r!)^{1/r}p_A(n)^{1/r}}.
\]
\end{lemma}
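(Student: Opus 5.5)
We bound $p_A(n)$ from below by counting lattice points in a simplex, then invert. By multiplicative independence, distinct exponent vectors $u\in\mathbb N_0^r$ give distinct products, so
\[
  p_A(n)=\#\Bigl\{u\in\mathbb N_0^r:\ \textstyle\sum_i u_i\log a_i<\log n\Bigr\}.
\]
Since every $a_i\le R$, each $u$ with $\sum_i u_i<\log n/\log R$ lies in this set. Write $L=\log n/\log R$. The number of $u\in\mathbb N_0^r$ with $\sum_i u_i\le N$ is $\binom{N+r}{r}\ge (N+1)^r/r!$, where $N=\lceil L\rceil-1$ is the largest integer strictly below $L$. Because $N+1\ge L$, we obtain
\[
  p_A(n)\ge \frac{L^r}{r!}.
\]
Rearranging gives $\log R\ge \log n/\bigl((r!)^{1/r}p_A(n)^{1/r}\bigr)$, and balance, $\log R\le K\log m$, then yields
\[
  \log m\ge \frac{\log n}{K(r!)^{1/r}p_A(n)^{1/r}}.
\]
This is stronger than the stated bound by the factor $4$, so the slack can absorb any careless rounding.

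The main point to check is the edge case where $L\le 1$, that is, $R\ge n$. This does not occur, because $A\subseteq\{2,\dots,n-1\}$ gives $R<n$, so $L>1$. In any case $p_A(n)\ge1$ from $u=0$, and the count $\binom{N+r}{r}$ with $N\ge0$ already covers that case. No other obstacle is expected: the argument is a simplex count followed by an inversion using $R\le m^K$. If a cruder bound is preferred, $\binom{N+r}{r}\ge (N+1)^r/r!$ can be replaced by $(L/2)^r/r!$, which is safe whenever $L\ge 2$; the constant $4$ in the statement leaves room for such losses.
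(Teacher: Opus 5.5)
Your proof is correct and follows the same route as the paper: by independence, the $\binom{N+r}{r}$ exponent vectors with sum at most $N$ give distinct products below $n$; you bound this count below by a power over $r!$, invert, and apply $\log R\le K\log m$. Taking $N$ to be the largest integer below $\log n/\log R$ and using $\binom{N+r}{r}\ge (N+1)^r/r!$ avoids the paper's halved range $t=\lfloor \log n/(2\log R)\rfloor$ and its case split on $\log R>\log n/4$, so you get the bound without the factor $4$.
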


\begin{proof}
Put $\mathcal L=\log n$ and
$t=\lfloor\mathcal L/(2\log R)\rfloor$.  Multiplicative independence makes
all products with $u_1+\cdots+u_r\le t$ distinct and smaller than $n$, so
\[
  p_A(n)\ge\binom{t+r}{r}.
\]
If $\log R>\mathcal L/4$, the claimed bound follows from
$\log m\ge(\log R)/K$ and $p_A(n)\ge1$.  Otherwise
$t\ge\mathcal L/(4\log R)$, and hence
\[
  p_A(n)
  \ge\frac1{r!}
  \left(\frac{\mathcal L}{4\log R}\right)^r.
\]
Rearranging and using $\log m\ge(\log R)/K$ proves the result.
\end{proof}

\begin{lemma}[A large next-to-last gap is expensive]
\label{lem:penultimate-gap}
Let
\[
  H=(h_1>\cdots>h_{p-1}>h_p=1),\qquad p\ge2,
\]
and put $b=h_{p-1}$.  Write $n=qb+s$, where $q\ge1$ and $0\le s<b$.
Then
\[
  W_n(H)\ge q\binom b2+\binom s2
  \ge \frac{n(b-1)}4.
\]
The same lower bound holds for $C_n(H)$.
\end{lemma}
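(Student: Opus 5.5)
I will build one explicit input on which every pass except the last does nothing, and then count the inversions that the final gap-$1$ pass must remove. Split the positions $0,\ldots,n-1$ into $q$ consecutive blocks of length $b$, followed by a final block of length $s$. Give every key in a later block a larger value than every key in an earlier block, and inside each block place the keys in decreasing order. Concretely, the key at position $jb+i$ has value $jb+(b-1-i)$, with the last partial block treated the same way using length $s$.

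The main step is to show that this array is $h$-sorted for every gap $h\ge b$. Take positions $x<x+h$. Since $h\ge b$ and each block has length at most $b$, these two positions lie in different blocks, and $x+h$ lies in a strictly later block. By construction every key there is larger, so the pair is in order. Hence each pass with gap $h_1,\ldots,h_{p-1}$ (all at least $b$) finds every chain already sorted and makes no exchanges. The array reaching the final pass is therefore the constructed array itself.

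The final pass is ordinary insertion sort. It performs exactly one exchange per inversion, so its cost equals the number of inversions of the array. Inversions occur only inside blocks, and each reversed block of length $\ell$ contributes $\binom{\ell}{2}$. This gives $W_n(H)\ge q\binom b2+\binom s2$. Every exchange uses a comparison, so the same bound holds for $C_n(H)$.

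It remains to show $q\binom b2\ge n(b-1)/4$. Since $q\ge1$ and $s<b$, we have $n<(q+1)b\le 2qb$, so $qb\ge n/2$. Then $q\binom b2=qb(b-1)/2\ge n(b-1)/4$. Dropping the nonnegative term $\binom s2$ gives the stated inequality.

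The only step needing care is the $h$-sortedness claim. The argument above depends on all earlier gaps being at least $b$, which holds because $H$ is strictly decreasing and $b=h_{p-1}$.
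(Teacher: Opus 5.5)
Your proposal is correct and follows the paper's proof essentially verbatim. You use the same block-reversed input, note that any gap $h\ge b$ joins two different blocks so all earlier passes make no exchanges, count the $q\binom b2+\binom s2$ inversions removed by the final pass, and finish with $qb\ge n/2$.
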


\begin{proof}
Partition the positions into $q$ blocks of length $b$ and one final block of
length $s$.  Give later blocks larger values, but reverse the values within
each block.  If $h\ge b$, positions $i$ and $i+h$ lie in different blocks, so
the input is already $h$-sorted.  Every nonfinal pass therefore makes no
exchange.  The final pass removes exactly
$q\binom b2+\binom s2$ inversions.  Since $n-s=qb\ge n/2$, we have
$q\binom b2=(n-s)(b-1)/2\ge n(b-1)/4$.  Every exchange requires a comparison.
\end{proof}

\subsection{The matching upper construction}

Call an array $d$-sorted when $x_i\le x_{i+d}$ whenever both positions exist.
We use the following standard closure property.

\begin{lemma}[Sorted distances persist]\label{lem:persistence}
For the usual insertion implementation of a Shellsort pass:
\begin{enumerate}[label=(\roman*)]
  \item $h$-sorting a $d$-sorted array preserves $d$-sortedness;
  \item an array that is $d$- and $e$-sorted is $(ad+be)$-sorted for all
  $a,b\in\mathbb N_0$.
\end{enumerate}
\end{lemma}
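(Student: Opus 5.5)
We prove (i) by an exchange argument on a single pass. We then derive (ii) from (i) by induction on $a+b$, so (ii) only needs the definition of $d$-sortedness.

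For (ii), note first that $d$-sortedness gives $x_i\le x_{i+d}\le x_{i+2d}\le\cdots$ whenever the indices exist, so the array is $kd$-sorted for every $k\in\mathbb N_0$; likewise it is $le$-sorted. Given $i$ and $i+ad+be$ both in range, step from $i$ to $i+ad$ and then to $i+ad+be$. The intermediate index $i+ad$ lies between the two endpoints, so it also exists. Chaining the two inequalities gives $x_i\le x_{i+ad+be}$.

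For (i), an $h$-pass acts independently on each residue class modulo $h$ and sorts it, so the result is the same whatever exchanges the insertion implementation performs. We therefore compare outputs directly rather than tracking individual exchanges. Fix positions $i$ and $i+d$. These lie in residue classes $c$ and $c'$, possibly equal, of the form $i+jh$ and $i+d+jh$. Index the chains $P=(i+jh)_j$ and $Q=(i+d+jh)_j$ over the common set of $j$ for which both positions exist. Before the pass, $x_{i+jh}\le x_{i+d+jh}$ for each such $j$. The shifted pairing is an injection from the common window of class $c$ into class $c'$ that increases values.

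The key tool is the standard comparator fact: if two sequences $u_1,\dots,u_k$ and $v_1,\dots,v_k$ satisfy $u_j\le v_j$ for all $j$, then after sorting both, the $t$-th smallest of $u$ is at most the $t$-th smallest of $v$. We cannot apply it directly, because the classes are sorted as whole classes, not just on the common window. The main obstacle is handling this boundary. Class $c$ may contain extra elements at its beginning, where $i+d+jh$ exists but $i+jh$ would be negative. Class $c'$ may contain extra elements at its end.

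The remedy is the rank-comparison form of the claim. Let $t$ be the rank of position $i$ within class $c$, counted from the left. We must show that the $t$-th smallest of class $c$ is at most the element of class $c'$ at the rank of $i+d$. Any $t$ elements of class $c$ that are mapped forward by the injection give $t$ distinct elements of class $c'$ lying at or before the corresponding positions, each at least as large. So the $t$-th smallest of $c$ is at most the $t'$-th smallest of $c'$, where $t'$ is the rank of $i+d$. This holds because the injection sends the rank-$s$ element of $c$ to rank $s+\delta$ of $c'$, for a fixed offset $\delta\ge0$ counting the extra leading elements of $c'$. The inequality then follows from the counting form: for every threshold value $y$,
\[
\#\{v\in c': v<y,\ \mathrm{rank}\le t'\}\le \#\{u\in c: u<y,\ \mathrm{rank}\le t\}+\delta .
\]
Verifying this offset bookkeeping carefully, including the case $c=c'$ when $h\mid d$, is the step we expect to need the most care. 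The case $c=c'$ is trivial, since the class ends up sorted.
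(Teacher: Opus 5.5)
Your part (ii) is fine and is the paper's argument. Your reduction in (i) is also right: after the pass, position $i$ holds the $t$-th smallest value of class $c$ and position $i+d$ holds the $t'$-th smallest of class $c'$, with $t'=t+\delta$. The paper makes the same reduction after thresholding to zero--one arrays.

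The gap is the step that actually proves the comparison. You defer it as ``the step we expect to need the most care'', and the justification you do give points the wrong way. Knowing that any $t$ elements of $c$ map forward to $t$ elements of $c'$ that are at least as large tells you that $c'$ has many \emph{large} values. By itself this does not bound the $t$-th smallest value of $c$ by the $t'$-th smallest of $c'$; you would also have to account for the trailing elements of $c$ that have no image.

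Your displayed inequality does not close this. Read with positional ranks, as your next sentence suggests, it only compares prefixes of the two classes. It does not imply the conclusion: for $\delta=t=0$ it reduces to $u_0\le v_0$, which says nothing about $\min c$ versus $\min c'$. Read with sorted ranks, it is just a restatement of what must be proved. You have also swapped the boundary picture. It is $c'$ whose first $\delta$ positions have no predecessor at distance $d$, while $c$ has its unmatched positions at the end.

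The missing step is short once you run the injection backwards and count over whole classes. Let $u_s$ and $v_k$ be the values at ranks $s$ and $k$ (counted from $0$) of classes $c$ and $c'$ before the pass. Then $d$-sortedness gives $u_{k-\delta}\le v_k$ for every $k\ge\delta$. Hence, for every threshold $y$,
\[
  \#\{k: v_k\le y\}\le\delta+\#\{s: u_s\le y\}.
\]
Each $v_k\le y$ with $k\ge\delta$ has a distinct predecessor $u_{k-\delta}\le y$, and the trailing elements of $c$ with no image can only enlarge the right side. Take $y$ to be the $t'$-th smallest value of $c'$. The left side is then at least $t'+1$, so at least $t+1$ values of $c$ are at most $y$, which is the claim.

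This is exactly the paper's proof in threshold form. After reducing to zero--one arrays, every zero of the target chain outside its first $q'-q$ positions maps back to a zero of the source chain. So a zero at output rank $q'$ forces at least $q+1$ zeros in the source chain, and hence a zero at its output rank $q$.
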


The proof is given in \ref{app:directed-details}.  It is the familiar reason
that nonnegative combinations of earlier gaps can be ignored by later passes.

\begin{proposition}[Direct upper bound for a full product grid]
\label{prop:full-parent-grid}
If $\gcd(A)=1$, then
\[
  W_n(H_A)
  \le n p_A(n)g(\langle A\rangle),
  \qquad
  C_n(H_A)
  \le n p_A(n)\bigl(1+g(\langle A\rangle)\bigr).
\]
\end{proposition}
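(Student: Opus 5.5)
The plan is to bound each pass separately and then sum over the $p_A(n)$ passes. Fix a gap $h\in H_A(n)$ and look at the moment just before the $h$-pass. Every gap $h'=ha_i$ with $ha_i<n$ belongs to the full grid and is larger than $h$, so it has already been processed. By \cref{lem:persistence}(i), later passes preserve $h'$-sortedness. Hence when the $h$-pass begins, the array is $ha_i$-sorted for every $i$ with $ha_i<n$.

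Next I apply \cref{lem:persistence}(ii) repeatedly. The array is then $hd$-sorted for every $d\in\langle A\rangle$ with $hd<n$. A subtlety arises when some $ha_i\ge n$: that generator is simply unavailable. But any $d$ whose representation uses $a_i$ then satisfies $hd\ge ha_i\ge n$, so such distances are vacuous anyway. In fact, $hd$-sortedness for $hd\ge n$ holds trivially, so we may treat the array as $ha_i$-sorted for all $i$ and conclude that it is $hd$-sorted for every $d\in\langle A\rangle$.

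Now count the exchanges made by the $h$-pass. Insertion sort on each residue chain modulo $h$ makes exactly as many exchanges as there are inversions within that chain. An inversion within a chain is a pair of positions $(j,j+hk)$ with $k\ge1$ and $x_j>x_{j+hk}$. Since the array is $hk$-sorted whenever $k\in\langle A\rangle$, only the $k\notin\langle A\rangle$ can contribute. There are $g(\langle A\rangle)$ such values of $k$, which is finite because $\gcd(A)=1$. For each such $k$ there are at most $n$ starting positions $j$. So the pass makes at most $n\,g(\langle A\rangle)$ exchanges, and summing over the $p_A(n)$ passes gives the bound on $W_n$.

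For comparisons, each insertion step makes one comparison per exchange plus at most one terminating comparison per inserted element. That adds at most $n$ comparisons per pass and yields $C_n(H_A)\le n p_A(n)\bigl(1+g(\langle A\rangle)\bigr)$.

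The main obstacle is the boundary issue in the second step: making sure that the use of $hd$-sortedness for combinations involving generators with $ha_i\ge n$ is legitimate. Convention (``whenever both positions exist'') makes large distances trivially sorted, and \cref{lem:persistence}(ii) is stated for arbitrary $d,e$, so this should go through. The only thing to check is that part (ii) holds when one of the distances is at least $n$, which I expect to be immediate.
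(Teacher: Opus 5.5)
Your proposal is correct and follows essentially the same argument as the paper: the larger grid gaps $ha_i<n$ have already been processed, persistence then makes every distance $hk$ with $k\in\langle A\rangle$ sorted, so only the at most $g(\langle A\rangle)$ missing multipliers, each with fewer than $n$ position pairs, can contribute exchanges, and each pass adds at most $n$ further comparisons. Your handling of generators with $ha_i\ge n$ through vacuous sortedness is valid; the paper avoids that case by noting that $x_i>0$ forces $a_ih\le kh<n$.
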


\begin{proof}
Consider a pass with gap $h$.  If
$k=\sum_i x_i a_i$ and $kh<n$, then every $a_i h$ with $x_i>0$ is a larger
grid gap and has already been processed.  By \cref{lem:persistence}, the array
is therefore $kh$-sorted.  Only multipliers
$k\notin\langle A\rangle$ can contribute inversions in the $h$-chains.  There
are at most $g(\langle A\rangle)$ such multipliers, each accounting for fewer
than $n$ position pairs.  This bounds the exchanges in each pass.  Summing over
the passes gives the first inequality; insertion sort uses at most one further
comparison per key and pass, giving the second.
\end{proof}

\begin{proof}[Proof of \cref{thm:balanced-grid-barrier}]
For the lower bound, put $p_n=|H_{A_n}(n)|$.  Then
\cref{lem:few-products-large-generator} gives
\[
  m_n
  \ge
  \exp\!\left(
    \frac{\log n}{4K(r!)^{1/r}p_n^{1/r}}
  \right).
\]
The smallest nonunit member of a full grid is $m_n$, so it is the next-to-last
gap.  Since $p_n\le p$, \cref{lem:penultimate-gap} gives the stated lower bound.

For the upper bound, fix distinct primes $\pi_1,\ldots,\pi_r$ and put
\[
  y_n=D\frac{\log n}{p^{1/r}},
  \qquad
  a_{i,n}=\pi_i^{\lfloor y_n/\log\pi_i\rfloor},
\]
where $D$ is a sufficiently large constant depending only on $r$.  The
assumptions on $p$ give $y_n\to\infty$ and $y_n=o(\log n)$.  These generators
are pairwise coprime and multiplicatively independent.  Moreover,
  $y_n-\log\pi_i<\log a_{i,n}\le y_n$, so $R_n\le m_n^K$ for all sufficiently
  large $n$.  Counting the feasible exponent tuples gives
\[
  |H_{A_n}(n)|
  =\Theta_r\!\left(\left(\frac{\log n}{y_n}\right)^r\right)
  =\Theta_r(p).
\]
Choosing $D$ large enough makes this count at most $p$.  Because the first two
generators are coprime, Sylvester's formula gives
\[
  g(\langle A_n\rangle)
  \le g(\langle a_{1,n},a_{2,n}\rangle)
  =\frac{(a_{1,n}-1)(a_{2,n}-1)}2
  \le\tfrac12e^{2y_n}.
\]
Now \cref{prop:full-parent-grid} gives the comparison upper bound.  Under the
last condition in the theorem, the factor $p$ is absorbed by the exponential,
so the upper and lower bounds have the same scale.
\end{proof}

The corollary follows by setting $p=\Theta(\log n)$ in
\cref{thm:balanced-grid-barrier}; its matching condition holds because
$\log\log n=o((\log n)^{1-1/r})$.

\noindent\textit{Classical consistency checks.}
For a fixed generator set with $d$ independent product directions, the same
argument gives $\Theta((\log n)^d)$ gaps and comparison cost
$\Theta(n(\log n)^d)$.  Two directions
recover Pratt's bound; allowing the generators to grow recovers the
Incerpi--Sedgewick scale.  Statements and proofs are in
\ref{app:classical-consequences}.

\FloatBarrier

\section{Bounds for individual passes beyond product grids}
\label{sec:directed}

The product-grid proof uses a broader principle: earlier gaps make their
nonnegative combinations sorted.  Consider a later pass with gap $h_j$.  A
multiple $mh_j<n$ cannot contain an inversion if it is a nonnegative sum of
earlier gaps; otherwise it may still contain any of the $n-mh_j$ position
pairs at that distance.  We now turn this observation into bounds for an
arbitrary decreasing sequence $H=(h_1>\cdots>h_p=1)$.

For $0\le s\le p$, let
\[
  S_s=\langle h_1,\ldots,h_s\rangle_{\mathbb N_0},
  \qquad S_0=\{0\}.
\]
For pass $j$, define
\[
  \begin{aligned}
    M_j&=\left\lfloor\frac{n-1}{h_j}\right\rfloor,\\
    G_j(n)&=\{1\le m\le M_j:mh_j\notin S_{j-1}\},\\
    D_j(n)&=\sum_{m\in G_j(n)}(n-mh_j).
  \end{aligned}
\]
Thus $D_j(n)$ counts the position pairs, at multiples of $h_j$, that need not
have been sorted before pass $j$.  We also use
\[
  U_s(n)=\sum_{\substack{1\le d<n\\d\notin S_s}}(n-d),
\]
the corresponding count over all distances after the first $s$ passes.

\subsection{An upper bound from already-sorted distances}

\begin{theorem}[Upper bound from missing multiples]
\label{thm:directed-upper}
Let $T_j(\sigma;H)$ be the number of exchanges in pass $j$.  For every input,
$T_j(\sigma;H)\le D_j(n)$.  Moreover,
\[
  \boxed{
  W_n(H)\le
  \mathfrak U_n(H):=
  \min_{0\le s\le p}
  \left(\sum_{j=1}^{s}D_j(n)+U_s(n)\right),}
\]
The usual gapped-insertion implementation also satisfies
$C_n(H)\le\mathfrak U_n(H)+np$.
\end{theorem}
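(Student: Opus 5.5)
The plan has three parts: bound the exchanges in pass $j$ by inversions at multiples of $h_j$, bound the tail of the sequence by total inversions, and add $np$ for comparisons.

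\textbf{Per-pass bound.} Pass $j$ is insertion sort on each residue chain modulo $h_j$. Insertion sort makes exactly as many exchanges as the chain has inversions. Hence $T_j(\sigma;H)$ equals the number of pairs of positions $(i,i+mh_j)$ with $m\ge1$, $i+mh_j\le n$, and $x_i>x_{i+mh_j}$, evaluated on the array just before pass $j$.

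Now fix a multiple $m$ with $mh_j\in S_{j-1}$. Write $mh_j$ as a nonnegative combination of $h_1,\dots,h_{j-1}$. Each $h_s$ with $s<j$ is sorted after pass $s$, and by \cref{lem:persistence}(i) it stays sorted through every later pass. By induction on the number of summands, \cref{lem:persistence}(ii) then shows the array entering pass $j$ is $mh_j$-sorted. So such distances carry no inversions. Each remaining $m\in G_j(n)$ has exactly $n-mh_j$ position pairs. This gives $T_j(\sigma;H)\le D_j(n)$.

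\textbf{Tail bound.} Fix $s$ with $0\le s\le p$. The first $s$ passes cost at most $\sum_{j\le s}D_j(n)$ by the per-pass bound. The remaining passes $s+1,\dots,p$ need a different argument. The key fact is that each exchange in a gapped insertion pass swaps two out-of-order keys at distance $h$. This reduces the total inversion count of the whole array by at least one; it is the standard fact that swapping an inverted pair changes the inversion count by an odd positive amount in the decreasing direction.

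Therefore the total number of exchanges in passes $s+1,\dots,p$ is at most the number of inversions present after pass $s$. By the same persistence argument, every distance $d\in S_s$ is already sorted at that point. So inversions can only occur at distances $d\notin S_s$, and there are at most $n-d$ pairs at distance $d$. This bounds the tail by $U_s(n)$. Taking the minimum over $s$ gives $W_n(H)\le\mathfrak U_n(H)$. The cases $s=0$ and $s=p$ need no separate treatment: when $s=p$ we have $1\in S_p$, so $U_p(n)=0$.

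\textbf{Main obstacle.} The delicate step is the claim that a swap of an inverted pair at distance $h$ lowers the global inversion count. This needs a check of the keys lying strictly between the two positions. Each such key $y$ contributes the same or fewer inversions after the swap whether $y$ is below, between, or above the two swapped values, and the pair itself loses its inversion. I would write this out as a short lemma.

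\textbf{Comparisons.} In each pass, every key's insertion loop performs one comparison per exchange plus at most one terminating comparison. So each pass uses at most $n$ comparisons beyond its exchanges. Summing over the $p$ passes gives $C_n(H)\le\mathfrak U_n(H)+np$.
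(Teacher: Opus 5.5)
Your proposal is correct and follows essentially the same route as the paper: persistence makes every distance in $S_{j-1}$ (respectively $S_s$) sorted, so only the pairs counted by $D_j(n)$ (respectively $U_s(n)$) can be inverted, and each pass adds at most one extra comparison per key. Your explicit check that every exchange removes at least one global inversion supplies the justification for the tail bound, which the paper's one-line argument leaves implicit.
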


\begin{proof}
By \cref{lem:persistence}, before pass $j$ the array is $d$-sorted for every
$d\in S_{j-1}$.  Hence only $m\in G_j(n)$ can contribute inverted pairs at
distance $mh_j$, giving $T_j\le D_j(n)$.  After pass $s$, only distances outside
$S_s$ can remain inverted, giving $U_s(n)$; minimize over $s$.  At most one
further comparison per key and pass gives the comparison estimate.
\end{proof}

An ordinary count would ask only how many positive integers cannot be formed.
Here $G_j(n)$ keeps the pass order and discards distances that do not fit in
the array, while $D_j(n)$ gives more weight to distances that occur in more
position pairs.

\subsection{Inputs that realize many missing pairs}

The upper count has a converse: one can keep every distance in $S_{j-1}$ sorted
while reversing many pairs counted by $D_j(n)$.  Averaging over the possible
remainders gives the following result.

\begin{theorem}[Lower bounds from missing distances]
\label{thm:directed-lower}
For every decreasing gap sequence,
\[
  \begin{aligned}
    W_n(H)&\ge
      \max\left\{
        D_1(n),\
        \max_{2\le j\le p}
          \left\lceil\frac{D_j(n)}{h_{j-1}}\right\rceil
      \right\},\\
    W_n(H)&\ge
      \frac{U_s(n)}{h_s(2h_{s+1}-1)}
      \qquad (1\le s<p).
  \end{aligned}
\]
Both bounds also hold for $C_n(H)$.
\end{theorem}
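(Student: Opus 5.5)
The plan is to build adversarial inputs that are already $d$-sorted for every $d\in S_{j-1}$. Then passes $1,\ldots,j-1$ make no exchanges, and the inversions that pass $j$ must remove can be counted directly. Insertion-sort passes remove exactly the inversions inside their chains, so a pass with gap $h$ on an array already sorted at all larger earlier gaps makes at least as many exchanges as there are inverted pairs at multiples of $h$. The comparison bounds follow because every exchange uses a comparison.

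The bound $D_1(n)$ is immediate. Take the reversed array: since $S_0=\{0\}$, every pair at distance $mh_1$ with $1\le m\le M_1$ is inverted, and there are $\sum_m(n-mh_1)=D_1(n)$ of them. The first pass must remove all of them.

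For $j\ge2$ I will use zero--one inputs in the spirit of Poonen. For a base position $a$, let the ones occupy $Z_a=\{a+s:s\in S_{j-1},\ a+s<n\}$. This set is closed under adding elements of $S_{j-1}$, so the array is $d$-sorted for every $d\in S_{j-1}$, and in particular for $h_1,\ldots,h_{j-1}$. For every $m\in G_j(n)$, position $a$ holds a $1$ and position $a+mh_j$ holds a $0$, because $mh_j\notin S_{j-1}$. To capture the weight $n-mh_j$ rather than a single pair per $m$, I will place ones on a union $Z=\bigcup_{a\in B}Z_a$. Here $B$ is a set of starting points taken from one residue class, with consecutive points spaced by $h_{j-1}$; the union is still upward closed under $S_{j-1}$. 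The key observation is that every nonzero element of $S_{j-1}$ is at least $h_{j-1}$, and that if $mh_j\notin S_{j-1}$ then $mh_j-kh_{j-1}\notin S_{j-1}$ for all $k\ge0$. So a pair $(i,i+mh_j)$ with $i\in B$ stays inverted provided no earlier point of $B$ reaches $i+mh_j$ through $S_{j-1}$. I will then average over the $h_{j-1}$ possible offsets of $B$, or equivalently over which residue mod $h_{j-1}$ is activated. Each pair $(i,i+mh_j)$ with $m\in G_j(n)$ is then a counted inversion for at least one offset in $h_{j-1}$, so some offset yields at least $D_j(n)/h_{j-1}$ inversions. Taking the ceiling, and then the maximum over $j$, gives the first display.

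The second display follows the same template after $s$ passes, with $S_s$ in place of $S_{j-1}$. Now every distance $d\notin S_s$ should be realizable as an inversion that survives the first $s$ passes and must be removed later. The averaging runs over offsets modulo $h_s$. One also has to account for pass $s+1$, which may already remove some of these inversions: a distance $d\notin S_s$ can be resolved only through the chains of $h_{s+1}$, and each such pair can be charged to at most $2h_{s+1}-1$ nearby surviving configurations. Combining these two losses gives the denominator $h_s(2h_{s+1}-1)$, so some input forces at least $U_s(n)/\bigl(h_s(2h_{s+1}-1)\bigr)$ exchanges.

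I expect the averaging step to be the main obstacle. The difficulty is choosing the family of starting sets $B$ so that the union stays $S_{j-1}$-closed while the ones generated from earlier starting points do not fill in the zeros needed at $i+mh_j$. If the progression-with-offsets choice fails, the fallback is a random shift argument: take $B=\{a\}$ with $a$ uniform over a window of length $h_{j-1}$, and bound the expected number of counted pairs from below by $D_j(n)/h_{j-1}$. A single inversion per base is weaker, so it would have to be supplemented by concatenating independent windows spaced far enough apart to keep the union closed.
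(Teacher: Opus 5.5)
\textbf{There is a genuine gap.} Your $D_1(n)$ step and your reduction to counting inversions inside $h_j$-chains are correct. The input for $j\ge2$, however, does not work as stated.

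Since $h_{j-1}\in S_{j-1}$, you have $Z_{b+h_{j-1}}\subseteq Z_b$. So the union over a progression $B=\{b_0,b_0+h_{j-1},\ldots\}$ collapses to the single set $Z_{b_0}$. A pair $(i,i+mh_j)$ with $i=b_0+kh_{j-1}$ is then inverted only if $kh_{j-1}+mh_j\notin S_{j-1}$, and this typically fails for large $k$. Your observation about $mh_j-kh_{j-1}$ rules out \emph{later} base points, but the zeros are filled by \emph{earlier} ones.

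No zero--one input can repair this, and your fallback does not escape it, because any union of sets $Z_a$ is again a single threshold. Take $H=(3,2,1)$ and $j=3$. Then $S_2=\langle2,3\rangle$, $G_3(n)=\{1\}$ and $D_3(n)=n-1$. If $a$ is the first $1$ of a zero--one array sorted at distances $2$ and $3$, every position $\ge a+2$ holds a $1$. Hence the array has at most one inversion, whereas the theorem requires $\lceil(n-1)/2\rceil$ exchanges, attained by $2,1,4,3,6,5,\ldots$. More generally, if $S_{j-1}$ is cofinite with Frobenius number $F$, a single threshold carries $O(F^2)$ inversions independently of $n$. By contrast, $D_j(n)/h_{j-1}$ grows linearly in $n$ once $G_j(n)\ne\varnothing$. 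Your second display inherits the same defect, since it needs an $S_s$-sorted input with $U_s(n)/h_s$ inversions.

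\textbf{The missing idea} is to \emph{stack} the thresholds rather than take their union, so the input uses many distinct values. Keep $B$ as the full residue class $r\bmod h_{j-1}$ with least element $b_0$, and define
\[
  f(x)=\#\{b\in B:\ b\le x,\ x-b\in S_{j-1}\}.
\]
\begin{itemize}
  \item Each summand is $S_{j-1}$-monotone, so $f$ is too.
  \item For $i=b_0+kh_{j-1}$ you get $f(i)=k+1$, because all multiples of $h_{j-1}$ lie in $S_{j-1}$.
  \item For every $v>i$ with $v-i\notin S_{j-1}$ you get $f(v)\le k$. The base point $i$ itself does not reach $v$, and your own observation excludes every later base point.
\end{itemize}
Ranking positions by $(f(x),x)$ gives a permutation that is still sorted at every distance in $S_{j-1}$ and inverts all these pairs. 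Your averaging over $r$ then yields the first display. This is an explicit version of the paper's residue-reversal lemma. The paper obtains its input as a linear extension of the order $u\prec v\iff v-u\in S_s$, augmented by $v\prec u$ for each $u$ in the chosen class and each $v$ incomparable with $u$, and proves that this augmented relation is acyclic.

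For the second display, replace your charging argument with the fact the paper uses. All passes after the $s$th, not only pass $s+1$, exchange at distances $d\le h_{s+1}$. A swap at distance $d$ changes the inversion number by at most $2d-1$: only the swapped pair, and the pairs joining the two swapped positions to the $d-1$ positions between them, can change status.
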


The factor $h_s$ comes from choosing one remainder class modulo $h_s$.  When the
prefix is generated by two coprime integers, a classical zero--one input uses
all remainder classes at once and removes that factor up to a constant.

\begin{theorem}[A constant-factor converse for two generators]
\label{thm:two-generator-cutoff-lower}
Suppose, for some $1\le s<p$, that $S_s=\langle a,b\rangle$, where $a,b\ge2$
are coprime, and that $n\ge(a-1)(b-1)$.  There is a permutation sorted at every
distance in $S_s$ with at least $U_s(n)/24$ inversions.  Consequently,
\[
  W_n(H)\ge
  \frac{U_s(n)}{24(2h_{s+1}-1)}.
\]
If $s=p-1$, this simplifies to $W_n(H)\ge U_{p-1}(n)/24$.  The same
conclusions hold for $C_n(H)$.
\end{theorem}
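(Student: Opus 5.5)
We build a zero--one-style key function adapted to $\langle a,b\rangle$. Then we convert its inversion count into a bound on exchanges by the same mechanism behind the second bound of \cref{thm:directed-lower}, but with no $h_s$ loss because the input uses all remainder classes at once.

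\textbf{Step 1 (construction).} Every integer $i$ has a unique representation $i=xa+yb$ with $0\le x<b$ and $y\in\mathbb Z$. Define $\kappa(i)=y$.
\begin{itemize}[label={}]
\item If $x<b-1$, then $\kappa(i+a)=\kappa(i)$.
\item If $x=b-1$, then $i+a=(y+a)b$, so $\kappa(i+a)=\kappa(i)+a$.
\item Always $\kappa(i+b)=\kappa(i)+1$.
\end{itemize}
Hence $\kappa$ is nondecreasing under $+a$ and $+b$, and so under every $d\in S_s=\langle a,b\rangle$. The permutation $\sigma$ on positions $0,\dots,n-1$ ranks positions by $(\kappa(i),i)$ lexicographically. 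Ties share $y$ and differ by multiples of $a$, so breaking them by position keeps $\sigma$ $d$-sorted for all $d\in S_s$. The inversions of $\sigma$ are exactly the pairs $i<j$ with $\kappa(i)>\kappa(j)$.

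\textbf{Step 2 (counting inversions).} For a distance $d$ write $d=x_da+y_db$ with $0\le x_d<b$. Then $d\notin\langle a,b\rangle$ exactly when $y_d\le-1$. One checks
\[
\kappa(i+d)=\kappa(i)+y_d+a\cdot[\,x_i+x_d\ge b\,],
\]
so the pair $(i,i+d)$ is inverted whenever $x_i+x_d<b$. This holds for a fraction about $(b-x_d)/b$ of the positions $i$.

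The missing distances satisfy $d<ab$, and $n\ge(a-1)(b-1)$. So the weights $n-d$ are comparable across the relevant range, and the residues $x_i$ are close to equidistributed over windows of length $n$ (up to a constant factor). What remains is to show that missing distances with $x_d$ close to $b$ do not carry most of the weight of $U_s(n)$.

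\textbf{Step 3 (main obstacle: the bad distances).} I expect this to be the hardest step. My first attempt is a symmetrization. Run the same construction with the roles of $a$ and $b$ swapped, keying by $x$ in the representation with $0\le y<a$. A missing $d$ that is nearly wrapped in one coordinate system tends to be well inside in the other. I would then check, via the involution $d\mapsto ab-a-b-d$ on the gaps of $\langle a,b\rangle$, that for every missing $d$ at least one of the two permutations inverts a constant fraction of the pairs at distance $d$. Taking the better permutation, and accounting for the $1/2$ from the choice of permutation, the $1/2$ from the fraction threshold, and the boundary losses near the ends of the array, should give the constant $1/24$.

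\textbf{Step 4 (from inversions to exchanges).} Since $\sigma$ is $d$-sorted for every $d\in S_s$, in particular $h_1,\dots,h_s\in S_s$, the passes $1,\dots,s$ make no exchanges on $\sigma$. Every inversion must therefore be removed by passes with gaps at most $h_{s+1}$. A single exchange of keys at distance $h$ changes the inversion count by at most $2h-1$, since only keys strictly between the two positions are affected. Hence
\[
W_n(H)\ge \frac{U_s(n)}{24(2h_{s+1}-1)},
\]
which reduces to $U_{p-1}(n)/24$ when $h_p=1$. Since every exchange uses a comparison, the same bound holds for $C_n(H)$.
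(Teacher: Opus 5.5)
\textbf{Verdict: genuine gap.} Steps 1 and 4 and the algebra of Step 2 are correct, but Step 3, which carries the whole quantitative content of the theorem (the constant $1/24$), is only announced, and the per-distance claim it rests on is false.

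\textbf{What is correct.} The key $\kappa$ is nondecreasing under $+a$ and $+b$. So ranking positions by $(\kappa(i),i)$ gives a permutation that is $d$-sorted for every $d\in S_s$. For a gap $d=x_da+y_db$, the pair $(i,i+d)$ is indeed inverted when $x_i+x_d<b$. Your passage from inversions to exchanges through the factor $2h_{s+1}-1$ is exactly the paper's.

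\textbf{Why Step 3 fails.} Take the Frobenius number $F=ab-a-b=(b-1)a-b$. Then $x_F=b-1$, so your permutation inverts $(i,i+F)$ only when $x_i=0$, a fraction $1/b$ of the positions. In the swapped construction, $F=-a+(a-1)b$, and only the positions $i\equiv0\pmod a$ give inversions, a fraction $1/a$. For large coprime $a,b$, neither permutation inverts a constant fraction of the pairs at this distance. The same holds at nearby gaps with $x_d$ close to $b$ and $|y_d|$ small. So the better permutation cannot be chosen distance by distance. Also, the map $d\mapsto ab-a-b-d$ does not act on the gaps: it sends gaps to elements of $\langle a,b\rangle$.

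\textbf{The Step 2 heuristics are not justified near $n=(a-1)(b-1)$.} For $n=(a-1)(b-1)$, the distance $F$ has weight $n-F=1$ while the distance $1$ has weight $n-1$, so the weights are not comparable. In addition, the windows $[0,n-d)$ can be shorter than the period $b$ of $i\mapsto x_i$, so the fraction $(b-x_d)/b$ does not follow there.

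\textbf{The missing idea.} No per-distance bound is needed. Every gap of $\langle a,b\rangle$ is smaller than $N=(a-1)(b-1)\le n$, so $U_s(n)\le ng=nN/2$. It therefore suffices to exhibit one input, sorted at every distance in $S_s$, with at least $nN/48$ inversions in total.

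\textbf{How the paper does it.} The paper takes the zero--one word on $0\le t<N$ with $y_t=1$ if $t\in\langle a,b\rangle$ and $y_t=0$ otherwise; this is the threshold $\kappa\ge0$ of your own key. By the Brown--Shiue formula for the sum of the gaps $q_k$, its inversion count is exactly
\[
\sum_k\bigl(q_k-(k-1)\bigr)=\frac{(a^2-1)(b^2-1)}{24}.
\]
The paper repeats this word in $\lfloor n/N\rfloor$ blocks with increasing value ranges, which keeps every distance in $S_s$ sorted. This yields at least
\[
\left\lfloor \frac nN\right\rfloor\frac{(a^2-1)(b^2-1)}{24}\ \ge\ \frac{n(a+1)(b+1)}{48}\ \ge\ \frac{nN}{48}\ \ge\ \frac{U_s(n)}{24}
\]
inversions.

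\textbf{On rescuing your construction.} Your $\kappa$-permutation plausibly has enough inversions too, since the badly served gaps are only a small proportion of all gaps. Proving this would need a global count over all gaps with the boundary effects controlled. The proposal does not contain that count, and the symmetrization does not supply it.
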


The proofs of both lower bounds are given in \ref{app:directed-details}.
They use the same idea as the main theorem: make the early passes do nothing,
then force a later pass to remove many chosen inversions.

\FloatBarrier

\section{Discussion and open directions}
\label{sec:discussion}

The main theorem settles the most direct extension of Pratt's construction.
If every product is retained and no base is more than a fixed power of the
smallest, adding a third or later generator does not help; it makes the
exponent in the worst-case cost larger.  This matters because it rules out a
simple and natural route to faster Shellsort: adding more intermediate product
gaps is not enough.

The scope is deliberately precise.  The result is not a stronger universal
lower bound for Shellsort.  Existing work already gives the $p^{-1/2}$
dependence for arbitrary $p$-pass sequences
\cite{poonen1993,plaxtonsuel1997}.  Our theorem instead identifies the extra
cost forced by using every product of $r$ independent base numbers.  Including
every product makes the smallest base the next-to-last gap; the size condition
prevents one base from carrying almost the entire scale.
Whether two bases remain best for sparse product sets, very unequal bases, or
product rules that change with $n$ remains open.

For general gap sequences, the broader lesson is about timing.  An arithmetic
relation helps a later pass only if earlier gaps have already made the needed
distance sorted.  In concrete terms, the distance must be obtainable by adding
earlier gaps, without subtraction, before the pass begins.  The bounds in
\cref{sec:directed} formalize this point.  \ref{sec:signed} and
\ref{sec:profiles} give complementary lower bounds, while
\ref{sec:separation} gives sufficient conditions for a small cost.  The
examples in \ref{app:additional-consequences} show why information about all
gaps at once can overestimate the work of one particular pass.

This leads to a plain question: can every gap sequence using
$O(n\log^2 n)$ comparisons be explained by this same local mechanism?  More
precisely, before almost every pass, do short sums of earlier gaps cover the
remainder classes that the pass needs?  The sufficient condition in
\cref{cor:apery-runtime} has this form.  A proof would move the general upper
bound close to a characterization; a counterexample would reveal a genuinely
different way to design efficient gaps.

\section*{CRediT Authorship Contribution Statement}

\textbf{Ziqi Zhao:} Conceptualization, Formal analysis, Software, Writing --
original draft, Writing -- review \& editing.

\textbf{Qingjian Ni:} Software, Supervision, Validation, Writing -- review \&
editing.

\section*{Funding}

This work was supported by the National Natural Science Foundation of China
(Grant No.~12273003).

\section*{Declaration of Competing Interest}

The authors declare that they have no known competing financial interests or
personal relationships that could have appeared to influence the work
reported in this paper.

\section*{Data and Code Availability}

The analytical results in this article do not rely on experimental data.  The
computational scripts used for regression checks and quantitative calibration
are summarized in \ref{app:verification} and available in the public
repository:
\begin{center}
  \url{https://github.com/qisumi/chronological-shellsort-artifact}.
\end{center}

\appendix
\renewcommand{\thetheorem}{\Alph{section}.\arabic{theorem}}
\renewcommand{\theproposition}{\Alph{section}.\arabic{proposition}}
\renewcommand{\thelemma}{\Alph{section}.\arabic{lemma}}
\renewcommand{\thecorollary}{\Alph{section}.\arabic{corollary}}
\renewcommand{\thedefinition}{\Alph{section}.\arabic{definition}}
\renewcommand{\theremark}{\Alph{section}.\arabic{remark}}

The appendices have three roles.  \ref{app:classical-consequences} connects
the main theorem with classical product grids, and
\ref{app:directed-details} supplies the deferred proofs of the general
pass-specific bounds.  \ref{sec:signed}, \ref{sec:profiles}, and
\ref{sec:separation} develop the complementary lower and upper conditions for
arbitrary gap sequences; \ref{app:additional-consequences} gives two examples
that mark the limits of those conditions.  The final appendix records the
scope of the computational checks.

\section{Classical product-grid consequences}
\label{app:classical-consequences}

For $a>1$, let $v(a)$ be its vector of prime valuations and set
\[
  d(A)=\dim_{\mathbb Q}\operatorname{span}_{\mathbb Q}
  \{v(a):a\in A\}.
\]

\begin{proposition}[Fixed-generator product grids]
\label{prop:fixed-parent-barrier}
For every fixed finite $A\subseteq\{2,3,\ldots\}$ with $\gcd(A)=1$,
\[
  d(A)\ge2,
  \qquad
  p_A(n)=\Theta_A((\log n)^{d(A)}),
  \qquad
  C_n(H_A)=\Theta_A\bigl(n(\log n)^{d(A)}\bigr).
\]
Thus three independent fixed generators cost $\Theta(n\log^3 n)$ comparisons,
whereas two cost $\Theta(n\log^2 n)$.
\end{proposition}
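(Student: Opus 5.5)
We prove the three claims in order: first $d(A)\ge2$, then the count $p_A(n)$, then matching upper and lower bounds on $C_n(H_A)$.

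\emph{Rank at least two.} Suppose $d(A)=1$. Then all valuation vectors $v(a)$ are positive rational multiples of a single primitive integer vector $w$, because every $v(a)$ is nonnegative and nonzero. Hence every $a\in A$ is a positive power of one integer $c=\prod_q q^{w_q}\ge2$. Every generator is then divisible by $c$, so $\gcd(A)\ge c>1$, a contradiction.

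\emph{Counting products.} Put $d=d(A)$. The product map $\mathbb N_0^{|A|}\to\mathbb N$ factors through the monoid $\{v(\prod a_i^{u_i})\}$, which sits in a rank-$d$ lattice. For the lower bound, choose $d$ generators whose valuation vectors are linearly independent. These generators are multiplicatively independent, so the simplex argument of \cref{lem:few-products-large-generator}, with $m$ and $R$ now fixed constants, gives at least $\binom{t+d}{d}=\Omega_A((\log n)^d)$ distinct products below $n$. For the upper bound, fix a linear functional $\lambda$ on the span with $\lambda(v)=\log(\text{value})$; this is $\log$ itself, applied as $\sum_q v_q\log q$. Distinct products correspond to distinct lattice points $v$ in the cone generated by the $v(a)$ with $\lambda(v)<\log n$. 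After passing to a lattice basis of the saturated rank-$d$ lattice, this region is a $d$-dimensional polytope of diameter $O_A(\log n)$. It therefore contains $O_A((\log n)^d)$ lattice points. The relevant polytope is bounded because every $v(a)$ has $\lambda(v(a))\ge\log2>0$.

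\emph{Cost bounds.} The upper bound is immediate from \cref{prop:full-parent-grid}: $g(\langle A\rangle)$ is a constant because $\gcd(A)=1$, so $C_n\le n\,p_A(n)(1+g)=O_A(n(\log n)^d)$. The lower bound is the main obstacle. \cref{lem:penultimate-gap} only gives $\Omega(n)$ here, because $m$ is fixed. So the cost must be shown to come from every pass, not just the last one. The first thing to try is the additive term in the comparison count. Every pass with gap $h<n/2$ makes at least $n-h\ge n/2$ comparisons, since insertion sort compares each non-first chain element with its predecessor. The gaps $h\ge n/2$ number only $O_A((\log n)^{d-1})$, because they lie in a slab of bounded $\lambda$-width. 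Hence $C_n\ge (p_A(n)-O((\log n)^{d-1}))\,n/2=\Omega_A(n(\log n)^d)$. This settles the comparison statement, which is all that is claimed. The final sentence of the proposition then follows by taking $d=3$ and $d=2$.
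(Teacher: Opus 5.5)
Your proof is correct and follows the paper's argument step for step: the gcd contradiction when $d(A)=1$, lattice-point counting in a rank-$d(A)$ lattice for the upper count and a simplex of $d(A)$ independent generators for the lower count, \cref{prop:full-parent-grid} for the cost upper bound, and at least $n/2$ comparisons per pass with gap below $n/2$ for the cost lower bound. The only difference is minor: where you bound the gaps in $[n/2,n)$ by $O_A((\log n)^{d-1})$ with a slab estimate, the paper simply notes that the simplex products (all at most $\sqrt n$) already give $\Theta_A((\log n)^{d(A)})$ gaps below $n/2$, so the slab estimate is unnecessary.
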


\begin{proof}
The valuation vectors lie in a rank-$d(A)$ lattice inside a box of side
$O_A(\log n)$, giving the upper product count.  Products of $d(A)$ independent
members with bounded exponent sum give the matching lower count.  If $d(A)=1$,
all members of $A$ share a prime divisor, contrary to $\gcd(A)=1$.
\Cref{prop:full-parent-grid} gives the comparison upper bound.  For the lower
bound, $\Theta_A((\log n)^{d(A)})$ gaps lie below $n/2$, and each uses at least
$n/2$ comparisons even on sorted input.
\end{proof}

\begin{corollary}[Pratt's two-generator grids]
\label{cor:two-generator-grid}
For fixed coprime integers $a,b\ge2$,
\[
  |H_{a,b}(n)|
  =\frac{(\log n)^2}{2\log a\log b}+O(\log n),
  \qquad
  C_n(H_{a,b})=\Theta(n\log^2 n).
\]
Every pass uses at most $n(a-1)(b-1)/2$ exchanges.
\end{corollary}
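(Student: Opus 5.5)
The statement to prove is \cref{cor:two-generator-grid}. It has three parts: the product count, the per-pass exchange bound, and the comparison cost. I will handle them in that order, relying on \cref{prop:full-parent-grid} and \cref{lem:penultimate-gap}, or alternatively on \cref{prop:fixed-parent-barrier}.

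\textbf{The product count.} Since $a,b$ are coprime and both at least $2$, they are multiplicatively independent. Hence $|H_{a,b}(n)|$ equals the number of lattice points $(u,v)\in\mathbb N_0^2$ with $u\log a+v\log b<\log n$. Put $L=\log n$. The real triangle $\{x,y\ge0,\ x\log a+y\log b<L\}$ has area $L^2/(2\log a\log b)$.

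I would count directly, column by column. For each $u$ with $0\le u\le L/\log a$, the number of admissible $v$ is
\[
  \frac{L-u\log a}{\log b}+O(1).
\]
Summing over the $O(L)$ columns, the main terms give $L^2/(2\log a\log b)+O(L)$; this is a Riemann sum for a linear function, whose error is $O(L)$. The $O(1)$ per-column errors contribute another $O(L)$. This gives the asymptotic with an $O(\log n)$ error. The only care needed is the boundary of the triangle, and the column count handles it cleanly.

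\textbf{The per-pass exchange bound.} In the proof of \cref{prop:full-parent-grid}, each pass with gap $h$ only has inverted pairs at distances $kh$ with $k\notin\langle a,b\rangle$. Each such $k$ contributes at most $n-kh<n$ pairs. By Sylvester's formula there are exactly $g(\langle a,b\rangle)=(a-1)(b-1)/2$ such multipliers. Exchanges in a pass are bounded by the inversions among pairs in the same $h$-chain, which gives the bound $n(a-1)(b-1)/2$.

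I should check one point here. The claim that $kh$ is sorted when $k\in\langle a,b\rangle$ requires the gaps $ah$ and $bh$ (when they are $<n$) to have been processed already. This holds because the grid is full and processed in decreasing order. If $kh\ge n$, the distance has no position pairs, so there is nothing to check.

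\textbf{The comparison cost.} For the upper bound, summing the per-pass bound over the $\Theta(\log^2 n)$ passes, plus $O(n)$ comparisons per pass, gives $O(n\log^2 n)$.

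For the lower bound I would argue as in \cref{prop:fixed-parent-barrier}. A positive fraction of the grid lies below $n/2$: by the count above, $|H_{a,b}(n/2)|=\Theta(\log^2 n)$. Every pass with gap $h\le n/2$ performs at least $n-h\ge n/2$ comparisons even on sorted input, because insertion sort compares each element with its predecessor in its chain. This yields $\Omega(n\log^2 n)$.

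I expect the main obstacle to be only the bookkeeping of the error term in the lattice-point count. Everything else is a direct specialization of \cref{prop:full-parent-grid} together with Sylvester's formula.
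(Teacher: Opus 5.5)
Your proposal is correct and takes the same route as the paper. The gap count comes from lattice-point counting in $u\log a+v\log b<\log n$. The per-pass and total upper bounds come from putting Sylvester's formula into the argument of \cref{prop:full-parent-grid}. The lower bound comes from counting comparisons on sorted input over the gaps below $n/2$, as in \cref{prop:fixed-parent-barrier}. You supply details the paper leaves implicit: the column-by-column $O(\log n)$ error and the check that $ah$ and $bh$ are processed before $h$. One small correction: only the missing multipliers with $kh<n$ contribute, so ``exactly $(a-1)(b-1)/2$'' should read ``at most $(a-1)(b-1)/2$''.
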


\begin{proof}
Lattice-point counting in
$u\log a+v\log b<\log n$ gives the number of gaps.  Sylvester's formula gives
$g(\langle a,b\rangle)=(a-1)(b-1)/2$, so the proof of
\cref{prop:full-parent-grid} gives the per-pass exchange bound and the total
upper bound.  The matching comparison lower bound follows from the sorted
input as in \cref{prop:fixed-parent-barrier}.
\end{proof}

This recovers Pratt's $\Theta(n\log^2 n)$ bound.  Allowing the two generators
to grow with $n$ recovers the classical subpolynomial overhead.

\begin{corollary}[Two generators varying with $n$]
\label{cor:scaled-two-grid}
For every fixed $\varepsilon>0$ and all sufficiently large $n$, there are
coprime integers $a_n,b_n$ such that $H_{a_n,b_n}(n)$ has
$O_\varepsilon(\log n)$ gaps and
\[
  C_n(H_{a_n,b_n})
  =O_\varepsilon\!\left(n\exp(\varepsilon\sqrt{\log n})\right)
  =O_\varepsilon\!\left(n^{1+\varepsilon/\sqrt{\log n}}\right).
\]
\end{corollary}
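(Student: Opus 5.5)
We choose two coprime generators of size $\exp(c\sqrt{\log n})$ and then apply \cref{prop:full-parent-grid} together with Sylvester's formula, exactly as in the upper half of the proof of \cref{thm:balanced-grid-barrier} with $r=2$. Write $\mathcal L=\log n$ and fix a small $\delta=\delta(\varepsilon)>0$ to be chosen later. Put
\[
  y_n=\delta\sqrt{\mathcal L},\qquad
  a_n=2^{\lfloor y_n/\log 2\rfloor},\qquad
  b_n=3^{\lfloor y_n/\log 3\rfloor}.
\]
These are coprime, so $\gcd(a_n,b_n)=1$, and they are multiplicatively independent. They also satisfy $y_n-\log 3<\log a_n,\log b_n\le y_n$, so both are at least $2$ once $n$ is large.

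\textbf{Gap count.} Every gap has the form $a_n^u b_n^v<n$, so $u\log a_n+v\log b_n<\mathcal L$. Both logarithms are at least $y_n/2$ for large $n$, which forces $u,v\le 2\mathcal L/y_n$. Hence
\[
  p_{A_n}(n)\le\Bigl(1+\frac{2\mathcal L}{y_n}\Bigr)^2=O\!\Bigl(\frac{\mathcal L}{\delta^2}\Bigr)=O_\varepsilon(\log n).
\]

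\textbf{Cost bound.} Sylvester's formula gives
\[
  g(\langle a_n,b_n\rangle)=\tfrac12(a_n-1)(b_n-1)\le\tfrac12 e^{2y_n}=\tfrac12 e^{2\delta\sqrt{\mathcal L}}.
\]
\Cref{prop:full-parent-grid} then yields
\[
  C_n(H_{a_n,b_n})\le n\,p_{A_n}(n)\bigl(1+g\bigr)=O_\varepsilon\!\bigl(n\log n\cdot e^{2\delta\sqrt{\log n}}\bigr).
\]
Take $\delta=\varepsilon/4$. Since $\log n=e^{\log\log n}$ and $\log\log n=o(\sqrt{\log n})$, the factor $\log n$ is at most $e^{(\varepsilon/2)\sqrt{\log n}}$ for all large $n$. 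The total is therefore $O_\varepsilon(n\exp(\varepsilon\sqrt{\log n}))$. The second form of the bound is the same quantity rewritten, because $\exp(\varepsilon\sqrt{\log n})=n^{\varepsilon/\sqrt{\log n}}$.

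The main obstacle is the extra factor $p\approx\log n$ in \cref{prop:full-parent-grid}. It is handled above by spending half of $\varepsilon$ on it. The smaller issues are making sure $a_n,b_n\ge2$ and that the floors do not collapse the logarithms; both are settled by the bounds on $y_n$ for large $n$.
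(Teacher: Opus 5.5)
Your proof is correct and follows the paper's argument. You choose two coprime generators of size about $\exp((\varepsilon/4)\sqrt{\log n})$, count $O_\varepsilon(\log n)$ grid points, bound the number of non-representable integers by Sylvester's formula, and apply \cref{prop:full-parent-grid}, absorbing the factor $p=O(\log n)$ into the unused half of the exponent. The only difference is cosmetic: the paper takes the consecutive pair $(q,q+1)$ with $q=\lceil\exp((\varepsilon/4)\sqrt{\log n})\rceil$, whereas you reuse the prime-power construction from the proof of \cref{thm:balanced-grid-barrier} with $r=2$.
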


\begin{proof}
Put $\mathcal L=\log n$, $c=\varepsilon/4$, and
$q=\lceil\exp(c\sqrt{\mathcal L})\rceil$; take $(a_n,b_n)=(q,q+1)$.
Lattice-point counting gives $p_{q,q+1}(n)=O_c(\mathcal L)$, while
Sylvester's formula gives
$g(\langle q,q+1\rangle)=\exp((2c+o(1))\sqrt{\mathcal L})$.
The result follows from \cref{prop:full-parent-grid}.
\end{proof}

This is the $n^{1+O(1/\sqrt{\log n})}$ scale of Incerpi and Sedgewick
\cite{incerpi1985}.  For arbitrary sequences the best pass dependence has
exponent scale $p^{-1/2}$ \cite{poonen1993,plaxtonsuel1997}.  Rank two matches
that exponent, whereas every fixed $r>2$ is worse inside balanced full product
grids.

\section{A Fourier lower bound for arbitrary gap sequences}
\label{sec:signed}

This appendix gives a lower bound for an arbitrary gap sequence.  It is
independent of the product-grid classification, but it uses the same ordering
principle: a relation among late gaps cannot help during an early pass.

For a prefix of the gaps, we choose a Fourier character that changes little
along every gap in that prefix.  A suitable input then has large Fourier
discrepancy.  The early passes can remove that discrepancy only slowly, and
the remaining passes move keys only by shorter distances.  The resulting
lower bound will be combined with elementary approximation and integer
representations in \ref{sec:profiles}.

\subsection{A Fourier lower bound for every prefix}

Let $\omega_n=\exp(2\pi i/n)$ and
\[
  P_\sigma(z)=\sum_{r=1}^{n}\sigma(r)z^r.
\]
For a prefix $H_s=(h_1,\ldots,h_s)$ and $1\le k<n$, define
\[
  \delta_s(k)=\max_{1\le j\le s}\normfrac{kh_j/n},
  \qquad
  \beta_s(H;n)=\min_{1\le k<n}\delta_s(k).
\]
Thus $\delta_s(k)$ is the largest Fourier error of character $k$ on the first
$s$ gaps, and $\beta_s$ is the smallest such error among the nonconstant
characters.  A rearrangement argument supplies a worst-case input for each
character.

\begin{lemma}[A rearrangement inequality]\label{lem:rearrangement}
If $x_1,\ldots,x_n\in\mathbb R$ and $\sum_i x_i=0$, then there is a permutation
$\pi$ of $[n]$ such that
\[
  \sum_{i=1}^{n}\pi(i)x_i\ge\frac n4\sum_{i=1}^{n}|x_i|.
\]
\end{lemma}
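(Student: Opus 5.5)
Since $\sum_i x_i=0$, write $x_i=x_i^+-x_i^-$ and put
\[
  S=\sum_{i=1}^n x_i^+=\sum_{i=1}^n x_i^-=\tfrac12\sum_{i=1}^n|x_i| .
\]
I would choose $\pi$ by the rearrangement inequality: sort the indices so that $x_{i_1}\ge x_{i_2}\ge\cdots\ge x_{i_n}$ and set $\pi(i_k)=n+1-k$. Larger weights then receive larger labels, and this permutation maximizes $\sum_i\pi(i)x_i$. It is therefore enough to exhibit \emph{some} assignment of distinct labels, or a convex combination of permutations, achieving $\frac n4\sum_i|x_i|=\frac n2 S$.

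The simplest comparison is with a threshold assignment. Let $P=\{i:x_i>0\}$ and $N=\{i:x_i\le0\}$, with $|P|=k$. Give $P$ the top labels $n-k+1,\dots,n$ and $N$ the bottom labels $1,\dots,n-k$, arranged within each group in the same order as the weights. For a lower bound, though, averaging is cleaner. Average over all permutations that send $P$ onto the top $k$ labels and $N$ onto the bottom $n-k$ labels. Then every $i\in P$ receives average label $n-(k-1)/2$, and every $i\in N$ receives average label $(n-k+1)/2$. The average value of $\sum_i\pi(i)x_i$ is
\[
  \Bigl(n-\tfrac{k-1}{2}\Bigr)S-\tfrac{n-k+1}{2}\,S
  =\Bigl(n-\tfrac{k-1}{2}-\tfrac{n-k+1}{2}\Bigr)S
  =\tfrac n2\,S
  =\tfrac n4\sum_{i=1}^n|x_i| .
\]
Some permutation in the family attains at least this average, which proves the claim. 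The sorted permutation from the first paragraph does at least as well.

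The main obstacle is purely bookkeeping: checking that the averaged labels of the two groups differ by exactly $n/2$ for every $k$. The computation above shows that this is independent of $k$. The bound itself is nearly tight, as the example $x=(1,-1)$ shows, where the maximum is $1$ against $\frac n4\sum_i|x_i|=1$. The degenerate case $S=0$ is trivial. The hypothesis $\sum_i x_i=0$ is what makes the positive mass and the negative mass both equal to $S$.
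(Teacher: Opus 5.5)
Your proof is correct and follows the paper's argument: both split the indices by sign, use $\sum_i x_i=0$ to make the positive and negative masses each equal to $\tfrac12\sum_i|x_i|$, and compare the average top-$k$ label $(2n-k+1)/2$ with the average bottom label $(n-k+1)/2$, whose difference is $n/2$. The only difference is that the paper bounds the explicit sorted permutation by Chebyshev's sum inequality within each sign class, whereas you average over all block-preserving permutations and pick one at least the average. This avoids Chebyshev but does not name the permutation directly, although, as you note, the sorted one does at least as well.
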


\begin{proof}
Sort the $x_i$ increasingly and assign the labels $1,\ldots,n$ in the same order.  Let $m$ be the number of positive $x_i$ and let
$A=\sum_{x_i>0}x_i=\frac12\sum_i|x_i|$.  By the rearrangement (equivalently, Chebyshev sum) inequality, the positive contribution is at least $A$ times the average of the largest $m$ labels, namely $(2n-m+1)/2$.  The absolute value of the negative contribution is at most $A$ times the average of the smallest $n-m$ labels, namely $(n-m+1)/2$.  Their difference is $An/2=(n/4)\sum_i|x_i|$.
\end{proof}

\begin{lemma}[Uniform Fourier discrepancy]\label{lem:adversary}
For every nontrivial $n$th root of unity $z$, there exists $\sigma\in S_n$ such that
\[
  |P_\sigma(z)-P_{\mathrm{id}}(z)|\ge\frac{n^2}{4\pi}.
\]
\end{lemma}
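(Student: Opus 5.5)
Write $z=\omega_n^k$ with $1\le k<n$, and set $\theta=2\pi k/n$. Since $\sum_{r=1}^n z^r=0$, the difference $P_\sigma(z)-P_{\mathrm{id}}(z)=\sum_r(\sigma(r)-r)z^r$ equals $\sum_r\sigma(r)z^r$ minus a fixed quantity. The plan is to rotate $z^r$ by a phase and project onto the real axis, and then choose $\sigma$ with \cref{lem:rearrangement}. Concretely, for any $\varphi$ we have
\[
  |P_\sigma(z)-P_{\mathrm{id}}(z)|\ge \operatorname{Re}\bigl(e^{-i\varphi}P_\sigma(z)\bigr)-\operatorname{Re}\bigl(e^{-i\varphi}P_{\mathrm{id}}(z)\bigr).
\]
Put $x_r=\cos(r\theta-\varphi)$. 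These numbers are real and satisfy $\sum_r x_r=0$, because $z\ne1$ is a nontrivial $n$th root of unity.

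\Cref{lem:rearrangement} then supplies a permutation $\pi$ with $\sum_r \pi(r)x_r\ge \frac n4\sum_r|x_r|$. Take $\sigma=\pi$. The first term on the right is at least $\frac n4\sum_r|\cos(r\theta-\varphi)|$.

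There are two quantities to control. The first is the subtracted term $\operatorname{Re}(e^{-i\varphi}P_{\mathrm{id}}(z))$. The standard identity $\sum_{r=1}^n r z^r = n/(z-1)$ (differentiate the geometric series, or sum by parts, using $z^n=1$) gives $|P_{\mathrm{id}}(z)|=n/|z-1|$. We do not need this to be small, because $\varphi$ is free: choose $\varphi$ so that $\operatorname{Re}(e^{-i\varphi}P_{\mathrm{id}}(z))\le 0$. For instance, take $e^{i\varphi}$ orthogonal or opposite to $P_{\mathrm{id}}(z)$. The bound then reduces to $|P_\sigma-P_{\mathrm{id}}|\ge \frac n4\sum_r|\cos(r\theta-\varphi)|$, and this holds for any $\varphi$ in a half-circle of choices.

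The second quantity, and the main obstacle, is a lower bound $\sum_{r=1}^n|\cos(r\theta-\varphi)|\ge n/\pi$ that is uniform in $k$ and in the chosen $\varphi$. The approach is to write $|\cos t|$ through its Fourier expansion, $|\cos t|=\frac2\pi+\frac4\pi\sum_{j\ge1}\frac{(-1)^{j+1}}{4j^2-1}\cos(2jt)$. Summing over $r$, every harmonic with $2jk\not\equiv0\pmod n$ vanishes. Harmonics with $2jk\equiv0$ contribute $n\cos(2j\varphi)$ times the coefficient. Those terms are the obstruction, since for small $d=n/\gcd(n,k)$ the points $r\theta$ lie on only $d$ equally spaced angles.

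To handle this, I would use the freedom in $\varphi$ once more. Average $\sum_r|\cos(r\theta-\varphi)|$ over $\varphi$ in the admissible half-circle. Alternatively, pick $\varphi$ among the two admissible choices $\pm$ (the directions orthogonal to $P_{\mathrm{id}}(z)$, which both give a nonpositive subtracted term) so that the aliased harmonics have nonnegative total. Either route aims for $\sum_r|x_r|\ge 2n/\pi\cdot\frac12$, which is enough for the constant $n^2/(4\pi)$.

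If the aliasing is awkward to control, I would treat it directly. For $d$ equally spaced angles, $\frac1d\sum_{\ell=0}^{d-1}|\cos(2\pi\ell/d-\varphi)|$ has minimum over $\varphi$ at least $\frac1d\cot(\pi/(2d))\cdot$const. For $d=2$ the worst value is $0$ at a bad phase. That bad phase is avoided because, when $d=2$ ($z=-1$), $P_{\mathrm{id}}(-1)=\pm n/2$ is real, and we take $\varphi=0$ or $\pi$, which makes $|x_r|=1$.
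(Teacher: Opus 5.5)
Your argument is correct in its main line, but it handles $P_{\mathrm{id}}(z)$ differently from the paper.

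\textbf{The paper's route.} It first chooses the phase to maximize $\sum_r|\Re(e^{-i\theta}z^r)|$. This sum is at least $2n/\pi$ by averaging over the full circle, so the rearrangement lemma gives $|P_\sigma(z)|\ge n^2/(2\pi)$. It then splits into cases. If $|P_{\mathrm{id}}(z)|\le n^2/(4\pi)$, the triangle inequality finishes. Otherwise, $\mathbb E P_\Sigma(z)=0$ for a uniformly random $\Sigma$, which yields some $\sigma$ with $|P_\sigma(z)-P_{\mathrm{id}}(z)|\ge|P_{\mathrm{id}}(z)|$.

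\textbf{Your route, and what it needs.} You absorb $P_{\mathrm{id}}(z)$ into the choice of phase by making its projection nonpositive. What closes your ``main obstacle'' is one observation you never state: $f(\varphi)=\sum_r|\cos(r\theta-\varphi)|$ has period $\pi$, while replacing $\varphi$ by $\varphi+\pi$ reverses the sign of $\Re(e^{-i\varphi}P_{\mathrm{id}}(z))$. So take $\varphi$ maximizing $f$, which gives $f(\varphi)\ge 2n/\pi$, and shift it by $\pi$ if necessary. Equivalently, the average of $f$ over your admissible half-circle is exactly $2n/\pi$, because every aliased harmonic $\cos(2j\varphi)$ integrates to zero over any interval of length $\pi$.

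This gives $|P_\sigma(z)-P_{\mathrm{id}}(z)|\ge n^2/(2\pi)$, twice the stated bound, with no case split and no probabilistic step. The aliasing analysis and the estimate for $d$ equally spaced angles are therefore unnecessary, and your target $n/\pi$ gives away a factor of two. The paper's route keeps the choice of phase independent of $P_{\mathrm{id}}(z)$, at the cost of that factor of two and an extra averaging argument. Yours uses the symmetry $|\cos(t+\pi)|=|\cos t|$ to handle both tasks at once.

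\textbf{Two cautions.} First, your alternative of choosing between the two directions orthogonal to $P_{\mathrm{id}}(z)$ provides no freedom. Those directions differ by $\pi$ and so give the same value of $f$; for $z=-1$ both give $f=0$, as your last paragraph in effect discovers. Second, $\sum_{r=1}^n rz^r=nz/(z-1)$, not $n/(z-1)$. Only the modulus matters, and you do not actually use it.
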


\begin{proof}
Because $z\ne1$ and $z^n=1$, $\sum_{r=1}^{n}z^r=0$.  Averaging over
$\theta\in[0,2\pi]$ gives
\[
  \frac1{2\pi}\int_0^{2\pi}
  \sum_{r=1}^{n}\left|\Re(\ee^{-i\theta}z^r)\right|\,d\theta
  =\frac{2n}{\pi}.
\]
Choose $\theta$ for which the sum is at least $2n/\pi$, and apply
\cref{lem:rearrangement} to $x_r=\Re(\ee^{-i\theta}z^r)$.  The resulting permutation satisfies
\[
  |P_\sigma(z)|\ge\Re(\ee^{-i\theta}P_\sigma(z))\ge\frac{n^2}{2\pi}.
\]
If $|P_{\mathrm{id}}(z)|\le n^2/(4\pi)$, the triangle inequality proves the claim.  Otherwise, for a uniformly random permutation $\Sigma$,
$\mathbb E P_\Sigma(z)=0$.  Jensen's inequality gives
\[
  \max_\sigma|P_\sigma(z)-P_{\mathrm{id}}(z)|
  \ge\left|\mathbb E(P_\Sigma(z)-P_{\mathrm{id}}(z))\right|
  =|P_{\mathrm{id}}(z)|>\frac{n^2}{4\pi}.
\]
\end{proof}

\begin{theorem}[Fourier lower bound for a prefix]\label{thm:fourier-lb}
For every standard decreasing Shellsort gap sequence
$H=(h_1>\cdots>h_p=1)$ and every $1\le s<p$,
\[
  \boxed{
  W_n(H)\ge
  \frac{n^2}{8\pi\bigl(h_{s+1}+\pi n\,\beta_s(H;n)\bigr)}.}
\]
Taking the maximum over $s$ gives
\[
  W_n(H)\ge
  \max_{1\le s<p}
  \frac{n^2}{8\pi\bigl(h_{s+1}+\pi n\,\beta_s(H;n)\bigr)}.
\]
\end{theorem}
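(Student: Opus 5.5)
Fix $1\le s<p$ and let $k$ be a character achieving $\beta_s=\beta_s(H;n)$; put $z=\omega_n^k$, a nontrivial $n$th root of unity. The plan is a potential-function argument using $\Phi(\tau)=P_\tau(z)$, where $\tau$ is the current arrangement (position $r$ holds value $\tau(r)$). \Cref{lem:adversary} supplies an input $\sigma$ with $|P_\sigma(z)-P_{\mathrm{id}}(z)|\ge n^2/(4\pi)$. The sort must move $\Phi$ from $P_\sigma(z)$ to $P_{\mathrm{id}}(z)$. I will bound how much each exchange can change $\Phi$, separately for the first $s$ passes and for the remaining passes, and then count exchanges.

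\textbf{Step 1: one exchange at distance $h$.} An exchange swaps the values $u,v$ at positions $r$ and $r+h$. It changes $\Phi$ by $(v-u)(z^r-z^{r+h})$, so its effect is at most $|v-u|\,|1-z^h|$.
\begin{itemize}
\item For $h=h_j$ with $j\le s$, we have $|1-z^{h_j}|=2|\sin(\pi kh_j/n)|\le 2\pi\normfrac{kh_j/n}\le 2\pi\beta_s$. Since $|v-u|<n$, each early exchange changes $\Phi$ by at most $2\pi n\beta_s$.
\item For $j>s$ the gap satisfies $h_j\le h_{s+1}$, so the bound $|v-u|\cdot|1-z^{h}|\le 2n$ is too crude. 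A better route is to bound the change by position displacement rather than per exchange. Write $z^r-z^{r+h}$ as a telescoping sum over $h$ unit steps, each of size at most $|1-z|\le 2\pi/n$. The total change then becomes a sum of value times displacement.
\end{itemize}

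\textbf{Step 2: the total-change budget.} Let $E_1$ and $E_2$ be the numbers of exchanges in passes $\le s$ and $>s$. A cleaner bookkeeping is $|\Delta\Phi|\le |v-u|\cdot\min(2,\,2\pi h k'/n)$, but a value-weighted bound gives the target directly. In the later passes each exchange moves two keys by $h\le h_{s+1}$ positions. I would use the alternative potential $\Psi=\sum_r \tau(r)z^r$ and write its change as $\sum_{\text{keys}}\text{value}\cdot(z^{\text{new}}-z^{\text{old}})$. By Cauchy-type accounting, the later passes change $\Phi$ by at most about $2n\,h_{s+1}E_2$. This is the crude form $|v-u|\le n$ with $|1-z^h|\le 2$ replaced by a factor $h_{s+1}$. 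The target inequality only needs
\[
 \frac{n^2}{4\pi}\le |\Delta\Phi_{\text{total}}|\le 2\pi n\beta_s E_1+2n\,h_{s+1}E_2\le 2n\bigl(h_{s+1}+\pi n\beta_s\bigr)W .
\]
Rearranging this gives exactly the boxed bound $W\ge n^2/\bigl(8\pi(h_{s+1}+\pi n\beta_s)\bigr)$. So the late-pass estimate I need is that each exchange changes $\Phi$ by at most $2n$ (up to the $h_{s+1}$ slack). This holds trivially from $|v-u|<n$ and $|1-z^h|\le2$, and $2\le 2h_{s+1}$ since $h_{s+1}\ge1$.

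\textbf{Step 3: assemble.} Combining the two per-exchange bounds, every exchange changes $\Phi$ by at most $2n\max(h_{s+1},\pi n\beta_s)\le 2n(h_{s+1}+\pi n\beta_s)$. The total change needed is at least $n^2/(4\pi)$, so the number of exchanges is at least $n^2/\bigl(8\pi n(h_{s+1}+\pi n\beta_s)\bigr)$.

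This leaves a factor $n$ short of the boxed bound. The real obstacle is therefore to gain a factor $n$: I must show that the total $\sum|v-u|$ over exchanges, rather than the exchange count, is comparable to $W$.

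To close the gap I would use \cref{lem:rearrangement} more cleverly. Choose $\sigma$ so that the discrepancy comes from values spread over the whole array. Then track the quantity $\sum_r \tau(r)x_r$ with the real weights $x_r=\Re(\ee^{-i\theta}z^r)$. An exchange of values $u<v$ at distance $h$ changes this quantity by $(v-u)(x_r-x_{r+h})$.

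The key fact is that sorting a permutation of inversion number $I$ satisfies $\sum|v-u|\le$ (a constant times) $n\cdot$ the exchange count. Because adjacent-in-value reasoning fails here, I expect to need a different potential: replace the values $\tau(r)$ by the indicator of whether $\tau(r)$ exceeds a threshold $t$, and average over $t$. For a zero-one input each exchange changes the potential by at most $|1-z^h|$. Averaging over thresholds then turns $\sum_t$ into a factor $n$ on the left-hand side while each exchange is charged once per threshold it separates, i.e. $|v-u|$ times. I expect making this averaging produce exactly the constant $8\pi$ to be the delicate step.
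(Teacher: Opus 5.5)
\textbf{There is a genuine gap, in the late passes.} Your early-pass estimate is correct and is exactly what the paper uses: an exchange at distance $h_j$ with $j\le s$ moves $P(z)$ by at most $2\pi n\beta_s$. The problem in the late passes is structural, not a matter of constants. What you need is $\frac{n^2}{4\pi}\le 2\pi n\beta_s E_1+2h_{s+1}E_2$, with no factor $n$ in front of $h_{s+1}$. Your chain ending in $2n(h_{s+1}+\pi n\beta_s)W$ does not rearrange to the boxed bound, as you later notice yourself.

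No per-exchange bound on $|\Delta P(z)|$ can remove that factor. For instance, suppose a final gap-$1$ pass starts from a reversed block of length $L$. Insertion sort makes $\binom L2$ exchanges, with $\sum|v-u|=(L^3-L)/6$. For $k$ near $n/2$, each exchange moves $P(z)$ by exactly $|v-u|\,|1-z|\approx 2|v-u|$. Summing per exchange therefore gives order $L^3$, whereas the bound you need is of order $L^2$. Your threshold averaging, as described, charges each exchange $|1-z^h|$ once per separated threshold, which is $|v-u|\,|1-z^h|$ in total. That is the same lossy quantity, so it does not close the gap. Separately, the estimate $|1-z|\le 2\pi/n$ in your telescoping step holds only for $k\in\{1,n-1\}$; in general $|1-z|\le 2\pi\normfrac{k/n}$.

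\textbf{The missing idea} is to stop following $P(z)$ through the late passes and compare the intermediate arrangement with the endpoint directly. Let $\rho$ be the permutation after the first $s$ passes. Because $|z^r|=1$,
\[
  |P_\rho(z)-P_{\mathrm{id}}(z)|=\Bigl|\sum_r(\rho(r)-r)z^r\Bigr|\le F(\rho):=\sum_r|\rho(r)-r|.
\]
\begin{itemize}
\item Transposing positions $u<v$ changes $F$ by at most $2(v-u)$, by the triangle inequality.
\item Replay the late exchanges backwards from the identity. Each has distance at most $h_{s+1}$, so $F(\rho)\le 2h_{s+1}E_2$.
\item Combine this with $|P_\sigma(z)-P_\rho(z)|\le 2\pi n\beta_s E_1$ and the adversary bound. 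This gives $n^2/(4\pi)\le 2(E_1+E_2)(h_{s+1}+\pi n\beta_s)$, which is the theorem.
\end{itemize}
The large individual swings of $P(z)$ during the late passes cancel, and $F$ measures only the net distance to the identity. Your threshold decomposition can be rescued in this spirit. For each threshold, bound the net difference of the two zero--one words by the number of mismatched positions; summing over thresholds gives exactly $F(\rho)$. You would still need the displacement bound $F(\rho)\le 2h_{s+1}E_2$.
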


\begin{proof}
Choose $1\le k<n$ attaining $\beta_s$, set $z=\omega_n^k$, and choose
$\sigma$ by \cref{lem:adversary}.  Run Shellsort on $\sigma$, and let $\rho$ be the permutation after the first $s$ passes.  Let $T_{\mathrm L}$ and $T_{\mathrm S}$ be the exchange counts in the first $s$ and remaining passes; thus $T=T_{\mathrm L}+T_{\mathrm S}$.

An exchange at distance $h_j$, $j\le s$, swapping values $u$ and $v$, changes the potential by
\[
  |u-v|\,|z^{h_j}-1|
  \le n\,2\pi\normfrac{kh_j/n}
  \le2\pi n\beta_s.
\]
Therefore
\[
  |P_\sigma(z)-P_\rho(z)|\le2\pi n\beta_s T_{\mathrm L}.
\]

The remaining exchanges have distance at most $h_{s+1}$.  For a permutation
$\pi$, write
\[
  F(\pi)=\sum_{r=1}^{n}|\pi(r)-r|.
\]
If $\pi'$ is obtained from $\pi$ by transposing positions $u<v$, then the
triangle inequality gives the Lipschitz estimate
\[
  F(\pi')-F(\pi)
  \le 2(v-u).
\]
Reverse the remaining exchanges starting from the identity.  Each such
transposition has $v-u\le h_{s+1}$, and therefore
\[
  \sum_{r=1}^{n}|\rho(r)-r|\le2h_{s+1}T_{\mathrm S}.
\]
Hence
\[
  |P_\rho(z)-P_{\mathrm{id}}(z)|
  \le\sum_{r=1}^{n}|\rho(r)-r|
  \le2h_{s+1}T_{\mathrm S}.
\]
Combining the two displays,
\[
  \frac{n^2}{4\pi}
  \le |P_\sigma(z)-P_{\mathrm{id}}(z)|
  \le2T\bigl(h_{s+1}+\pi n\beta_s\bigr),
\]
which proves the theorem.
\end{proof}

\begin{remark}[Why every prefix is considered]
A quantity computed from all gaps can be made uninformative by one small gap
and can therefore miss a lower bound that is present during early passes.  The
sequence $s\mapsto\beta_s$ respects the actual order of Shellsort: the first
$s$ passes are bounded through their Fourier error, while all later passes are
bounded only through their maximum movement length $h_{s+1}$.
The definition of $\beta_s$ and \cref{thm:fourier-lb} apply for every value of
$\gcd(n,h_1,\ldots,h_s)$, including the case $\beta_s=0$.
\end{remark}

\section{What the Fourier lower bound implies}
\label{sec:profiles}

We estimate $\beta_s$ in two ways.  Simultaneous approximation gives a bound
that holds for every gap sequence.  Integer relations among successive gaps
can give a stronger bound for a particular sequence.  Both estimates yield
necessary conditions for nearly linear worst-case cost.

\subsection{Universal simultaneous approximation}

Set
\[
  N_s=\left\lfloor(n-1)^{1/s}\right\rfloor.
\]

\begin{lemma}[Simultaneous modular approximation]\label{lem:dirichlet}
For integers $a_1,\ldots,a_s$, there exists $1\le k<n$ such that
\[
  \max_{1\le j\le s}\normfrac{ka_j/n}\le\frac1{N_s}.
\]
\end{lemma}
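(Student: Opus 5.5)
This is the standard Dirichlet pigeonhole argument.  Put $N=N_s$, so that $N^s\le n-1$.  Consider the $n$ vectors
\[
  v(k)=\bigl(\{ka_1/n\},\ldots,\{ka_s/n\}\bigr)\in[0,1)^s,
  \qquad 0\le k\le n-1,
\]
where $\{\cdot\}$ denotes the fractional part.  Partition the unit cube $[0,1)^s$ into $N^s$ half-open subcubes of side $1/N$.  Since $n>n-1\ge N^s$, the pigeonhole principle gives two indices $0\le k_1<k_2\le n-1$ whose vectors lie in the same subcube.

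Put $k=k_2-k_1$, so that $1\le k\le n-1<n$.  For each $j$, the two fractional parts $\{k_2a_j/n\}$ and $\{k_1a_j/n\}$ differ by less than $1/N$.  Their difference is congruent to $ka_j/n$ modulo $1$.  Hence the distance of $ka_j/n$ to the nearest integer satisfies
\[
  \normfrac{ka_j/n}<\frac1N
\]
for every $j$, which proves the claimed bound.

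The only points needing care are the strictness of the count and the degenerate case.  The count works because there are $n$ pigeons but at most $n-1$ boxes, and the boxes are half-open, so no point lies in two of them.  If $N_s=1$, the bound is trivial, since $\normfrac{x}\le1/2\le1$ for every $x$; the only requirement is $n\ge2$, so that some $1\le k<n$ exists.  There is no real obstacle here.  The one thing to note is that $k$ must be nonzero modulo $n$, which holds because $0<k<n$.
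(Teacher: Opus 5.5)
Your proof is correct and is essentially the paper's argument: pigeonhole on the $N_s^s$ half-open subcubes of side $1/N_s$ in $[0,1)^s$, taking $k$ to be the difference of two colliding indices. The only difference is cosmetic: you use all $n$ points $0\le k\le n-1$, whereas the paper uses only the first $N_s^s+1$; since $N_s^s+1\le n$, either choice gives $1\le k<n$.
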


\begin{proof}
Partition $[0,1)^s$ into $N_s^s$ half-open cubes of side $1/N_s$.  Among the
$N_s^s+1$ points
\[
  \left(\left\{\frac{ra_1}{n}\right\},\ldots,
  \left\{\frac{ra_s}{n}\right\}\right),
  \qquad r=0,1,\ldots,N_s^s,
\]
two lie in the same cube.  Their index difference $k$ satisfies
$1\le k\le N_s^s\le n-1$, and each torus-coordinate difference is at most $1/N_s$.
\end{proof}

\begin{corollary}[Universal bound from simultaneous approximation]\label{cor:dirichlet-profile}
For every $1\le s<p$,
\[
  \beta_s(H;n)\le\frac1{N_s},
\]
and hence
\[
  \boxed{
  W_n(H)\ge
  \max_{1\le s<p}
  \frac{n^2}{8\pi\left(h_{s+1}+\pi n/N_s\right)}.}
\]
For fixed $s$ and $h_{s+1}=O(n^{1-1/s})$, this gives
$W_n(H)=\Omega(n^{1+1/s})$.
\end{corollary}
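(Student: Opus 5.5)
The first inequality follows by applying \cref{lem:dirichlet} to the first $s$ gaps. Take $a_j=h_j$ for $1\le j\le s$. The lemma supplies some $1\le k<n$ with $\max_{j\le s}\normfrac{kh_j/n}\le 1/N_s$, which says exactly that $\delta_s(k)\le 1/N_s$. Since $\beta_s(H;n)$ is the minimum of $\delta_s$ over all $1\le k<n$, we get $\beta_s(H;n)\le 1/N_s$.

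One degenerate point needs checking. We need $N_s\ge1$ so that $1/N_s$ is finite. This holds because $n\ge2$ whenever $p\ge2$ gaps exist, so $(n-1)^{1/s}\ge1$. The pigeonhole argument inside \cref{lem:dirichlet} also needs $N_s^s\le n-1$, and that is immediate from the definition of $N_s$.

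For the boxed bound, substitute $\beta_s\le 1/N_s$ into \cref{thm:fourier-lb}. The right-hand side of that theorem is decreasing in $\beta_s$, so replacing $\pi n\beta_s$ by $\pi n/N_s$ only weakens it. This gives
\[
  W_n(H)\ge\frac{n^2}{8\pi\left(h_{s+1}+\pi n/N_s\right)}
\]
for every $1\le s<p$. Taking the maximum over $s$ gives the displayed inequality.

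For the asymptotic consequence, fix $s$ and assume $h_{s+1}=O(n^{1-1/s})$. For large $n$ we have $(n-1)^{1/s}\ge 2$, hence $N_s\ge\tfrac12(n-1)^{1/s}$, and so $N_s=\Omega(n^{1/s})$. Then $\pi n/N_s=O(n^{1-1/s})$. The denominator is therefore $O(n^{1-1/s})$, and the bound becomes $\Omega(n^{1+1/s})$.

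No step is a real obstacle. The only care needed is the floor in $N_s$, namely the estimate $\lfloor x\rfloor\ge x/2$ for $x\ge1$, together with the monotonicity of the bound in $\beta_s$.
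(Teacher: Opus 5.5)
Your proposal is correct and is the paper's own argument. You apply \cref{lem:dirichlet} to $(h_1,\ldots,h_s)$ to get $\beta_s(H;n)\le 1/N_s$, then substitute into \cref{thm:fourier-lb}, whose bound is decreasing in $\beta_s$. Your extra checks ($N_s\ge1$, $N_s^s\le n-1$, and $\lfloor x\rfloor\ge x/2$ for the $\Omega(n^{1+1/s})$ consequence) only make explicit what the paper leaves implicit. One small quibble: $n\ge2$ is forced by the definition of $\beta_s$ as a minimum over $1\le k<n$, not by $p\ge2$.
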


\begin{proof}
Apply \cref{lem:dirichlet} to $(h_1,\ldots,h_s)$ and substitute the resulting bound on $\beta_s$ into \cref{thm:fourier-lb}.
\end{proof}

This corollary converts the standard $n^{-1/s}$ simultaneous-approximation
estimate directly into a Shellsort lower bound.

\subsection{Using integer relations between successive gaps}

The universal estimate treats every gap in the prefix independently.  If the
current gap has a short integer representation using earlier gaps, then a
Fourier character with small error on the earlier gaps also has small error on
the current one.  For $j\ge2$, define the minimum coefficient length
\[
  \lambda_j(H)=
  \min\left\{
    \|x\|_1:x\in\mathbb Z^{j-1},
    \sum_{i<j}x_i h_i=h_j
  \right\},
\]
with $\min\varnothing=+\infty$.

\begin{lemma}[Effect of an integer representation]\label{lem:signed-transfer}
The quantity $\lambda_j(H)$ is finite if and only if
$\gcd(h_1,\ldots,h_{j-1})$ divides $h_j$, and whenever it is finite one has
$\lambda_j(H)\ge2$.  Moreover, for every character index $k$,
\[
  \boxed{
  \normfrac{k h_j/n}
  \le \lambda_j(H)\,\delta_{j-1}(k).}
\]
\end{lemma}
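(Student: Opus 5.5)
The lemma has three parts, and I would prove them in order.

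\emph{Finiteness.} This is the integer (Bézout) description of the subgroup $\mathbb Z h_1+\cdots+\mathbb Z h_{j-1}$, which equals $g\mathbb Z$ with $g=\gcd(h_1,\ldots,h_{j-1})$. Hence $h_j$ has an integer representation $\sum_{i<j}x_ih_i=h_j$ exactly when $g\mid h_j$. When this holds, the feasible set of $x\in\mathbb Z^{j-1}$ is nonempty and $\|x\|_1$ takes values in $\mathbb N_0$, so the minimum $\lambda_j(H)$ is attained and finite.

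\emph{The bound $\lambda_j(H)\ge2$.} Suppose $x$ is feasible. If $x=0$, then $h_j=0$, which is impossible. If $\|x\|_1=1$, then $x=\pm e_i$ for some $i<j$, so $h_j=\pm h_i$. The case $h_j=-h_i$ contradicts positivity of the gaps. The case $h_j=h_i$ contradicts strict decrease, since $h_i>h_j$ for $i<j$. Hence every feasible $x$ has $\|x\|_1\ge2$. This step uses only that the sequence is strictly decreasing and positive.

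\emph{The torus estimate.} Write $\|\cdot\|_{\mathbb T}$ for distance to the nearest integer. It is a seminorm on $\mathbb R/\mathbb Z$: it satisfies $\|a+b\|_{\mathbb T}\le\|a\|_{\mathbb T}+\|b\|_{\mathbb T}$ and $\|-a\|_{\mathbb T}=\|a\|_{\mathbb T}$. Iterating these gives $\|ca\|_{\mathbb T}\le|c|\,\|a\|_{\mathbb T}$ for every integer $c$. Take a minimizing $x$ with $\|x\|_1=\lambda_j(H)$. Then
\[
  \frac{kh_j}{n}=\sum_{i<j}x_i\frac{kh_i}{n}
\]
holds exactly in $\mathbb R$, and therefore
\[
  \normfrac{kh_j/n}\le\sum_{i<j}|x_i|\,\normfrac{kh_i/n}\le\|x\|_1\,\delta_{j-1}(k).
\]
The last step uses $\normfrac{kh_i/n}\le\delta_{j-1}(k)$ for each $i\le j-1$. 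When $\lambda_j(H)=+\infty$, the bound is trivial provided $\delta_{j-1}(k)>0$. If instead $\delta_{j-1}(k)=0$, the right side is $\infty\cdot0$, and I would read the boxed inequality as asserted only when $\lambda_j$ is finite, or adopt the convention $\infty\cdot0=\infty$.

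None of these steps is a real obstacle. The only point needing care is the degenerate infinite case just described and stating that convention explicitly.
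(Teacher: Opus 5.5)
Your proof is correct and follows the paper's argument: B\'ezout for finiteness, positivity and strict decrease to exclude $\ell_1$ length zero or one, and symmetry plus subadditivity of the torus norm applied to a minimizing vector. Your remark about the case $\lambda_j(H)=+\infty$ with $\delta_{j-1}(k)=0$ is a fair point that the paper's proof passes over, since that proof only treats a minimizing vector; either the convention $\infty\cdot0=\infty$ or restricting the inequality to finite $\lambda_j(H)$ settles it.
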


\begin{proof}
The finiteness criterion is B\'ezout's identity.  Strict decrease and
positivity of the gaps exclude integer representations of $h_j$ having
$\ell_1$ length zero or one.  For a minimizing vector $x$, symmetry and
subadditivity of the torus norm give
\[
  \normfrac{k h_j/n}
  =\normfrac{\sum_{i<j}x_i k h_i/n}
  \le\sum_{i<j}|x_i|\normfrac{k h_i/n}
  \le\lambda_j(H)\,\delta_{j-1}(k).
\]
\end{proof}

Choosing a character attaining $\beta_{j-1}$ in
\cref{lem:signed-transfer} gives the propagation step
\[
  \beta_j(H;n)\le\lambda_j(H)\,\beta_{j-1}(H;n).
\]

\begin{proposition}[Combining simultaneous approximation and integer relations]
\label{prop:hybrid-profile}
Let $2\le j_0\le s<p$ and suppose
$\lambda_j(H)<\infty$ for $j_0<j\le s$.  Put
\[
  \Pi_{j_0,s}(H)=\prod_{j=j_0+1}^{s}\lambda_j(H),
  \qquad
  \mathcal B_{j_0,s}(H;n)
  =\min\left\{\frac12,\frac{\Pi_{j_0,s}(H)}{N_{j_0}}\right\},
\]
where an empty product is one.  Then
\[
  \beta_s(H;n)\le\mathcal B_{j_0,s}(H;n)
\]
and hence
\[
  \boxed{
  W_n(H)\ge
  \frac{n^2}{8\pi\left(
    h_{s+1}+\pi n\,\mathcal B_{j_0,s}(H;n)
  \right)}.}
\]
\end{proposition}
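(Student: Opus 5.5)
The plan is to bound $\beta_s$ by a single explicit character index and then substitute into \cref{thm:fourier-lb}.

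First choose the character. Apply \cref{lem:dirichlet} to the prefix $(h_1,\ldots,h_{j_0})$. This yields $1\le k<n$ with $\delta_{j_0}(k)\le 1/N_{j_0}$. Keep this $k$ fixed for all later steps; we will not re-minimize at each step. The reason is that $\beta_s$ is a minimum over $k$, so bounding $\delta_s(k)$ for one $k$ suffices.

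Second, propagate this $k$ forward. For $j_0<j\le s$, \cref{lem:signed-transfer} gives $\normfrac{kh_j/n}\le\lambda_j(H)\,\delta_{j-1}(k)$. Since $\lambda_j\ge2\ge1$, this yields
\[
\delta_j(k)=\max\bigl\{\delta_{j-1}(k),\normfrac{kh_j/n}\bigr\}\le\lambda_j(H)\,\delta_{j-1}(k).
\]
Induction on $j$ then gives $\delta_s(k)\le\Pi_{j_0,s}(H)\,\delta_{j_0}(k)\le\Pi_{j_0,s}(H)/N_{j_0}$. When $s=j_0$ the product is empty, and the bound is just the Dirichlet one. The hypothesis $\lambda_j<\infty$ is what makes each step meaningful, since each factor is then a finite number.

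Third, handle the cap $\tfrac12$ in $\mathcal B_{j_0,s}$. It is automatic: the torus norm never exceeds $\tfrac12$, so $\delta_s(k)\le\tfrac12$ for every $k$. Combining the two bounds gives
\[
\beta_s(H;n)\le\delta_s(k)\le\min\bigl\{\tfrac12,\Pi_{j_0,s}(H)/N_{j_0}\bigr\}=\mathcal B_{j_0,s}(H;n).
\]

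Finally, \cref{thm:fourier-lb} applies for this $s<p$. Its right-hand side is decreasing in $\beta_s$, so replacing $\beta_s$ by the upper bound $\mathcal B_{j_0,s}$ gives the boxed inequality.

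No step here is a real obstacle. The one point needing care is the ordering argument in the second step: we must use a single fixed $k$ throughout rather than the minimizer of $\beta_{j-1}$ at each stage. A chain of minimizers would not compose, because the minimizing $k$ could change from one $j$ to the next.
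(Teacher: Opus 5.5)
Your argument is correct and follows the paper's proof: Dirichlet approximation on the prefix $(h_1,\ldots,h_{j_0})$, repeated use of \cref{lem:signed-transfer} for $j_0<j\le s$, the trivial cap $\tfrac12$, and substitution into \cref{thm:fourier-lb}. The one difference is that the paper iterates the scalar inequality $\beta_j\le\lambda_j\beta_{j-1}$, choosing a fresh minimizing character at each step, and this does compose because it is a chain of inequalities between numbers, so a changing minimizer is harmless; your fixed-$k$ version is equally valid, but your closing claim that the minimizer route fails is mistaken.
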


\begin{proof}
By \cref{cor:dirichlet-profile},
$\beta_{j_0}\le1/N_{j_0}$.  Iterating the inequality in
\cref{lem:signed-transfer} gives
$\beta_s\le\Pi_{j_0,s}/N_{j_0}$, while the definition of the torus norm
always gives $\beta_s\le1/2$.  Substitute the resulting upper bound on
$\beta_s$ into \cref{thm:fourier-lb}.
\end{proof}

Thus simultaneous approximation need only be applied to a short prefix; the
remaining Fourier errors can be bounded using integer relations.  In
particular, if
$\lambda_j\le\Lambda$ along the chain, then
$\beta_s\le\min\{1/2,\Lambda^{s-j_0}/N_{j_0}\}$.  For the Pratt grid,
whenever $3h_j<n$ the earlier gaps contain both $2h_j$ and $3h_j$, so
$h_j=3h_j-2h_j$ and $\lambda_j=2$.

\subsection{How many large gaps are necessary}

The universal Fourier estimate gives a necessary condition on the large gaps
of every Shellsort family with nearly linear cost.

\begin{theorem}[Nearly linear cost requires many large gaps]\label{thm:near-linear-density}
Let $\{H^{(n)}\}$ be a family of decreasing gap sequences, with
\[
  H^{(n)}=(h_{n,1}>\cdots>h_{n,p_n}=1).
\]
Suppose that for fixed $c>0$,
\[
  W_n(H^{(n)})=O(n(\log n)^c).
\]
For every fixed $\varepsilon>0$, set
\[
  s_n=\left\lfloor
  \frac{\log n}{(c+\varepsilon)\log\log n}
  \right\rfloor.
\]
Then, for all sufficiently large $n$,
\[
  p_n\ge s_n+1
  \qquad\text{and}\qquad
  h_{n,s_n+1}=\Omega\!\left(\frac{n}{(\log n)^c}\right).
\]
Thus any such family must contain
$\Omega(\log n/\log\log n)$ gaps of order at least $n/(\log n)^c$.
\end{theorem}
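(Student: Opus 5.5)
The plan is to apply \cref{cor:dirichlet-profile} with $s=s_n$ and compare it against the assumed upper bound $W_n(H^{(n)})\le Cn(\log n)^c$.

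The first step is to estimate $N_{s_n}=\lfloor (n-1)^{1/s_n}\rfloor$. Since $s_n\le \log n/((c+\varepsilon)\log\log n)$,
\[
  (n-1)^{1/s_n}\ge \exp\!\left(\frac{\log(n-1)}{\log n}(c+\varepsilon)\log\log n\right)=(\log n)^{c+\varepsilon-o(1)} .
\]
Hence $N_{s_n}\ge(\log n)^{c+\varepsilon/2}$ for large $n$, and
\[
  \frac{\pi n}{N_{s_n}}\le \pi n(\log n)^{-c-\varepsilon/2}.
\]

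The second step establishes $p_n\ge s_n+1$, by contradiction. Suppose $p_n\le s_n$, and let $s=p_n-1\le s_n-1$, which is a legal index when $p_n\ge2$. Then $h_{s+1}=1$, and $N_s\ge N_{s_n}$ because $N_s$ is nonincreasing in $s$. \Cref{cor:dirichlet-profile} therefore gives
\[
  W_n\ge \frac{n^2}{8\pi(1+\pi n/N_s)}=\Omega\bigl(n(\log n)^{c+\varepsilon/2}\bigr),
\]
which contradicts $W_n=O(n(\log n)^c)$. The degenerate case $p_n=1$ needs separate treatment. There $H=(1)$ is plain insertion sort, with $W_n=\binom n2$, which again contradicts the assumption for large $n$. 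This step also needs $s_n\ge1$, which holds for large $n$.

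The third step bounds $h_{n,s_n+1}$. Now $s=s_n<p_n$ is a legal index in \cref{cor:dirichlet-profile}, which yields
\[
  Cn(\log n)^c\ge \frac{n^2}{8\pi\bigl(h_{n,s_n+1}+\pi n(\log n)^{-c-\varepsilon/2}\bigr)}.
\]
Rearranging gives
\[
  h_{n,s_n+1}\ge \frac{n}{8\pi C(\log n)^c}-\pi n(\log n)^{-c-\varepsilon/2}.
\]
The subtracted term has strictly smaller order, so $h_{n,s_n+1}=\Omega(n/(\log n)^c)$. Because the gaps decrease, $h_{n,1},\dots,h_{n,s_n+1}$ are $s_n+1=\Omega(\log n/\log\log n)$ gaps of order at least $n/(\log n)^c$, which is the final sentence of the theorem.

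The main obstacle is getting the exponent estimate for $N_{s_n}$ right. The floor in $s_n$ and the replacement of $\log n$ by $\log(n-1)$ cost only $o(1)$ in the exponent, and the extra $\varepsilon$ is there to absorb that loss. The other point needing care is confirming that the monotonicity of $N_s$ in $s$ applies in the contradiction step.
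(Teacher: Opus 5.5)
Your proposal is correct and follows the paper's proof essentially step for step. It rules out $p_n\le s_n$ by applying \cref{cor:dirichlet-profile} with $s=p_n-1$ and $h_{p_n}=1$, treats $p_n=1$ separately as plain insertion sort, uses $N_{s_n}\ge(\log n)^{c+\varepsilon/2}$, and then rearranges the bound at $s=s_n$ (one small wording point: the floor in $s_n$ only increases $N_{s_n}$, so the genuine $o(1)$ losses come from the floor in $N_s$ and from using $\log(n-1)$ instead of $\log n$).
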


\begin{proof}
Let $A$ be such that $W_n(H^{(n)})\le A n(\log n)^c$ for all large $n$.  First suppose
$p_n\le s_n$.  If $p_n=1$, Shellsort is insertion sort and has quadratic worst case, a contradiction.  Otherwise apply \cref{cor:dirichlet-profile} with $s=p_n-1$ and $h_{n,p_n}=1$.  Since $p_n-1<s_n$, its $N_s$ is at least $N_{s_n}$, while
\[
  N_{s_n}=(\log n)^{c+\varepsilon+o(1)}.
\]
In particular, $N_{s_n}\ge(\log n)^{c+\varepsilon/2}$ for all sufficiently
large $n$, so the $o(1)$ term cannot consume the fixed $\varepsilon$ margin.
The resulting denominator is
$1+O(n/(\log n)^{c+\varepsilon+o(1)})=o(n/(\log n)^c)$, contradicting the assumed upper bound.  Hence $p_n\ge s_n+1$.

Now apply \cref{cor:dirichlet-profile} with $s=s_n$:
\[
  A n(\log n)^c
  \ge \frac{n^2}{8\pi(h_{n,s_n+1}+\pi n/N_{s_n})}.
\]
Rearranging,
\[
  h_{n,s_n+1}+\frac{\pi n}{N_{s_n}}
  \ge\frac{n}{8\pi A(\log n)^c}.
\]
The second term is $o(n/(\log n)^c)$, leaving the claimed lower bound on
$h_{n,s_n+1}$.
\end{proof}

\begin{remark}[Extension using integer relations]
The same argument applies to substantially longer prefixes when successive
gaps have short integer representations.  More precisely, suppose
$W_n(H^{(n)})\le A n(\log n)^c$ for all sufficiently large $n$.  Fix
$j_0\ge2$, $\Lambda\ge2$, and $\eta>0$, and choose
$j_0\le s_n<p_n$ so that
\[
  \lambda_j(H^{(n)})\le\Lambda
  \quad (j_0<j\le s_n),
  \qquad
  s_n-j_0\le
  \frac{\log N_{j_0}-(c+\eta)\log\log n}{\log\Lambda}.
\]
Then \cref{prop:hybrid-profile} gives
\[
  \beta_{s_n}(H^{(n)};n)
  \le \frac{\Lambda^{s_n-j_0}}{N_{j_0}}
  \le \frac1{(\log n)^{c+\eta}}.
\]
Combining this estimate with the assumed upper bound and rearranging the
lower bound in \cref{prop:hybrid-profile} yields
\[
  h_{n,s_n+1}
  +O\!\left(\frac{n}{(\log n)^{c+\eta}}\right)
  \ge \frac{n}{8\pi A(\log n)^c},
\]
and hence
\[
  h_{n,s_n+1}=\Omega\!\left(\frac{n}{(\log n)^c}\right).
\]
For fixed $j_0$ and $\Lambda$, the permitted number of successive
representations is
\[
  \frac{\log n}{j_0\log\Lambda}-O(\log\log n)=\Theta(\log n).
\]
\end{remark}

Small $\beta_s$, slow reachability by integer combinations, and short
representations of successive gaps can each force a large exchange count.
The next section instead studies which distances have already been sorted by
nonnegative combinations of earlier gaps.

\section{Proofs and refinements for the pass-specific bounds}
\label{app:directed-details}

\subsection{Why sorted distances persist}

\begin{proof}[Proof of \cref{lem:persistence}]
For part (i), threshold the keys at an arbitrary real value.  Thresholding
commutes with sorting each $h$-chain, so it suffices to consider a zero--one
array.  Index positions from zero.  Suppose
$i=r+qh$ and $i+d=r'+q'h$, and put $c=q'-q\ge0$.  In the $h$-chain containing
$i+d$, at most the first $c$ positions have no predecessor at distance $d$ in
the chain containing $i$.  Because the input is $d$-sorted, every zero outside
those first $c$ positions maps backward to a zero.  If the $h$-sorted output
has a zero at rank $q'$ in the target chain, that chain originally contained
at least $q'+1$ zeros.  The source chain therefore contained at least
$(q'+1)-c=q+1$ zeros, so its output has a zero at rank $q$.
This proves preservation for every threshold and hence for arbitrary distinct
keys.

For part (ii), $x_i\le x_{i+d}\le x_{i+d+e}$ proves
$(d+e)$-sortedness; induction gives every nonnegative combination.  Applying
part (i) after each pass completes the proof.
\end{proof}

\subsection{Proof of the general lower bounds}

\begin{lemma}[Forcing inversions from one residue class]
\label{lem:residue-reversal}
Fix $1\le s<p$ and a residue $r\in\{0,\ldots,h_s-1\}$.  There is a permutation
$\sigma$ that is $d$-sorted for every $d\in S_s$ and has the following
property: whenever
\[
  0\le u<v<n,
  \qquad u\equiv r\pmod{h_s},
  \qquad v-u\notin S_s,
\]
the pair $(u,v)$ is an inversion of $\sigma$.
\end{lemma}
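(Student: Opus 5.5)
The plan is to build $\sigma$ as a linear extension of the partial order that $S_s$ induces on positions, choosing the extension so that each position in the residue class $r$ comes as late as that order allows. For positions $0\le x,y<n$, write $x\preceq y$ when $y-x\in S_s$. Since $S_s$ is an additive monoid of nonnegative integers, $\preceq$ is reflexive and transitive. It is antisymmetric because $y-x$ and $x-y$ cannot both be nonnegative unless they vanish. An array is $d$-sorted for every $d\in S_s$ exactly when its values are order-preserving for $\preceq$, with $d=0$ trivial. So I need a linear extension of $\preceq$ in which every pair $(u,v)$ from the statement is inverted.

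For a position $x$, let $U(x)=\{y<n:y-x\in S_s\}$ be its up-set. The key observation is that these up-sets are nested along the residue class. List the positions $u\equiv r\pmod{h_s}$ as $u_0<u_1<\cdots<u_t$. Then $u_{i}-u_{i-1}$ is a positive multiple of $h_s$, which lies in $S_s$. By transitivity this gives $U(u_t)\subseteq\cdots\subseteq U(u_0)$. I would then cut the positions into layers
\[
  B_0=[0,n)\setminus U(u_0),\qquad
  B_i=U(u_{i-1})\setminus U(u_i)\ (1\le i\le t),\qquad
  B_{t+1}=U(u_t).
\]
Define $\sigma$ by giving all of $B_0$ the smallest values, then $B_1$, and so on. Inside each layer, values increase with position. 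If the residue class is empty, there is one layer and $\sigma$ is the identity.

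Next I check the two required properties. For $d$-sortedness, suppose $y-x\in S_s$ with $x<y$. Each $U(u_i)$ is an up-set, so $x\in U(u_i)$ forces $y\in U(u_i)$. Hence the layer index of $y$ is at least that of $x$. If the two indices are equal, the positional order inside the layer gives $\sigma(x)<\sigma(y)$.

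For the forced inversions, take $u=u_{i-1}$ and $v>u$ with $v-u\notin S_s$. Then $u\in U(u_{i-1})$ but $u\notin U(u_i)$, because $u<u_i$. So $u$ lies in $B_i$ (or in $B_{t+1}$ when $i-1=t$). On the other hand $v\notin U(u_{i-1})$, which places $v$ in some earlier layer $B_0,\ldots,B_{i-1}$. Therefore $\sigma(v)<\sigma(u)$ while $v>u$, so $(u,v)$ is an inversion.

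The only delicate point is the nesting $U(u_i)\subseteq U(u_{i-1})$. It uses that multiples of $h_s$ lie in $S_s$, and this is exactly why the lemma is restricted to a single residue class modulo $h_s$. Everything else is a direct check.
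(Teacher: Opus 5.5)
Your proof is correct, and it differs from the paper's in the key step. Both arguments start from the partial order $x\preceq y \iff y-x\in S_s$ and use the same fact: positions in one residue class modulo $h_s$ form a chain. The paper then argues non-constructively. It adds a constraint $v\prec u$ for every chain element $u$ and every position $v$ incomparable with $u$. It proves the enlarged relation has no directed cycle, because between consecutive added edges the chain elements must strictly increase. It then takes an arbitrary topological order.

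You instead write down one explicit linear extension. The layer of a position $x$ is the number of chain elements $u_i$ with $x\in U(u_i)$, which is well defined because the up-sets are nested. This replaces the cycle argument by two direct checks. Your $\sigma$ in fact satisfies all of the paper's added constraints. Every $v$ incomparable with $u_{i-1}$, whether $v<u_{i-1}$ or $v>u_{i-1}$, has $v\notin U(u_{i-1})$ and so lies in a layer below $i$. Hence your permutation is one of the orders the paper's argument allows.

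The paper's version shows that any topological order of the enlarged constraints works. Yours gives a concrete formula for $\sigma$ and a self-contained verification, which would also help if one wanted to count its inversions exactly.
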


\begin{proof}
Put a precedence constraint $u\prec v$ whenever $u<v$ and $v-u\in S_s$.
The positions congruent to $r$ modulo $h_s$ are totally ordered by these
constraints because every positive multiple of $h_s$ belongs to $S_s$.  Call
this set of positions $C$.

For every $u\in C$ and every position $v$ incomparable with $u$, add the
constraint $v\prec u$.  The enlarged constraints have no directed cycle.
Indeed, suppose a cycle contains successive added edges $v_i\to u_i$, with
original constraints leading from $u_i$ to $v_{i+1}$.  Both $u_i$ and
$u_{i+1}$ lie in the totally ordered set $C$.  If
$u_{i+1}\preceq u_i$, then
$u_{i+1}\preceq u_i\preceq v_{i+1}$, contradicting the fact that
$u_{i+1}$ and $v_{i+1}$ are incomparable.  Thus $u_i\prec u_{i+1}$ for every
pair of successive added edges, which is impossible around a cycle.

List the positions in any order respecting all constraints and assign the
values $1,\ldots,n$ in that order.  The original constraints make the
permutation $d$-sorted for every $d\in S_s$.  For every pair in the statement,
the added constraint puts $v$ before $u$, so $\sigma(u)>\sigma(v)$.
\end{proof}

\begin{proof}[Proof of \cref{thm:directed-lower}]
For $j=1$, reverse the values inside every $h_1$-chain.  The first pass then
performs one exchange for every pair in such a chain, exactly $D_1(n)$ in
total.

Fix $2\le j\le p$.  The pairs counted by $D_j(n)$ have the form
\[
  (u,u+mh_j),
  \qquad mh_j\notin S_{j-1}.
\]
Group them by the residue of $u$ modulo $h_{j-1}$.  Some residue contains at
least $\lceil D_j(n)/h_{j-1}\rceil$ pairs.  Apply
\cref{lem:residue-reversal} with $s=j-1$ and this residue.  The resulting input
is already sorted at every earlier gap, so the first $j-1$ passes make no
exchanges.  Every selected pair is inverted and lies in one $h_j$-chain.
Insertion sort makes one exchange per inversion within those chains, proving
the first bound.

For the second bound, group all $U_s(n)$ pairs by the residue of their earlier
position modulo $h_s$.  By \cref{lem:residue-reversal}, one input is sorted at
every distance in $S_s$ and has at least $U_s(n)/h_s$ inversions.  The first
$s$ passes do nothing.  Every later exchange has distance at most $h_{s+1}$,
and an exchange at distance $d$ can decrease the global inversion number by at
most $2d-1$.  At least
\[
  \frac{U_s(n)/h_s}{2h_{s+1}-1}
\]
later exchanges are therefore necessary.  Every exchange uses a comparison.
\end{proof}

\subsection{Proof of the two-generator constant-factor bound}

\begin{proof}[Proof of \cref{thm:two-generator-cutoff-lower}]
Put
\[
  N=(a-1)(b-1),\qquad S=\langle a,b\rangle.
\]
The positive integers outside $S$ are all below $N$ and number $g=N/2$.
On the positions $0,\ldots,N-1$, form the zero--one word
\[
  y_t=\begin{cases}
    1,&t\in S,\\
    0,&t\notin S.
  \end{cases}
\]
This word is $d$-sorted for every $d\in S$: if $y_t=1$, then
$t+d\in S$ and hence $y_{t+d}=1$.  This is the two-generator instance of
the zero--one construction used in classical Shellsort lower bounds
\cite{poonen1993}.

Let the gaps of $S$ be $q_1<\cdots<q_g$.  Before $q_k$ there are
$q_k-(k-1)$ elements of $S$, including zero.  Hence the number $I$ of
inversions in the word is
\[
  I
  =\sum_{k=1}^{g}\bigl(q_k-(k-1)\bigr)
  =\sum_{k=1}^{g}q_k-\frac{g(g-1)}2.
\]
The classical formulas for the number and sum of these gaps give
\cite{brownshiue1993}
\[
  g=\frac N2,\qquad
  \sum_{k=1}^{g}q_k
  =\frac{N(2ab-a-b-1)}{12}.
\]
Substitution yields the exact count
\[
  I=\frac{(a^2-1)(b^2-1)}{24}
   =\frac{N(a+1)(b+1)}{24}.
\]

Repeat the word in consecutive blocks of length $N$.  More explicitly, for
$0\le t<n$ set
\[
  z_t=2\left\lfloor\frac tN\right\rfloor+y_{t\bmod N},
\]
using the corresponding prefix of the word in the final partial block.
All values in a later block exceed all values in an earlier block.  Thus
$(z_t)$ remains sorted at every distance in $S$: within a block this follows
from the zero--one sequence, and across blocks it follows from the increasing
value ranges.  Replace the $z_t$ by their ranks, breaking ties from left to
right.  This gives a permutation with the same sorting property and preserves
every strict inversion.

Each complete block contributes $I$ inversions.  Since $n\ge N$,
\[
  \operatorname{Inv}(\sigma)
  \ge\left\lfloor\frac nN\right\rfloor I
  \ge\frac{n(a+1)(b+1)}{48}
  \ge\frac{nN}{48}.
\]
All gaps of $S$ are below $n$, so
\[
  U_s(n)
  =\sum_{d\notin S}(n-d)
  \le ng=\frac{nN}{2}.
\]
Therefore $\operatorname{Inv}(\sigma)\ge U_s(n)/24$.  The first $s$ passes do
nothing because the input is sorted at every distance in $S_s$.  As in the proof of
\cref{thm:directed-lower}, every later exchange removes at most
$2h_{s+1}-1$ inversions.  When $s=p-1$, the only remaining pass has gap one
and removes exactly one inversion per exchange.
\end{proof}

\subsection{Further refinements}

Let $g_j^{(n)}=|G_j(n)|$, the number of missing multipliers in the range
$1\le m\le M_j$.  The weights in $D_j(n)$ give a sharper estimate than the
unweighted bound $D_j(n)\le n g_j^{(n)}$.

\begin{proposition}[Bounds using only the number of missing multipliers]\label{prop:truncated-genus-refinement}
Let $h=h_j$, $M=\lfloor(n-1)/h\rfloor$, $g=g_j^{(n)}$, and
$\rho=n-Mh\in\{1,\ldots,h\}$.  Then
\[
  \boxed{
  \rho g+\frac{hg(g-1)}2
  \le D_j(n)
  \le ng-\frac{hg(g+1)}2.}
\]
If $g>0$, equality in the lower bound holds if and only if
$G_j(n)=\{M-g+1,\ldots,M\}$, while equality in the upper bound holds if and
only if $G_j(n)=\{1,\ldots,g\}$.  When $g=0$, both bounds are equalities.
Moreover,
\[
  g_j^{(n)}
  \le
  \frac{1+\sqrt{1+8D_j(n)/h_j}}{2}.
\]
\end{proposition}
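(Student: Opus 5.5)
The statement is purely combinatorial once we write $D_j(n)=\sum_{m\in G}(n-mh)$ with $G=G_j(n)\subseteq\{1,\ldots,M\}$ and $|G|=g$. I would therefore treat it as an optimization over $g$-element subsets of $\{1,\ldots,M\}$ of the linear function $m\mapsto n-mh$. This function is strictly decreasing in $m$ because $h\ge1$, and every term is positive: $n-mh\ge n-Mh=\rho\ge1$. The sum $\sum_{m\in G}m$ over a $g$-subset is minimized exactly by $\{1,\ldots,g\}$, with value $g(g+1)/2$. It is maximized exactly by $\{M-g+1,\ldots,M\}$, with value $gM-g(g-1)/2$. Substituting these gives the two bounds.

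\textbf{Upper bound.} This is $ng-hg(g+1)/2$.

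\textbf{Lower bound.} This is $ng-h\bigl(gM-g(g-1)/2\bigr)=g(n-Mh)+hg(g-1)/2=\rho g+hg(g-1)/2$, using $\rho=n-Mh$. Note that $\rho\in\{1,\ldots,h\}$ because $M=\lfloor (n-1)/h\rfloor$ gives $Mh\le n-1<(M+1)h$.

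\textbf{Equality cases.} If a $g$-subset has the minimal element sum but is not $\{1,\ldots,g\}$, it contains some element exceeding $g$ and misses some element $\le g$. Swapping these strictly decreases the sum, which is a contradiction; the same argument works for the maximum. Since $h>0$, equality in $D_j$ corresponds exactly to extremality of $\sum m$. When $g=0$, both sides and $D_j$ are $0$.

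\textbf{Final inequality.} From the lower bound and $\rho g\ge0$, we get $D_j(n)\ge hg(g-1)/2$. Hence $g^2-g-2D_j/h\le0$. Solving the quadratic gives $g\le\bigl(1+\sqrt{1+8D_j/h}\bigr)/2$.

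There is no real obstacle here. The only point requiring care is the exchange argument for the equality characterization, together with keeping $\rho\ge1$ so that the terms are positive. Positivity is not even strictly needed, since the objective is linear with a fixed number of terms.
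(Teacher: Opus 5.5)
Your proposal is correct and follows essentially the same route as the paper: write $D_j(n)=ng-h\sum_{m\in G_j(n)}m$, bound $\sum m$ between $g(g+1)/2$ and $gM-g(g-1)/2$, then drop $\rho g$ and solve the quadratic. The only difference is cosmetic: the paper derives the lower bound and its equality case from the termwise estimate $m_r\le M-g+r$ on the ordered missing multipliers, whereas you characterize both extremal sets by an exchange argument.
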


\begin{proof}
Each $m\in G_j(n)$ contributes $n-mh$ to $D_j(n)$, so
\[
  D_j(n)=ng-h\sum_{m\in G_j(n)}m.
\]
The $g$ distinct positive integers in $G_j(n)$ lie in $[1,M]$, so their sum
is at least $1+\cdots+g=g(g+1)/2$, with equality exactly when
$G_j(n)=\{1,\ldots,g\}$.  This is the claimed upper bound.

For the lower bound, enumerate the missing multipliers as
$m_1<\cdots<m_g$.  Since they are distinct and $m_g\le M$, one has
$m_r\le M-g+r$.  Hence
\[
  n-m_rh\ge n-(M-g+r)h=\rho+(g-r)h.
\]
Summing over $r$ gives
\[
  D_j(n)\ge g\rho+\frac{hg(g-1)}{2}.
\]
For $g>0$, equality holds throughout if and only if
$m_r=M-g+r$ for every $r$, equivalently
$G_j(n)=\{M-g+1,\ldots,M\}$.

For the bound on $g$, drop the nonnegative term $\rho g$ from the lower
bound and rearrange:
\[
  g^2-g-\frac{2D_j(n)}{h}\le0.
\]
The positive root gives the displayed inequality.
\end{proof}

The moment bounds give a concrete sufficiency criterion.

\begin{corollary}[A sufficient condition based on missing multipliers]\label{cor:directed-genus}
If a gap family satisfies
\[
  p_n+\sum_{j=1}^{p_n}g_j^{(n)}=O((\log n)^c),
\]
then its worst-case exchange and comparison counts are $O(n(\log n)^c)$.
\end{corollary}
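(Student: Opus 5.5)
The plan is to feed the weighted upper bound of \cref{thm:directed-upper} with the choice $s=p_n$, and then control each $D_j(n)$ by the number of missing multipliers.

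First I will check that the $U_{p_n}(n)$ term vanishes. Since $h_{p_n}=1$, the semigroup $S_{p_n}$ contains every positive integer. Hence $U_{p_n}(n)=0$, and \cref{thm:directed-upper} gives
\[
  W_n(H)\le\sum_{j=1}^{p_n}D_j(n).
\]

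Next I will bound each summand. By the upper inequality of \cref{prop:truncated-genus-refinement}, or directly because each missing multiplier $m\in G_j(n)$ contributes $n-mh_j<n$ position pairs, we have $D_j(n)\le n\,g_j^{(n)}$. Summing over $j$ gives
\[
  W_n(H)\le n\sum_{j}g_j^{(n)}=O(n(\log n)^c)
\]
under the hypothesis, since $\sum_j g_j^{(n)}$ is dominated by $p_n+\sum_j g_j^{(n)}=O((\log n)^c)$.

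For comparisons, the same theorem gives $C_n(H)\le\mathfrak U_n(H)+np_n$. The hypothesis includes $p_n=O((\log n)^c)$ precisely so that this extra term is also $O(n(\log n)^c)$, so both counts have the stated order.

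There is no real obstacle here; the only point to verify is that $U_{p_n}(n)=0$. That holds because the final gap is $1$, so every distance below $n$ is a nonnegative sum of processed gaps. The argument is then just a summation of the per-pass bound.
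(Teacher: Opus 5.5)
Your proposal is correct and follows the paper's intended route: the per-pass bound $T_j\le D_j(n)$ from \cref{thm:directed-upper}, with $U_{p_n}(n)=0$ because $h_{p_n}=1$, combined with $D_j(n)\le n\,g_j^{(n)}$ from the upper inequality of \cref{prop:truncated-genus-refinement}, and then the additive $np_n$ comparison overhead. Nothing is missing.
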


The Ap\'ery-set condition in \cref{cor:apery-runtime} strengthens this
counting criterion.

\subsection{Putting the lower and upper bounds together}

The Fourier lower bound and the two bounds from missing distances combine as
follows.

\begin{corollary}[Combined lower and upper bounds]\label{cor:sandwich}
For every decreasing gap sequence,
\[
  \begin{aligned}
  W_n(H)\ge\max\Bigg\{&
    \max_{1\le r<p}
      \frac{n^2}{8\pi(h_{r+1}+\pi n\beta_r(H;n))},\
    D_1(n),\\
    &\max_{2\le j\le p}\frac{D_j(n)}{h_{j-1}},\
    \max_{1\le s<p}
      \frac{U_s(n)}{h_s(2h_{s+1}-1)}
  \Bigg\},
  \end{aligned}
\]
while
\[
  W_n(H)\le\mathfrak U_n(H).
\]
\end{corollary}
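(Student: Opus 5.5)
This corollary is an assembly of results already proved, so I plan no new argument: I will show that each term in the maximum is a valid lower bound for $W_n(H)$ separately, and then cite the upper bound.

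\textbf{Lower bound.} I take the four terms one at a time.
\begin{enumerate}[label=(\roman*)]
  \item The Fourier term, maximized over $1\le r<p$, is exactly the second display of \cref{thm:fourier-lb}. I only rename the prefix index $s$ to $r$.
  \item The term $D_1(n)$ is the first entry of the inner maximum in \cref{thm:directed-lower}.
  \item For the missing-multiple terms, \cref{thm:directed-lower} gives $W_n(H)\ge\lceil D_j(n)/h_{j-1}\rceil$ for each $2\le j\le p$. Since $\lceil x\rceil\ge x$, this implies the weaker unrounded form $D_j(n)/h_{j-1}$ appearing in the statement.
  \item The cutoff term $U_s(n)/\bigl(h_s(2h_{s+1}-1)\bigr)$, for $1\le s<p$, is the second display of \cref{thm:directed-lower}.
\end{enumerate}
Because every term is individually a lower bound for $W_n(H)$, so is their maximum.

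\textbf{Upper bound.} The inequality $W_n(H)\le\mathfrak U_n(H)$ is the boxed statement of \cref{thm:directed-upper}.

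\textbf{Checks.} I need to confirm two things.
\begin{itemize}
  \item The index ranges in the statement match those in the cited results: $r$ and $s$ run over $1,\dots,p-1$, and $j$ over $2,\dots,p$. When $p=1$, the empty maxima should be read as vacuous, leaving only $D_1(n)$.
  \item No hypothesis is lost. Each cited theorem is stated for every decreasing sequence with $h_p=1$, which is the hypothesis here.
\end{itemize}

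The only possible obstacle is bookkeeping: matching index ranges and noting that dropping the ceiling weakens the bound rather than strengthening it. Neither requires a real argument.
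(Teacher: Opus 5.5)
Your proposal is correct and matches the paper, which gives no separate proof: it presents the corollary simply as the combination of the prefix Fourier bound, the two lower bounds from missing distances, and the boxed upper bound $W_n(H)\le\mathfrak U_n(H)$. The only bookkeeping needed is what you already note: dropping the ceiling weakens the bound, and the empty maxima for $p=1$ are vacuous.
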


The later appendix on signed and nonnegative representations relates these
parameters under an explicit condition on the finite range $1\le m\le M_j$.

\section{From integer relations to nonnegative representations}
\label{sec:separation}

Subtracting one nonnegative representation from another gives an integer
relation.  A Fourier character that has small error on the earlier gaps but
large error on the current gap therefore limits which multipliers can have
short nonnegative representations.  Conversely, if the represented
multipliers are sufficiently dense in an interval, two must be consecutive;
subtracting their representations then gives the current gap.  The converse
of this statement requires additional information about sizes and residue
classes.

For the appendix results in this section, write
\[
  Q_j=\{m\in\mathbb N_0:mh_j\in S_{j-1}\}.
\]
Thus $Q_j$ is the full set of multipliers represented by gaps available before
pass $j$; the main text uses only its finite complement $G_j(n)$.

\subsection{Fourier separation limits short nonnegative representations}

For pass $j$, let $\ell_j^\to(m)$ be the smallest number of earlier gaps
needed to represent $m h_j$ with nonnegative coefficients:
\[
  \ell_j^\to(m)
  =\min\left\{
    \|x\|_1:x\in\mathbb N_0^{j-1},\ 
    \sum_{i<j}x_i h_i=m h_j
  \right\},
\]
with value $+\infty$ if no representation exists and
$\ell_j^\to(0)=0$.  Set
\[
  Q_j^{\le L}=\{0\le m\le M_j:\ell_j^\to(m)\le L\}.
\]

\begin{theorem}[Fourier separation and representation length]\label{thm:short-ray-phase}
Fix $2\le j\le p$, $L\ge0$, and $1\le k<n$.  For all
$u,v\in Q_j^{\le L}$,
\[
  \normfrac{k(u-v)h_j/n}\le2L\,\delta_{j-1}(k).
\]
If, in addition,
\[
  \normfrac{k h_j/n}>2L\,\delta_{j-1}(k),
\]
then $Q_j^{\le L}$ contains no two consecutive integers and
\[
  \bigl|\{1\le m\le M_j:\ell_j^\to(m)>L\}\bigr|
  \ge\left\lfloor\frac{M_j+1}{2}\right\rfloor.
\]
\end{theorem}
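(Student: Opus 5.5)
The plan is to subtract two short nonnegative representations to obtain an integer relation, and then bound the torus norm along that relation in the same way as in \cref{lem:signed-transfer}. Take $u,v\in Q_j^{\le L}$ and choose minimizing vectors $x,y\in\mathbb N_0^{j-1}$ with $\sum_{i<j}x_ih_i=uh_j$, $\sum_{i<j}y_ih_i=vh_j$, and $\|x\|_1,\|y\|_1\le L$. (For $u=0$ we take $x=0$.) Then $(u-v)h_j=\sum_{i<j}(x_i-y_i)h_i$, and $\|x-y\|_1\le\|x\|_1+\|y\|_1\le 2L$. Multiplying by $k/n$ and using symmetry and subadditivity of the torus norm gives
\[
  \normfrac{k(u-v)h_j/n}\le\sum_{i<j}|x_i-y_i|\,\normfrac{kh_i/n}\le 2L\,\delta_{j-1}(k),
\]
which is the first claim.

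For the second claim, suppose $m,m+1\in Q_j^{\le L}$. Applying the first part with $u=m+1$ and $v=m$ gives $\normfrac{kh_j/n}\le 2L\delta_{j-1}(k)$, which contradicts the hypothesis. So $Q_j^{\le L}$ contains no two consecutive integers.

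For the counting bound, observe that $0\in Q_j^{\le L}$ always holds, since $\ell_j^\to(0)=0$. Hence $1\notin Q_j^{\le L}$ whenever $M_j\ge1$. Partition $\{1,\ldots,M_j\}$ into the disjoint consecutive pairs $\{1,2\},\{3,4\},\ldots$, with a singleton $\{M_j\}$ left over when $M_j$ is odd. Each pair contains at most one element of $Q_j^{\le L}$, so each pair contributes at least one $m$ with $\ell_j^\to(m)>L$. This already yields $\lfloor M_j/2\rfloor$ such multipliers. When $M_j$ is odd, we need one more to reach $\lfloor (M_j+1)/2\rfloor$. For that, pair differently as $\{0,1\},\{2,3\},\ldots,\{M_j-1,M_j\}$. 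These are $(M_j+1)/2$ disjoint pairs inside $\{0,\ldots,M_j\}$, each containing at most one element of $Q_j^{\le L}$. The pair $\{0,1\}$ contains $0\in Q_j^{\le L}$, so $1$ is missing, and every other pair lies in $[1,M_j]$ and contributes a missing element. That gives $(M_j+1)/2$ missing multipliers.

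A uniform way to handle both parities is this: among the $M_j+1$ integers $0,\ldots,M_j$, a set without two consecutive integers has size at most $\lceil (M_j+1)/2\rceil$. So at least $\lfloor (M_j+1)/2\rfloor$ of these integers lie outside $Q_j^{\le L}$, and none of them is $0$.

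The only delicate point is this bookkeeping at the endpoints, namely using $0\in Q_j^{\le L}$ to gain the extra element in the odd case. The analytic content is just the triangle inequality for $\normfrac{\cdot}$.
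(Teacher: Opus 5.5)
Your proposal is correct and follows the paper's proof: you subtract two nonnegative representations of length at most $L$ and apply subadditivity of the torus norm, then bound $Q_j^{\le L}$ as an independent set in the path on $\{0,\ldots,M_j\}$ that contains $0$. Your explicit pairing argument is just a hands-on version of the same $\lceil (M_j+1)/2\rceil$ independent-set bound, which the paper uses directly, as in your final paragraph.
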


\begin{proof}
Subtract representations of $u h_j$ and $v h_j$.  The resulting integer coefficient vector has $\ell_1$ norm at most $2L$, so subadditivity of the torus norm proves the first display.  Under the strict phase inequality, two short-representable multipliers cannot differ by one.  An independent set in the path on $\{0,1,\ldots,M_j\}$ has size at most
$\lceil(M_j+1)/2\rceil$.  Since $0\in Q_j^{\le L}$, every long multiplier
is positive, and hence
\[
  \bigl|\{1\le m\le M_j:\ell_j^\to(m)>L\}\bigr|
  =(M_j+1)-|Q_j^{\le L}|
  \ge \left\lfloor\frac{M_j+1}{2}\right\rfloor.
\]
\end{proof}

A Fourier character with small error on the earlier gaps but large error on
the current gap therefore leaves at least half of the multipliers
$1,\ldots,M_j$ either unrepresented or expensive to represent.  The converse
argument uses the number of represented multipliers rather than the length of
each representation.

\subsection{Few missing multipliers give a short integer relation}

If at most $g$ positions are missing from an interval of $2g+2$ consecutive
integers, two represented positions must be adjacent.  Subtracting their
nonnegative representations yields an integer representation of $h_j$.
\Cref{lem:signed-transfer} then bounds the Fourier error, without any
assumption that $Q_j$ is cofinite.

\begin{theorem}[Few missing multipliers give a short integer representation]\label{thm:ray-genus-transfer}
Fix $2\le j\le p$ and suppose the finite range is sufficiently long:
\[
  M_j=\left\lfloor\frac{n-1}{h_j}\right\rfloor
  \ge 2g_j^{(n)}+1.
\]
Then there exist consecutive multipliers $q,q+1\in Q_j$ with
$1\le q\le 2g_j^{(n)}$, and the minimum coefficient length satisfies
\[
  \boxed{\lambda_j(H)\le 4g_j^{(n)}-1.}
\]
By \cref{lem:signed-transfer}, for every $1\le k<n$,
\[
  \normfrac{kh_j/n}\le(4g_j^{(n)}-1)\,\delta_{j-1}(k).
\]
\end{theorem}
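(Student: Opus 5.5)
Write $g=g_j^{(n)}$ and first find the adjacent pair by pigeonhole on the window $\{0,1,\ldots,2g+1\}$.

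\textbf{Locating the pair.} By the hypothesis $M_j\ge2g+1$, this window lies inside $\{0,\ldots,M_j\}$. Its members outside $Q_j$ are exactly the elements of $G_j(n)$ that lie in $\{1,\ldots,2g+1\}$. There are at most $g$ of them, because $0\in Q_j$ automatically.

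Now look only at the $2g+1$ consecutive pairs $(q,q+1)$ with $1\le q\le 2g$; these pairs tile the range $\{1,\ldots,2g+1\}$ as a path with $2g+1$ vertices. If no such pair had both endpoints in $Q_j$, then the complement of $Q_j$ would be a vertex cover of that path. A vertex cover of a path on $2g+1$ vertices has at least $g$ vertices, with equality only for the alternating pattern on the even positions $2,4,\ldots,2g$.

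Equality is therefore the one delicate case. If exactly the even positions are missing, then $1$ and $3$ lie in $Q_j$. But $Q_j$ is closed under addition: $uh_j$ and $vh_j$ in $S_{j-1}$ give $(u+v)h_j\in S_{j-1}$. Hence $2=1+1$ lies in $Q_j$, a contradiction.

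So in every case there is $q\in[1,2g]$ with $q,q+1\in Q_j$. As a fallback, one can instead use the window $\{1,\ldots,2g+1\}$, which has $2g$ adjacent pairs and at most $g$ missing points. Adjacency is forced unless the missing set is exactly $\{2,4,\ldots,2g\}$, and additive closure excludes that pattern.

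\textbf{Bounding the representation lengths.} Take nonnegative representations $qh_j=\sum_{i<j}x_ih_i$ and $(q+1)h_j=\sum_{i<j}y_ih_i$. Subtracting gives $h_j=\sum_{i<j}(y_i-x_i)h_i$, so $\lambda_j\le\|y\|_1+\|x\|_1$.

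Each coefficient sum needs control. Since every $h_i>h_j$ for $i<j$, we have $\|x\|_1h_j<\sum_i x_ih_i=qh_j$, so $\|x\|_1\le q-1$. Similarly $\|y\|_1\le q$. Hence $\lambda_j\le 2q-1\le 4g-1$.

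The strictness $h_i\ge h_j+1$ is what makes $\|x\|_1<q$. The edge case $q=0$ does not occur, since $q\ge1$.

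\textbf{Conclusion and main difficulty.} The final display then follows directly from \cref{lem:signed-transfer}. The main obstacle is the pigeonhole bookkeeping that secures $q\le2g$ and not merely $q\le 2g+1$. The additive-closure trick above is the step I would verify most carefully.
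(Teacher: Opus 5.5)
Your proof is correct except for one unproved edge case, $g\ge1$. Apart from that it follows the paper's route: a pigeonhole on an initial window of multipliers, then subtracting nonnegative representations of $qh_j$ and $(q+1)h_j$, with $h_i\ge h_j+1$ giving $\|x\|_1\le q-1$ and $\|y\|_1\le q$. Only the pigeonhole bookkeeping differs. The paper keeps $0$ in the window, so $Q_j$ has at least $g+2$ members among the $2g+2$ integers $0,\ldots,2g+1$ and must contain an adjacent pair. It then excludes the pair $(0,1)$ because $1\notin Q_j$. You drop $0$, note that the minimum vertex cover $\{2,4,\ldots,2g\}$ of the path on $\{1,\ldots,2g+1\}$ is unique, and rule out that pattern by additive closure.

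\textbf{The gap.} You never show $g\ge1$, and your argument needs it. For $g=0$ your path on $\{1\}$ has no edges and the ``alternating pattern'' is empty, so no contradiction arises. In fact the conclusion $1\le q\le 2g$ would then be unsatisfiable, so the theorem itself depends on $g\ge1$. The fix is the fact you already use in the length bound. Every earlier gap exceeds $h_j$, so $h_j\notin S_{j-1}$, that is, $1\notin Q_j$. Since $M_j\ge1$, this puts $1\in G_j(n)$, so $g\ge1$. The same fact also disposes of the extremal pattern directly, since that pattern would put $1$ in $Q_j$; your additive-closure step is valid but not needed.

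A minor slip: there are $2g$ pairs $(q,q+1)$ with $1\le q\le 2g$, not $2g+1$.
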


\begin{proof}
The set $Q_j\cap[0,M_j]$ contains $0$ and has $g_j^{(n)}$ missing
elements in $[1,M_j]$.  Consider the path graph on
$\{0,1,\ldots,2g_j^{(n)}+1\}$: it has $2g_j^{(n)}+2$ vertices, of which
at most $g_j^{(n)}$ are gaps.  A maximum independent set in a path on $N$
vertices has size $\lceil N/2\rceil$, so any set of more than
$\lceil(2g_j^{(n)}+2)/2\rceil=g_j^{(n)}+1$ vertices must contain two
consecutive ones.  Since $Q_j\cap[0,2g_j^{(n)}+1]$ has at least
$2g_j^{(n)}+2-g_j^{(n)}=g_j^{(n)}+2$ elements (including $0$), there exist
consecutive $q,q+1\in Q_j$.  Strict decrease of the gaps gives
$1\notin Q_j$: no nonnegative combination of earlier gaps, all larger than
$h_j$, can equal $h_j$.  The window condition implies $M_j\ge1$, so this
also shows $g_j^{(n)}\ge1$ and excludes the pair $(0,1)$.  Thus $q\ge1$.

Since $qh_j,\,(q+1)h_j\in S_{j-1}$ and every generator of $S_{j-1}$ is
strictly greater than $h_j$, a positive representation of $qh_j$ has
length $L_q$ satisfying $L_q(h_j+1)\le L_q\min_{i<j}h_i\le qh_j$, so
$L_q\le q-1$ (and similarly $L_{q+1}\le q$).  Subtracting the two
representations gives an integer representation of $h_j$ of length at most
$L_q+L_{q+1}\le(q-1)+q=2q-1\le 4g_j^{(n)}-1$.

The phase inequality follows from \cref{lem:signed-transfer}.
\end{proof}

The estimate in \cref{prop:truncated-genus-refinement} converts the number of
missing multipliers into a bound involving their weighted count.

\begin{corollary}[Bound from the weighted count]\label{cor:defect-transfer}
Under the same window condition $M_j\ge2g_j^{(n)}+1$,
\[
  \lambda_j(H)\le 1+2\sqrt{1+8D_j(n)/h_j}.
\]
\end{corollary}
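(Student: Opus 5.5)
We combine \cref{thm:ray-genus-transfer} with the counting bound at the end of \cref{prop:truncated-genus-refinement}. Under the window condition $M_j\ge2g_j^{(n)}+1$, \cref{thm:ray-genus-transfer} already gives
\[
  \lambda_j(H)\le 4g_j^{(n)}-1,
\]
so it remains only to bound $g_j^{(n)}$ in terms of the weighted count $D_j(n)$.

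For that step we use the final display of \cref{prop:truncated-genus-refinement}:
\[
  g_j^{(n)}\le\frac{1+\sqrt{1+8D_j(n)/h_j}}{2}.
\]
This bound was derived by discarding the term $\rho g$ from the lower estimate $D_j(n)\ge\rho g+hg(g-1)/2$ and solving the resulting quadratic inequality. It holds unconditionally, so the window hypothesis is needed only to invoke \cref{thm:ray-genus-transfer}.

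Multiplying the bound on $g_j^{(n)}$ by four and subtracting one gives
\[
  \lambda_j(H)\le 4\cdot\frac{1+\sqrt{1+8D_j(n)/h_j}}{2}-1
  =1+2\sqrt{1+8D_j(n)/h_j},
\]
which is exactly the claimed inequality.

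No step here is a real obstacle. The one point to confirm is that the window condition forces $g_j^{(n)}\ge1$, so that \cref{thm:ray-genus-transfer} yields a finite $\lambda_j$ and the bound is not vacuous. This was already noted in its proof: $M_j\ge1$ and $1\notin Q_j$, so the multiplier $1$ is always missing.
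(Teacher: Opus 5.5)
Your proposal is correct and matches the paper's proof: it substitutes the unconditional bound $g_j^{(n)}\le(1+\sqrt{1+8D_j(n)/h_j})/2$ from \cref{prop:truncated-genus-refinement} into $\lambda_j(H)\le4g_j^{(n)}-1$ from \cref{thm:ray-genus-transfer}, which is the only step that uses the window condition. Your closing remark that $g_j^{(n)}\ge1$ is accurate, but the inequality does not depend on it.
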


\begin{proof}
By \cref{prop:truncated-genus-refinement},
$g_j^{(n)}\le(1+\sqrt{1+8D_j(n)/h_j})/2$.  Substitute into the bound
$\lambda_j\le4g_j^{(n)}-1$ and simplify.
\end{proof}

\paragraph{When the finite range is shorter}
If $M_j<2g_j^{(n)}+1$, the upper bound still uses
$g_j^{(n)}\le M_j$, or it bounds all passes after a chosen prefix through
$U_s(n)$.  The displayed condition is needed only to deduce the bound on
$\lambda_j$ from the number of missing multipliers.

\subsection{When the multiplier set is a numerical semigroup}

If $Q_j$ is a numerical semigroup, its conductor supplies two consecutive
represented multipliers and removes the finite-range condition.  This occurs
exactly when
$\gcd(h_1,\ldots,h_{j-1})\mid h_j$.

\begin{corollary}[The cofinite case]\label{cor:full-ray-transfer}
If $Q_j$ is a numerical semigroup with conductor $c$ and genus $g=g(Q_j)$,
then $c,c+1\in Q_j$ and
\[
  \boxed{\lambda_j(H)\le 2c-1\le 4g-1.}
\]
The same phase lemma then gives, for every $1\le k<n$,
\[
  \normfrac{kh_j/n}\le(2c-1)\,\delta_{j-1}(k).
\]
\end{corollary}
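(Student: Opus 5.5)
Since $Q_j$ is a numerical semigroup, it contains every integer $m\ge c$; in particular $c,c+1\in Q_j$. By the definition of the conductor, $c-1\notin Q_j$, and $c\le 2g$ is the standard inequality for numerical semigroups. That inequality comes from the fact that for each $0\le x\le c-1$ at most one of $x$ and $c-1-x$ lies in $Q_j$: if both did, their sum $c-1$ would lie in $Q_j$. Hence at least half of $\{0,\dots,c-1\}$ consists of gaps, so $c\le 2g$. It also suffices that $c\ge1$. This holds because $1\notin Q_j$, exactly as in the proof of \cref{thm:ray-genus-transfer}: no nonnegative combination of earlier gaps, all larger than $h_j$, equals $h_j$. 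So $Q_j\ne\mathbb N_0$, hence $g\ge1$ and $c\ge2$.

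Next I bound the lengths of nonnegative representations, reusing the argument of \cref{thm:ray-genus-transfer}. If $qh_j=\sum_{i<j}x_ih_i$ with $x\in\mathbb N_0^{j-1}$ and $L=\|x\|_1$, then each $h_i\ge h_j+1$, so $L(h_j+1)\le qh_j$ and therefore $L\le q-1$ whenever $q\ge1$. Applied with $q=c$ and $q=c+1$, this gives representations of $ch_j$ and $(c+1)h_j$ of lengths at most $c-1$ and $c$. Subtracting the first from the second yields an integer vector $y\in\mathbb Z^{j-1}$ with $\sum_i y_ih_i=h_j$ and $\|y\|_1\le 2c-1$. By the definition of $\lambda_j(H)$, this gives $\lambda_j(H)\le 2c-1\le 4g-1$.

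The phase inequality then follows by substituting this bound into \cref{lem:signed-transfer}. That lemma applies because $\lambda_j(H)$ is finite, which we have just shown, or alternatively because the cofinite condition means $\gcd(h_1,\dots,h_{j-1})\mid h_j$.

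The only step needing care is the conductor bound $c\le 2g$ together with the nondegeneracy $c\ge1$. Without $c\ge1$, the trivial pair $(0,1)$ would be used and the length bound $L_q\le q-1$ would fail at $q=0$. Both points are handled above: the symmetric-pairing argument gives $c\le 2g$, and $1\notin Q_j$ gives $c\ge2$. Everything else is the subtraction argument already used in the proof of \cref{thm:ray-genus-transfer}.
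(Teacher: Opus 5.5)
Your proposal is correct and follows the paper's proof essentially step for step. You use the conductor to get $c,c+1\in Q_j$, the length estimate $L_q\le q-1$ from \cref{thm:ray-genus-transfer} to get $\lambda_j(H)\le 2c-1$, the pairing $s\mapsto c-1-s$ to get $c\le 2g$, and \cref{lem:signed-transfer} for the phase inequality, and your explicit derivation of $c\ge1$ from $1\notin Q_j$ usefully fills in a step the paper only asserts.
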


\begin{proof}
By definition of the conductor, every integer $\ge c$ lies in $Q_j$;
in particular $c,c+1\in Q_j$.  Since $c\ge1$ and all generators of
$S_{j-1}$ exceed $h_j$, the same length estimate as in
\cref{thm:ray-genus-transfer} gives $\lambda_j\le2c-1$.  To see the
conductor--genus bound directly, put $F=c-1$.  The map
$s\mapsto F-s$ sends every nongap $s\in Q_j\cap[0,F]$ injectively to a
gap: otherwise both $s$ and $F-s$ would lie in $Q_j$, forcing
$F\in Q_j$.  There are $c-g$ such nongaps and $g$ gaps, so
$c-g\le g$, hence $c\le2g$.
\end{proof}

A small generalized Frobenius count $g(Q_j)$ therefore bounds the Fourier
error on the current gap in terms of the errors on earlier gaps.

\subsection{Why the converse fails}

An integer relation records cancellation.  A nonnegative representation also
depends on the magnitudes and residue classes of the available gaps.  The
integer relation loses this information, so the reverse implication is false.

\begin{proposition}[A short integer representation does not bound missing multipliers]\label{prop:length-genus-nogo}
There is no function $F:\mathbb N\to\mathbb N$ such that, for all
$n,H,j$, and $L\in\mathbb N$,
\[
  \lambda_j(H)\le L
  \quad\Longrightarrow\quad
  g_j^{(n)}\le F(L).
\]
In fact, this fails even when $\lambda_j(H)=2$, $Q_j$ is a numerical
semigroup, and every represented multiplier in the finite range has
nonnegative representation length one.
\end{proposition}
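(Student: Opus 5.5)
The plan is to build a single explicit family of three-pass sequences in which one integer relation of length two is always available, yet the number of missing multipliers $g_j^{(n)}$ grows without bound. The guiding idea is that the relation $h_j=(K+1)h_j-Kh_j$ costs nothing in $\ell_1$ length. However, the nonnegative sums generated by $Kh_j$ and $(K+1)h_j$ begin only at $Kh_j$. If $n$ is chosen so that the finite range stops before $2Kh_j$, almost every multiplier in $[1,M_j]$ is unrepresented.

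Concretely, for an integer $K\ge3$ I will take
\[
  H=(K+1>K>1),\qquad j=3,\qquad n=2K,
\]
so that $h_3=1$ and $M_3=\lfloor(n-1)/1\rfloor=2K-1$. Every gap is below $n$ and the sequence is strictly decreasing. Then $Q_3=\langle K,K+1\rangle$, and I will check its three claimed properties in turn.
\begin{enumerate}[label=(\alph*)]
  \item It is a numerical semigroup, because $\gcd(K,K+1)=1$.
  \item $\lambda_3(H)\le2$ via $1=(K+1)-K$. By \cref{lem:signed-transfer}, any finite $\lambda_j$ is at least $2$, so $\lambda_3(H)=2$.
  \item Within $[1,2K-1]$, the only nonnegative combinations of $K$ and $K+1$ are $K$ and $K+1$ themselves, since any sum of two or more generators is at least $2K$. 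Hence every represented multiplier in the finite range has $\ell_3^\to(m)=1$.
\end{enumerate}
Consequently $g_3^{(n)}=(2K-1)-2=2K-3$.

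To finish, suppose some function $F$ satisfied the implication in the statement. Take $L=2$ and choose $K$ with $2K-3>F(2)$. This contradicts $g_3^{(n)}\le F(2)$, and the instance also meets every extra condition listed in the proposition.

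I do not expect a real obstacle; the only care needed is bookkeeping. First, the finite range must be cut off before $2K$ so that no sums of two or more generators appear. Second, $M_j$ is defined with $n-1$, which is why $n=2K$ gives exactly $M_3=2K-1$. If a sequence with more passes is desired, the same construction scales: multiply all gaps by $h$ and prepend larger gaps exceeding $n$ is not allowed, so instead keep the three-pass form, which already suffices.
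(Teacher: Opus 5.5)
Your proposal is correct and is essentially the paper's proof: the paper uses the same family $H=(q+1,q,1)$ with $n=2q$, obtaining $Q_3=\langle q,q+1\rangle$, $M_3=2q-1$, $\lambda_3(H)=2$ via $1=(q+1)-q$ together with \cref{lem:signed-transfer}, and $g_3^{(n)}=2q-3\to\infty$. Your closing remark about scaling to more passes is garbled but unnecessary, since the three-pass instance already proves the statement.
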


\begin{proof}
For $q\ge3$, take $n=2q$ and the gap sequence $H=(q+1,q,1)$.  At entry to
the final pass, $Q_3=\langle q,q+1\rangle$ with $M_3=2q-1$.  The only
positive elements of $Q_3$ in $[1,M_3]$ are $q$ and $q+1$, each of length
one, while $g_3^{(n)}=2q-3$.  Moreover,
$1=(q+1)-q$, so $\lambda_3(H)\le2$; the reverse inequality follows from
\cref{lem:signed-transfer}.  Hence $\lambda_3(H)=2$ while the number of
missing multipliers tends to infinity with $q$, ruling out $F$ already at
$L=2$.
\end{proof}

The implication proved above runs from few missing multipliers to a short
integer representation and then to a bound on Fourier error.  Information
about magnitudes and residue classes supplies the missing converse in two
useful settings.  Two coprime represented multipliers give a quadratic bound
(\cref{prop:short-coprime-bridge}); a complete residue system represented with
bounded length and scale gives a linear bound
(\cref{thm:apery-bridge}).  Finding the weakest assumptions under which a
short integer representation bounds the number of missing multipliers remains
open.

\subsection{Using two generators or an Ap\'ery set}

A coprime pair already gives a quadratic genus bound.

\begin{proposition}[Two coprime represented multipliers]\label{prop:short-coprime-bridge}
Suppose $a,b\in Q_j$ are coprime,
$\max\{a,b\}\le B$, and
$\ell_j^\to(a),\ell_j^\to(b)\le L$.  Then
\[
  g_j^{(n)}\le\frac{(a-1)(b-1)}2<\frac{B^2}{2}
\]
and, for every $1\le k<n$,
$\normfrac{k h_j/n}<2BL\,\delta_{j-1}(k)$.
\end{proposition}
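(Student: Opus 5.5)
The plan has two independent parts: a semigroup-closure argument for the genus bound, and an integer-relation argument for the Fourier bound.

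\emph{Genus bound.} First I check that $Q_j$ is closed under addition. If $uh_j=\sum x_ih_i$ and $vh_j=\sum y_ih_i$ with $x,y\in\mathbb N_0^{j-1}$, then $(u+v)h_j=\sum(x_i+y_i)h_i$. So $Q_j$ contains the numerical semigroup $\langle a,b\rangle$. Since $G_j(n)\subseteq\{m\ge1:m\notin Q_j\}\subseteq\mathbb N\setminus\langle a,b\rangle$, Sylvester's formula gives
\[
g_j^{(n)}\le\frac{(a-1)(b-1)}2 .
\]
The strict bound $<B^2/2$ then follows from $a,b\le B$.

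\emph{Fourier bound.} I produce a short integer representation of $h_j$ and apply \cref{lem:signed-transfer}. Because $\gcd(a,b)=1$, Bézout gives integers $\alpha,\beta$ with $\alpha a+\beta b=1$. I choose them so that $|\alpha|\le b$ and $|\beta|\le a$; the standard minimal solution even gives $|\alpha|<b$ and $|\beta|<a$ when $a,b\ge2$. Let $x,y\in\mathbb N_0^{j-1}$ be nonnegative representations of $ah_j$ and $bh_j$ with $\|x\|_1,\|y\|_1\le L$. Then $\alpha x+\beta y$ is an integer vector representing $h_j$, and
\[
\|\alpha x+\beta y\|_1\le(|\alpha|+|\beta|)L .
\]
Hence $\lambda_j(H)\le(|\alpha|+|\beta|)L$, and \cref{lem:signed-transfer} yields
\[
\normfrac{kh_j/n}\le(|\alpha|+|\beta|)L\,\delta_{j-1}(k).
\]
It remains to show $|\alpha|+|\beta|<2B$.

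\emph{Edge cases.} Both $a,b\ge1$, since $0$ would have to be coprime to the other element, forcing it to equal $1$. The value $1$ is impossible because $1\notin Q_j$: all earlier gaps exceed $h_j$. So $a,b\ge2$, and being coprime they are distinct. The minimal Bézout pair then satisfies $|\alpha|\le b-1$ and $|\beta|\le a-1$, so $|\alpha|+|\beta|\le a+b-2<2B$. If $\delta_{j-1}(k)=0$, the bound $\normfrac{kh_j/n}\le(\text{finite})\cdot0$ gives $0$, and the strict inequality $0<0$ fails. I would therefore read the strict inequality with the convention that it is non-strict when $\delta_{j-1}(k)=0$, or else state $\le$ in that case.

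The main obstacle I expect is exactly this bookkeeping for the strictness and the Bézout coefficient bounds. The semigroup inclusion is routine.
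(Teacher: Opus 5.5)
Your proof is correct and follows the paper's argument: you use $\langle a,b\rangle\subseteq Q_j$ with Sylvester's formula for the count, and a B\'ezout identity with $|\alpha|<b$, $|\beta|<a$ combined with the two length-$L$ nonnegative representations, which gives $\lambda_j(H)\le(a+b-2)L<2BL$, and then \cref{lem:signed-transfer}. Your strictness caveat is also right, and the paper's one-line proof passes over it: for example, with $H=(6,4,2,1)$, $n=10$, $j=3$, $a=2$, $b=3$, $L=1$, $k=5$, one has $\delta_{2}(5)=0$ and $\normfrac{kh_3/n}=0$, so the strict inequality holds only when $\delta_{j-1}(k)>0$ and must otherwise be read as $\le$.
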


\begin{proof}
The genus bound follows from
$\langle a,b\rangle\subseteq Q_j$ and Sylvester's theorem.  A B\'ezout
identity $ua+vb=1$ with $|u|<b$ and $|v|<a$, together with nonnegative
representations of $ah_j$ and $bh_j$, gives an integer representation of
$h_j$ of length less than $L(a+b)\le2BL$.
\end{proof}

For a linear rather than quadratic bound on missing multipliers, one needs
complete coverage of the residue classes.

\begin{theorem}[Upper bound from Ap\'ery representatives]\label{thm:apery-bridge}
Fix $2\le j\le p$ and $L,B\ge1$.  Suppose there is an integer
$a\in Q_j$ with $a\ge2$ such that $a h_j$ has a nonnegative representation
using at most $L$ earlier gaps, each no larger than $B h_j$.  Suppose also
that for every residue
$r\in\{0,\ldots,a-1\}$ there is a multiplier $w_r\in Q_j$ such that
\[
  w_r\equiv r\pmod a
\]
and $w_r h_j$ has a nonnegative representation using at most $L$ earlier
gaps, each no larger than $B h_j$; take $w_0=0$.  Then $Q_j$ is a numerical
semigroup and
\[
  \operatorname{cond}(Q_j)\le LB,
  \qquad
  g_j^{(n)}\le g(Q_j)<LB.
\]
Moreover,
\[
  \lambda_j(H)\le \min\left\{2LB-1,\; L\left(1+\frac{LB}{a}\right)\right\},
\]
and hence, for every $1\le k<n$,
\[
  \normfrac{k h_j/n}
  \le \min\left\{2LB-1,\; L\left(1+\frac{LB}{a}\right)\right\}
       \delta_{j-1}(k).
\]
\end{theorem}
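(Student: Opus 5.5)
The plan is to exploit that $Q_j$ is closed under addition, use the residue representatives $w_r$ to get cofiniteness with an explicit conductor, and then read off both bounds on $\lambda_j(H)$. One bound comes from \cref{cor:full-ray-transfer}; the other comes from a direct subtraction using the residue-one representative.

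\emph{Step 1: $Q_j$ is a numerical semigroup.} Since $S_{j-1}$ is closed under addition, so is $Q_j=\{m: mh_j\in S_{j-1}\}$, and $0\in Q_j$. Given $a\in Q_j$ and $w_r\in Q_j$, we get $w_r+ka\in Q_j$ for every $k\in\mathbb N_0$. Now let $m$ be any integer with $m\ge\max_r w_r$, and let $r\equiv m\pmod a$. Then $m-w_r$ is a nonnegative multiple of $a$, so $m\in Q_j$. Hence $Q_j$ is cofinite. A cofinite additive submonoid of $\mathbb N_0$ has $\gcd 1$, so it is a numerical semigroup.

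\emph{Step 2: the conductor and the genus.} The size hypotheses give $w_rh_j\le L\cdot Bh_j$, so $w_r\le LB$ for every $r$. By Step 1, every integer $\ge\max_r w_r$ lies in $Q_j$, so $\operatorname{cond}(Q_j)\le LB$. The gaps of $Q_j$ all lie in $[1,\operatorname{cond}(Q_j)-1]$, which gives $g(Q_j)\le LB-1<LB$. Finally, $G_j(n)$ is exactly the set of gaps of $Q_j$ lying in $[1,M_j]$, so $g_j^{(n)}\le g(Q_j)$.

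\emph{Step 3: the two bounds on $\lambda_j$.} The first bound, $\lambda_j(H)\le 2\operatorname{cond}(Q_j)-1\le 2LB-1$, is immediate from \cref{cor:full-ray-transfer}.

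For the second bound I use the residue $r=1$, which is a genuine residue distinct from $0$ because $a\ge2$. Write $w_1=1+ta$ with $t\in\mathbb N_0$. Since $w_1\le LB$, we get $t\le(w_1-1)/a<LB/a$. Here $t\ge1$, because $1\notin Q_j$: all earlier gaps exceed $h_j$. Then
\[
  h_j=w_1h_j-t\,(ah_j).
\]
Substituting the given nonnegative representations of $w_1h_j$ and $ah_j$, each of length at most $L$, produces an integer representation of $h_j$ by earlier gaps. Its $\ell_1$ length is at most $L+tL\le L(1+LB/a)$. Taking the minimum of the two bounds gives the displayed bound on $\lambda_j(H)$. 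The phase inequality then follows directly from \cref{lem:signed-transfer}.

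\emph{Main obstacle.} The only delicate point is bookkeeping. The conductor bound must use the scale hypothesis ``each gap $\le Bh_j$'' to convert representation length into the size bound $w_r\le LB$. Cancellation between the two representations in Step 3 can only shorten the integer vector, so the $\ell_1$ bound is an upper bound as stated.
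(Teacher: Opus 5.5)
Your proof is correct and follows essentially the same route as the paper. You bound $w_r\le LB$ from the scale hypothesis, get cofiniteness and $\operatorname{cond}(Q_j)\le LB$ from the residue representatives, take $\lambda_j\le 2LB-1$ from \cref{cor:full-ray-transfer}, and subtract $t$ copies of the representation of $ah_j$ from that of $w_1h_j=(ta+1)h_j$; the only difference is the genus step. There you use the elementary bound $g(Q_j)\le\operatorname{cond}(Q_j)-1$, where the paper uses the Ap\'ery-set identity $g(Q_j)=\frac1a\sum_r v_r-\frac{a-1}2\le\frac1a\sum_r w_r-\frac{a-1}2$. Both give $g(Q_j)<LB$; the paper's version controls the genus by the average of the $w_r$ rather than their maximum.
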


\begin{proof}
Every representation in the hypotheses satisfies
\[
  w_r h_j=\sum_{i<j}x_i h_i
  \le \|x\|_1 B h_j\le LBh_j,
\]
so $w_r\le LB$.  Because $a\in Q_j$ and every residue modulo $a$ occurs in $Q_j$, the additive monoid $Q_j$ is cofinite and hence is a numerical semigroup.  Equivalently, it has gcd one: a larger gcd would leave infinitely many integers outside $Q_j$.  Let $v_r$ be the least element of $Q_j$ congruent to $r$ modulo $a$.  Then $v_r\le w_r$, and
$\{v_0,\ldots,v_{a-1}\}$ is the Ap\'ery set of $Q_j$ with respect to $a$.  Counting the missing integers separately in each residue class gives the standard identity \cite{rosalesgarcia2009}
\[
  g(Q_j)
  =\frac1a\sum_{r=0}^{a-1}v_r-\frac{a-1}{2}
  \le\frac1a\sum_{r=0}^{a-1}w_r-\frac{a-1}{2}
  <LB.
\]
Since $G_j(n)\subseteq\mathbb N\setminus Q_j$, one has
$g_j^{(n)}\le g(Q_j)$.
If $W=\max_r v_r$, then every integer greater than $W-a$ lies in $Q_j$, so
$\operatorname{cond}(Q_j)\le W-a+1\le LB$.

Finally, $w_1\ge1$, so write $w_1=ta+1$ with $t\in\mathbb N_0$; the case
$t=0$ is allowed.  Since $w_1\le LB$, one has
$t\le(LB-1)/a<LB/a$.  Subtract $t$ copies of a length-$L$ representation of $a h_j$ from a length-$L$ representation of $w_1h_j$.  This is an integer representation of $h_j$ of length at most
  $L(1+t)\le L(1+LB/a)$.  Alternatively,
  \cref{cor:full-ray-transfer} and
  $\operatorname{cond}(Q_j)\le LB$ give
  $\lambda_j(H)\le2\operatorname{cond}(Q_j)-1\le2LB-1$.
  Taking the minimum proves the bound on $\lambda_j$, and the Fourier-error
  bound follows from \cref{lem:signed-transfer}.
\end{proof}

\paragraph{Combining the Ap\'ery-set bounds across passes}
\Cref{thm:apery-bridge} bounds both the number of missing multipliers and
$\lambda_j$.  For a pass satisfying its hypotheses, use $L_jB_j$; for any
other pass before a chosen cutoff $s$, use the elementary bound
$g_j^{(n)}\le M_j$; and for all later passes, use $U_s(n)/n$.  The smallest
total obtained in this way is
\[
\boxed{
  \mathfrak A_n(H)
  =
  \min_{\substack{0\le s\le p,\ J\subseteq[1,s]\\
                  \text{Ap\'ery data available on }J}}
  \left[
    \sum_{j\in J}L_jB_j
    +\sum_{\substack{1\le j\le s\\j\notin J}}M_j
    +\frac{U_s(n)}{n}
  \right].}
\]
The minimum is always finite.

\begin{corollary}[Aggregated Ap\'ery-set upper bounds]
\label{cor:apery-runtime}
Let $J$ be a set of passes satisfying the hypotheses of
\cref{thm:apery-bridge}, with parameters $(L_j,B_j)$, and use
$g_j^{(n)}\le M_j$ on the remaining passes.  Then
\[
\begin{aligned}
  C_n(H)
  &\le n\left(
      p+\sum_{j\in J}L_jB_j
        +\sum_{j\notin J}M_j
    \right) \\
  &\qquad
    -\frac12\sum_{j=1}^{p}
      h_j g_j^{(n)}(g_j^{(n)}+1) \\
  &\le n\left(
      p+\sum_{j\in J}L_jB_j
        +\sum_{j\notin J}M_j
    \right).
\end{aligned}
\]
In particular, a polylogarithmic value of the displayed total is sufficient
for a polylogarithmic factor above linear cost.  More generally,
\[
  W_n(H)\le n\mathfrak A_n(H),
  \qquad
  C_n(H)\le n\bigl(p+\mathfrak A_n(H)\bigr).
\]
\end{corollary}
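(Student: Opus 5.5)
The corollary follows by feeding the per-pass bounds of \cref{thm:apery-bridge} into the pass-by-pass upper bound of \cref{thm:directed-upper}. I will also use the weighted refinement \cref{prop:truncated-genus-refinement} to keep the negative correction term. No new combinatorics is needed; the work is bookkeeping of which bound is used on which pass.

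\textbf{Step 1: exchanges per pass.} By \cref{thm:directed-upper}, pass $j$ makes at most $D_j(n)$ exchanges on every input. By the upper half of \cref{prop:truncated-genus-refinement}, applied with the true count $g=g_j^{(n)}$,
\[
  D_j(n)\le n g_j^{(n)}-\tfrac12 h_j g_j^{(n)}(g_j^{(n)}+1).
\]
The replacement is made only in the linear term $n g_j^{(n)}$, and the subtracted term stays in terms of the true $g_j^{(n)}$.
\begin{itemize}
  \item For $j\in J$, \cref{thm:apery-bridge} gives $g_j^{(n)}\le g(Q_j)<L_jB_j$, so $n g_j^{(n)}\le nL_jB_j$.
  \item For $j\notin J$, $G_j(n)\subseteq\{1,\ldots,M_j\}$ gives $g_j^{(n)}\le M_j$, so $n g_j^{(n)}\le nM_j$.
\end{itemize}
Summing over $j$ bounds $W_n(H)$ by $n\bigl(\sum_{j\in J}L_jB_j+\sum_{j\notin J}M_j\bigr)-\tfrac12\sum_j h_jg_j^{(n)}(g_j^{(n)}+1)$.

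\textbf{Step 2: from exchanges to comparisons.} The usual gapped insertion implementation uses at most one non-exchanging comparison per key and pass. Over $p$ passes this adds at most $np$, which gives the first displayed inequality. The subtracted sum is nonnegative because $h_j\ge1$ and $g_j^{(n)}\ge0$, so dropping it gives the second inequality. The polylogarithmic remark is then immediate.

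\textbf{Step 3: the bound through $\mathfrak A_n(H)$.} Fix a cutoff $s$ and an admissible $J\subseteq[1,s]$. \cref{thm:directed-upper} gives $W_n(H)\le\sum_{j\le s}D_j(n)+U_s(n)$. For $j\le s$, bound $D_j(n)\le n g_j^{(n)}$, then use $L_jB_j$ on $J$ and $M_j$ off $J$ exactly as in Step 1. Write $U_s(n)=n\cdot U_s(n)/n$. This gives $W_n(H)\le n\bigl[\sum_{j\in J}L_jB_j+\sum_{j\le s,\,j\notin J}M_j+U_s(n)/n\bigr]$.

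Minimizing over $(s,J)$ yields $W_n(H)\le n\mathfrak A_n(H)$. The minimum is finite because $s=0$, $J=\varnothing$ is always admissible. Adding $np$ for the non-exchanging comparisons gives $C_n(H)\le n(p+\mathfrak A_n(H))$.

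\textbf{Main obstacle.} The only delicate point is Step 1: the negative correction must be stated with the actual $g_j^{(n)}$ and not with its upper bound. Because the right side of \cref{prop:truncated-genus-refinement}, $ng-\tfrac12 hg(g+1)$, need not be monotone in $g$, I would substitute only in the linear term. This is also why the first display retains the true $g_j^{(n)}$ in the subtracted sum.
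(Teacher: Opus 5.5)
Your proposal is correct and follows essentially the same route as the paper. You use the per-pass bound $T_j\le D_j(n)$ from \cref{thm:directed-upper} together with the upper half of \cref{prop:truncated-genus-refinement}, substituting $g_j^{(n)}<L_jB_j$ on $J$ and $g_j^{(n)}\le M_j$ elsewhere only in the linear term, and add the $np$ comparison overhead; for the general form you invoke the cutoff clause bounding later passes by $U_s(n)$ and then minimize over $(s,J)$, noting, as the paper asserts without comment, that $s=0$, $J=\varnothing$ keeps the minimum finite.
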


\begin{proof}
The first display follows by applying
\cref{prop:truncated-genus-refinement} on every pass, using
$g_j^{(n)}<L_jB_j$ on $J$ and $g_j^{(n)}\le M_j$ elsewhere, and then adding
the comparison overhead from \cref{thm:directed-upper}.  Dropping the
nonnegative correction gives the second inequality.  For the general form,
fix a cutoff $s$, a set $J\subseteq[1,s]$, and data occurring in the minimum
defining $\mathfrak A_n(H)$.  The same argument bounds the exchanges in the
first $s$ passes by
\[
 n\left(
   \sum_{j\in J}L_jB_j+
   \sum_{\substack{1\le j\le s\\j\notin J}}M_j
 \right).
\]
The cutoff clause of \cref{thm:directed-upper} bounds all later exchanges by
$U_s(n)$.  Add the compulsory comparison overhead $np$ and minimize.
\end{proof}

\subsection{Exact evaluation from an Ap\'ery set}

The same finite data can also evaluate the weighted count $D_j(n)$ exactly.
This is useful when applying the upper bound, but it is not needed for the
main product-grid theorem.  The formulas below are standard Ap\'ery-set and
Sylvester-sum identities \cite{rosalesgarcia2009,brownshiue1993}.

\begin{proposition}[Exact weighted count from an Ap\'ery set]
\label{prop:apery-exact}
Suppose that $Q_j$ is a numerical semigroup and every positive integer outside
$Q_j$ lies in $[1,M_j]$.  Fix $a\in Q_j$, $a\ge2$, and write
\[
  \operatorname{Ap}(Q_j,a)=\{v_0=0,v_1,\ldots,v_{a-1}\},
  \qquad v_r\equiv r\pmod a.
\]
Then
\[
  g(Q_j)=\frac1a\sum_{r=0}^{a-1}v_r-\frac{a-1}{2},
\]
\[
  n_1(Q_j):=\sum_{m\in\mathbb N_{\ge1}\setminus Q_j}m
  =\frac1{2a}\sum_{r=0}^{a-1}v_r^2
   -\frac12\sum_{r=0}^{a-1}v_r
   +\frac{a^2-1}{12},
\]
and
\[
  \boxed{D_j(n)=n\,g(Q_j)-h_j\,n_1(Q_j).}
\]
Thus the $a$ Ap\'ery values determine $D_j(n)$ in $O(a)$ arithmetic
operations.
\end{proposition}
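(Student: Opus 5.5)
We prove the three displays in order: first the Ap\'ery-set formulas for $g(Q_j)$ and $n_1(Q_j)$, then the boxed identity for $D_j(n)$.

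\textbf{Genus formula.} Count the gaps of $Q_j$ residue class by residue class modulo $a$. Because $a\in Q_j$, an integer $x\equiv r\pmod a$ lies in $Q_j$ exactly when $x\ge v_r$. So the positive gaps in class $r$ are
\[
  v_r-a,\quad v_r-2a,\quad\ldots,\quad r,
\]
and there are $(v_r-r)/a$ of them. For $r=0$ the list is empty, since $v_0=0$. Summing over $r$ gives
\[
  g(Q_j)=\frac1a\sum_r v_r-\frac1a\sum_{r=0}^{a-1}r=\frac1a\sum_r v_r-\frac{a-1}2 .
\]

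\textbf{Formula for $n_1(Q_j)$.} Sum the gaps in class $r$ directly. Write $k_r=(v_r-r)/a$. The class-$r$ gaps form an arithmetic progression from $r$ to $v_r-a$ with $k_r$ terms, so their sum is
\[
  \frac{k_r\,(r+v_r-a)}{2}=\frac{(v_r-r)(v_r+r-a)}{2a}=\frac{v_r^2-r^2-a(v_r-r)}{2a}.
\]
Now sum over $r=0,\dots,a-1$, using
\[
  \sum_r r=\frac{a(a-1)}{2},\qquad \sum_r r^2=\frac{(a-1)a(2a-1)}{6}.
\]
The terms not involving $v_r$ are
\[
  -\frac{1}{2a}\sum_r r^2+\frac12\sum_r r
  =-\frac{(a-1)(2a-1)}{12}+\frac{a(a-1)}{4}
  =\frac{(a-1)(a+1)}{12}=\frac{a^2-1}{12}.
\]
This yields the stated expression for $n_1(Q_j)$. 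The only real obstacle in the whole proof is this bookkeeping, and the check above confirms the constant $(a^2-1)/12$.

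\textbf{Identity for $D_j(n)$.} Recall
\[
  G_j(n)=\{1\le m\le M_j:\ mh_j\notin S_{j-1}\}=[1,M_j]\setminus Q_j .
\]
By hypothesis every positive non-element of $Q_j$ already lies in $[1,M_j]$, so $G_j(n)=\mathbb N_{\ge1}\setminus Q_j$ exactly. Then
\[
  D_j(n)=\sum_{m\in G_j(n)}(n-mh_j)=n\,|G_j(n)|-h_j\sum_{m\in G_j(n)}m=n\,g(Q_j)-h_j\,n_1(Q_j).
\]
This is the same decomposition used in the proof of \cref{prop:truncated-genus-refinement}.

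\textbf{Complexity.} Given the $a$ Ap\'ery values, the sums $\sum_r v_r$ and $\sum_r v_r^2$ take $O(a)$ operations. The remaining steps, namely the two formulas and the final combination for $D_j(n)$, take $O(1)$ further operations.
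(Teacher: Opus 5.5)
Your proof is correct and follows the same route as the paper. Both count the gaps of $Q_j$ residue class by residue class modulo $a$ (class $r$ contributes $r,r+a,\ldots,v_r-a$), sum the counts and the arithmetic progressions to get $g(Q_j)$ and $n_1(Q_j)$, and then use $\mathbb N_{\ge1}\setminus Q_j\subseteq[1,M_j]$ to identify $G_j(n)$ with the full gap set before substituting into $D_j(n)$; you simply write out the bookkeeping, including the constant $(a^2-1)/12$, that the paper leaves implicit.
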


\begin{proof}
Write $v_r=q_ra+r$.  The missing positive integers in residue class $r$ are
$r,r+a,\ldots,r+(q_r-1)a$.  Summing their number and their values over all
residue classes gives the two displayed formulas.  The assumption on
$M_j$ means that $G_j(n)$ contains every positive integer outside $Q_j$;
substitution in the definition of $D_j(n)$ proves the last identity.
\end{proof}

\FloatBarrier

\section{Examples and limits of the general bounds}
\label{app:additional-consequences}

The next two families test the limits of the preceding bounds.  The first
shows that a global quantity based on all earlier gaps can greatly overestimate
the cost of one pass.  The second shows that the bound relating missing
multiples to the length of an integer relation has the best possible order.

\subsection{A global invariant can overestimate one pass}

The next family shows that ordinary global semigroup invariants (genus and
conductor) can remain large while the set of represented multiples of the
current gap omits only two integers.

\begin{proposition}[A global-to-pass separation]\label{prop:affine-ray-separation}
For $t\ge3$, set $n=16t^2$ and
\[
  H_t=(6t+1,5t,4t,3t,1).
\]
Let $S_t=\langle4t,5t,6t+1\rangle$ be the prefix semigroup before the $3t$-pass.  Then
\[
  g(S_t)=3t^2+3t=\Theta(n),
  \qquad
  \operatorname{cond}(S_t)=6t^2+6t=\Theta(n),
\]
yet
\[
  Q_t:=S_t/(3t)=\mathbb N_0\setminus\{1,2\},
  \qquad
  g(Q_t)=2,
  \qquad
  \operatorname{cond}(Q_t)=3.
\]
Moreover, the weighted count and minimum integer-representation length satisfy
\[
  D_4(n)=32t^2-9t=\Theta(n),
  \qquad
  \lambda_4(H_t)=3.
\]
\end{proposition}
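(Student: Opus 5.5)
The plan is to compute everything explicitly from the description
\[
  S_t=\bigl\{\,4tx+5ty+(6t+1)z : x,y,z\in\mathbb N_0\,\bigr\},
\]
treating $Q_t$, $D_4(n)$ and $\lambda_4(H_t)$ first because they are elementary, and the global genus and conductor last via an Ap\'ery set.

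\emph{The multiplier set $Q_t$.}  Take an element $3tm=4tx+5ty+(6t+1)z$.  Reducing modulo $t$ forces $t\mid z$.  Since $t(6t+1)=t\cdot(6t+1)$ and $6t+1\ge 19$ lies in $\langle4,5\rangle$, every term with $z\ge t$ can be rewritten with $z=0$.  Hence $3m\in Q_t$-membership reduces to the condition $3m\in\langle4,5\rangle$.  The gaps of $\langle4,5\rangle$ are $1,2,3,6,7,11$, so the only multiples of $3$ among them are $3$ and $6$.  This gives
\[
  Q_t=\mathbb N_0\setminus\{1,2\},\qquad g(Q_t)=2,\qquad \operatorname{cond}(Q_t)=3.
\]

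\emph{The weighted count $D_4(n)$.}  With $n=16t^2$ we have $M_4=\lfloor(16t^2-1)/(3t)\rfloor\ge5t-1\ge2$.  Therefore $G_4(n)=\{1,2\}$, and
\[
  D_4(n)=(n-3t)+(n-6t)=32t^2-9t .
\]

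\emph{The representation length $\lambda_4(H_t)$.}  The upper bound comes from the relation $3t=2\cdot4t-5t$, which has $\ell_1$ length $3$.  For the lower bound, \cref{lem:signed-transfer} already excludes length $\le1$, so only length $2$ needs checking.  A length-$2$ relation would be $\pm a\pm b$ or $2a$ with $a,b\in\{4t,5t,6t+1\}$.  The possible differences are $t$, $t+1$ and $2t+1$, while all sums and doubles exceed $3t$.  None equals $3t$, so $\lambda_4(H_t)=3$.

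\emph{Genus and conductor of $S_t$.}  This is the main step.  I would compute the Ap\'ery set of $S_t$ with respect to $4t$, with the claim
\[
  \operatorname{Ap}(S_t,4t)=\bigl\{\,5ty+(6t+1)z : 0\le y\le3,\ 0\le z\le t-1\,\bigr\}.
\]
There are two things to check.

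First, these $4t$ elements lie in distinct classes modulo $4t$.  Their residue is $(y+2z)t+z$.  Reducing modulo $t$ recovers $z$, and then $y+2z \bmod 4$ recovers $y$.

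Second, each element is minimal in its class.  Any element of $S_t$ can be reduced, without changing its residue and without increasing it, to one with $y\le3$ (using $4\cdot5t=5\cdot4t$) and $z\le t-1$ (using $t(6t+1)\in t\langle4,5\rangle$, as in the $Q_t$ step), after removing all copies of $4t$.  So the reduced element is the listed representative of its class.  Since a class has exactly one listed element, I would argue minimality this way: every element of the class is that representative plus a nonnegative multiple of $4t$.  This reduction argument is the part I expect to need the most care.

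Given the Ap\'ery set, the standard identity used in \cref{thm:apery-bridge} finishes the computation.  The sum of the Ap\'ery elements is $30t^2+2t(t-1)(6t+1)$.  Dividing by $4t$ and subtracting $(4t-1)/2$ gives
\[
  g(S_t)=3t^2+3t .
\]
The largest Ap\'ery element is $15t+(t-1)(6t+1)$, so
\[
  \operatorname{cond}(S_t)=\max v-4t+1=6t^2+6t .
\]
Both of these are $\Theta(n)$ because $n=16t^2$.
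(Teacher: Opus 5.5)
Your proposal is correct. The parts on $Q_t$, $D_4(n)$ and $\lambda_4(H_t)$ essentially coincide with the paper's proof: the same reduction modulo $t$, the same absorption of $t(6t+1)$ into $t\langle4,5\rangle$, and the same enumeration of length-two relations.

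You differ from the paper on the global genus and conductor. The paper proves a single fiber characterization: for $x=qt+r$ with $0\le r<t$, one has $x\in S_t$ if and only if $q-6r\in\langle4,5\rangle$. In other words, it slices $\mathbb N_0$ by residues modulo $t$ rather than modulo $4t$. Each fiber is then a translate $6r+\langle4,5\rangle$, so fiber $r$ contains exactly $6r+6$ gaps, and the largest gap sits at $r=t-1$, $q=6(t-1)+11$. Summing over $r$ gives $g(S_t)$ and $\operatorname{cond}(S_t)$ by direct counting, with no Ap\'ery or Selmer identities. The set $Q_t$ then falls out as the $r=0$ fiber. This also shows structurally why the separation happens: all multiples of $3t$ lie in the one unshifted fiber, while the shifts $6r$ in the other fibers make the global genus quadratic in $t$.

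Your route instead computes $\operatorname{Ap}(S_t,4t)$ and invokes the standard genus and Frobenius formulas. This is the textbook tool, but it costs a separate distinct-residue check and a normal-form minimality argument. Your numbers are right: $\sum v=30t^2+2t(t-1)(6t+1)$ and $\max v=6t^2+10t-1$ give $3t^2+3t$ and $6t^2+6t$.

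One point of care in the normal-form step: absorb $z\ge t$ first, since this creates new copies of $4t$ and $5t$. Only then reduce $y$ modulo $4$ via $4\cdot5t=5\cdot4t$. With that order, the decomposition $s=v+4tX$, with $v$ listed and $X\ge0$, is immediate, and minimality follows exactly as you describe.
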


\begin{proof}
Let $A=\langle4,5\rangle$; its positive gaps are
$\{1,2,3,6,7,11\}$.  Write $x=qt+r$ with $0\le r<t$.  We first show
\[
  x=qt+r\in S_t
  \quad\Longleftrightarrow\quad
  q-6r\in A.
\]
Indeed, in a representation
$x=4ta+5tb+(6t+1)c$, reduction modulo $t$ gives $c=r+kt$.  Hence
$q-6r=4a+5b+k(6t+1)\in A$; here $6t+1\in A$ because $t\ge3$ and
$\langle4,5\rangle$ contains every integer at least $12$.
Conversely, $q-6r=4a+5b$ gives
$x=4ta+5tb+r(6t+1)$.

For a fixed $r$, the missing values have either $0\le q<6r$ or
$q=6r+e$ with $e\in\{1,2,3,6,7,11\}$.  Therefore
\[
  g(S_t)=\sum_{r=0}^{t-1}(6r+6)=3t^2+3t.
\]
The largest gap occurs at $r=t-1$ and $e=11$ and equals
$6t^2+6t-1$, proving the conductor formula.  Taking $r=0$ and
$x=3tm$ shows that $m\in Q_t$ exactly when $3m\in A$; the only missing
positive multiples of $3$ in $A$ are $3$ and $6$.  Thus
$Q_t=\mathbb N_0\setminus\{1,2\}$.

The identity $3t=2(4t)-5t$ gives $\lambda_4(H_t)\le3$.  An integer
representation of length at most two would make $3t$ either one earlier gap,
the sum of two earlier gaps, twice one earlier gap, or the absolute difference
of two earlier gaps.  The positive sums are at least $8t$, while the positive
differences among $4t,5t,6t+1$ are $t,t+1,2t+1$; none equals $3t$ for
$t\ge3$.  Hence $\lambda_4(H_t)=3$.  In particular, no earlier gap is an
integer multiple of $3t$, although a short relation with cancellation exists.

For $D_4(n)$, the finite multiplier range $[1,M_4]$ has
$M_4=\lfloor(16t^2-1)/(3t)\rfloor$ elements.  Since $Q_t$ misses only
$\{1,2\}$, which both lie in this window,
$g_4^{(n)}=2$ and
\[
  D_4(n)=\sum_{m\in\{1,2\}}(n-m\cdot3t)
  =2n-3t(1+2)=32t^2-9t.
\]
\end{proof}

\begin{remark}
The generalized Frobenius invariant $n_{3t}(4t,5t,6t+1)$ also equals two,
coinciding with $g(Q_t)$.  The separation here is between the ordinary global
semigroup genus $g(S_t)=\Theta(n)$ and the pass-specific genus
$g(Q_t)=O(1)$.
\end{remark}

\subsection{Why the linear relation bound is best possible}

The linear bound $\lambda_j=O(g_j)$ in \cref{thm:ray-genus-transfer} cannot
be improved to $o(g_j)$ in general.

\begin{proposition}[The linear bound is tight]\label{prop:genus-transfer-sharpness}
For every $h\ge3$ with $h\not\equiv2\pmod5$, set
\[
  n_h=6h^2,
  \qquad
  H_h=(3h-1,2h+1,h,1).
\]
At the $h$-pass, the multiplier set is
\[
  Q_h=\langle2h+1,3h-1\rangle/h
      =\langle5,2h+1,3h-1\rangle,
\]
and
\[
  g_3^{(n_h)}=g(Q_h)=3h-2,
  \qquad
  \lambda_3(H_h)=
  \begin{cases}
    h,&h\equiv0,4\pmod5,\\
    2h,&h\equiv1,3\pmod5.
  \end{cases}
\]
Moreover $M_3=6h-1\ge2g_3^{(n_h)}+1$.  Hence
$\lambda_3(H_h)=\Theta(g_3^{(n_h)})$ under the finite-range condition of
\cref{thm:ray-genus-transfer}.
\end{proposition}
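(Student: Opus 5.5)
The plan is to work entirely at the $h$-pass ($j=3$) and reduce each claim to arithmetic in $\mathbb Z$. There are three parts: identify $Q_h$, compute its genus and Frobenius number from an Ap\'ery set with respect to $5$, and compute $\lambda_3(H_h)$ by parametrizing all integer representations of $h$.

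\emph{Step 1: the multiplier set.} Take a value $x(2h+1)+y(3h-1)$ with $x,y\in\mathbb N_0$. Reducing modulo $h$ shows it is a multiple of $h$ exactly when $x\equiv y\pmod h$, so write $y=x+kh$ with $k\in\mathbb Z$. Dividing by $h$ gives the multiplier $5x+k(3h-1)$.
\begin{itemize}
\item If $k\ge0$, this multiplier lies in $\langle5,3h-1\rangle$.
\item If $k<0$, then $y\ge0$ forces $x\ge|k|h$. Writing $x=|k|h+x'$ turns the multiplier into $5x'+|k|(2h+1)$.
\end{itemize}
Conversely, each generator is realized: $(x,y)=(1,1)$ gives $5$, $(h,0)$ gives $2h+1$, and $(0,h)$ gives $3h-1$. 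Hence $Q_h=\langle5,2h+1,3h-1\rangle$. The hypothesis $h\not\equiv2\pmod5$ makes $5\nmid 2h+1$, so $Q_h$ is a numerical semigroup.

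\emph{Step 2: genus and the window.} Use \cref{prop:apery-exact}, i.e.\ $g=\frac15\sum_r v_r-2$, with $a=5$.
\begin{itemize}
\item Since $(2h+1)+(3h-1)=5h\in5\mathbb N_0$, no Ap\'ery element uses both of $2h+1$ and $3h-1$. So each Ap\'ery element is $i(2h+1)$ or $j(3h-1)$.
\item These residues are $\pm i(2h+1)$ modulo $5$, with $2h+1$ invertible. Comparing the two candidates in each class gives the Ap\'ery set $\{0,\,2h+1,\,4h+2,\,6h-2,\,3h-1\}$.
\item Its sum is $15h$, so $g(Q_h)=3h-2$.
\item The largest Ap\'ery element gives Frobenius number $6h-7<M_3=\lfloor(6h^2-1)/h\rfloor=6h-1$. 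Every gap of $Q_h$ therefore lies in the finite range, so $g_3^{(n_h)}=g(Q_h)=3h-2$.
\item Finally, $6h-1\ge2(3h-2)+1=6h-3$ verifies the window condition.
\end{itemize}

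\emph{Step 3: $\lambda_3$, the main computation.} Parametrize integer solutions of $x(3h-1)+y(2h+1)=h$. Modulo $h$ this forces $y=x+kh$, and then the equation becomes $5x+k(2h+1)=1$. Here $k=0$ is impossible, and each admissible $k$ determines $x=(1-k(2h+1))/5$ and $y=(1+k(3h-1))/5$.

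For $k\ge1$ we have $x<0<y$, and $|x|+|y|=\bigl(k(2h+1)-1+k(3h-1)+1\bigr)/5=kh$. For $k\le-1$ the signs flip and symmetrically $|x|+|y|=|k|h$. So $\lambda_3(H_h)=h\cdot\min\{|k|:k(2h+1)\equiv1\pmod5\}$.

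Reading off $2h+1\bmod5$:
\begin{itemize}
\item $h\equiv0$: $2h+1\equiv1$, so $k=1$ and $\lambda_3=h$.
\item $h\equiv4$: $2h+1\equiv4$, so $k=-1$ and $\lambda_3=h$.
\item $h\equiv1$: $2h+1\equiv3$, so $k\in\{2,-3\}$ and the minimum $|k|$ is $2$, giving $\lambda_3=2h$.
\item $h\equiv3$: $2h+1\equiv2$, so $k\in\{3,-2\}$ and the minimum $|k|$ is $2$, giving $\lambda_3=2h$.
\end{itemize}

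The delicate point is Step 3, specifically checking that the $\ell_1$ norm equals exactly $|k|h$ with no cancellation. This requires confirming that the signs of $x$ and $y$ are as claimed for every $|k|\ge1$ and $h\ge3$. Once that holds, $\lambda_3=\Theta(h)=\Theta(g_3^{(n_h)})$ follows immediately.
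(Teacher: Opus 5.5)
Your proposal is correct and follows essentially the same route as the paper. You reduce modulo $h$ to get $Q_h=\langle5,2h+1,3h-1\rangle$, read off the Ap\'ery set $\{0,2h+1,4h+2,6h-2,3h-1\}$ with respect to $5$ to obtain genus $3h-2$ and Frobenius number $6h-7<M_3=6h-1$, and parametrize all integer solutions of $x(3h-1)+y(2h+1)=h$ to get $\ell_1$-length exactly $|k|h$. Your justification of the Ap\'ery set (no minimal element can use both generators, since their sum is $5h$) and your derivation of the full solution family via $y=x+kh$ are slightly more explicit than the paper's, but the argument is the same.
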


\begin{proof}
Put $A=2h+1$ and $B=3h-1$.  The assumption on $h$ is equivalent to
$\gcd(A,B)=\gcd(h-2,5)=1$.  If $mh=xA+yB$, then
\[
  mh=h(2x+3y)+(x-y),
\]
so $h$ divides $x-y$.  According to its sign, this gives either
$m=5y+kA$ or $m=5x+kB$.  Conversely, the identities
$5h=A+B$, $Ah=hA$, and $Bh=hB$ show that each generator on the right is
in the quotient.  Hence
$Q_h=\langle5,A,B\rangle$.

The numbers $A,2A,B,2B$ occupy the four nonzero residue classes modulo $5$.
They are the nonzero elements of $\operatorname{Ap}(Q_h;5)$: a combination
of at least three copies of $A$ and $B$ is larger than $2B$, while the only
combinations using at most two copies are
$0,A,B,2A,A+B,2B$, and $A+B=5h$ has residue zero.  The standard Ap\'ery
formulas therefore give
\[
  g(Q_h)=\frac{3(A+B)}5-2=3h-2,
  \qquad \operatorname{cond}(Q_h)=2B-4=6h-6.
\]
Since $M_3=6h-1$, the finite range contains every gap of $Q_h$.

It remains to minimize $|u|+|v|$ subject to $uA+vB=h$.  All solutions have
\[
  u=\frac{1-Bk}{5},\qquad v=\frac{1+Ak}{5},
  \qquad Bk\equiv1\pmod5.
\]
For nonzero $k$, the two coefficients have opposite signs and
$|u|+|v|=h|k|$.  The least possible $|k|$ is $1$ when
$h\equiv0,4\pmod5$ and $2$ when $h\equiv1,3\pmod5$, which proves the
claimed formula for $\lambda_3(H_h)$.
\end{proof}

Thus \cref{thm:ray-genus-transfer} is tight up to constant factors.

\FloatBarrier

\section{Computational checks}\label{app:verification}

The companion artifact implements the finite exhaustive checks and seeded
family calculations summarized below.  Its distribution directory is
\[
  \texttt{chronological-shellsort-artifact/}.
\]
From the artifact root, run
\texttt{python} \path{run_verification.py}.
The $17$ programs exhaustively check persistence and the upper bound for
$n\le8$.  They also test seeded families for integer-representation length,
the Ap\'ery-set bound, the combined Fourier estimate, weighted counts, the
global-to-pass separation, and tightness.  All checks
pass.  The artifact's \path{README.md} records the deterministic generation
rules, verification scope, and mapping from programs to stable manuscript
labels.

\bibliographystyle{elsarticle-num}
\bibliography{refs}

\begin{thebibliography}{1}
\expandafter\ifx\csname url\endcsname\relax
  \def\url#1{\texttt{#1}}\fi
\expandafter\ifx\csname urlprefix\endcsname\relax\def\urlprefix{URL }\fi
\expandafter\ifx\csname href\endcsname\relax
  \def\href#1#2{#2} \def\path#1{#1}\fi

\bibitem{pratt1971}
V.~R. Pratt, Shellsort and sorting networks, Ph.D. thesis, Stanford University,
  reprinted by Garland, 1979 (1971).

\bibitem{poonen1993}
B.~Poonen, The worst case in {Shellsort} and related algorithms, Journal of
  Algorithms 15~(1) (1993) 101--124.
\newblock \href {https://doi.org/10.1006/jagm.1993.1032}
  {\path{doi:10.1006/jagm.1993.1032}}.

\bibitem{plaxtonsuel1997}
C.~G. Plaxton, T.~Suel, Lower bounds for {Shellsort}, Journal of Algorithms
  23~(2) (1997) 221--240.
\newblock \href {https://doi.org/10.1006/jagm.1996.0825}
  {\path{doi:10.1006/jagm.1996.0825}}.

\bibitem{shell1959}
D.~L. Shell, A high-speed sorting procedure, Communications of the ACM 2~(7)
  (1959) 30--32.
\newblock \href {https://doi.org/10.1145/368370.368387}
  {\path{doi:10.1145/368370.368387}}.

\bibitem{sedgewick1986}
R.~Sedgewick, A new upper bound for {Shellsort}, Journal of Algorithms 7~(2)
  (1986) 159--173.
\newblock \href {https://doi.org/10.1016/0196-6774(86)90001-5}
  {\path{doi:10.1016/0196-6774(86)90001-5}}.

\bibitem{incerpi1985}
J.~Incerpi, R.~Sedgewick, Improved upper bounds on {Shellsort}, Journal of
  Computer and System Sciences 31~(2) (1985) 210--224.
\newblock \href {https://doi.org/10.1016/0022-0000(85)90042-X}
  {\path{doi:10.1016/0022-0000(85)90042-X}}.

\bibitem{rosalesgarcia2009}
J.~C. Rosales, P.~A. Garc{\'i}a-S{\'a}nchez, Numerical Semigroups, Vol.~20 of
  Developments in Mathematics, Springer, 2009.
\newblock \href {https://doi.org/10.1007/978-1-4419-0160-6}
  {\path{doi:10.1007/978-1-4419-0160-6}}.

\bibitem{zang2026}
Z.~Zang, \href{https://arxiv.org/abs/2607.08997}{Improved lower bounds of the
  time complexity of {Shellsort}}, arXiv:2607.08997v1, 10 July 2026 (2026).
\newblock \href {http://arxiv.org/abs/2607.08997} {\path{arXiv:2607.08997}},
  \href {https://doi.org/10.48550/arXiv.2607.08997}
  {\path{doi:10.48550/arXiv.2607.08997}}.
\newline\urlprefix\url{https://arxiv.org/abs/2607.08997}

\bibitem{brownshiue1993}
T.~C. Brown, P.~J.-S. Shiue, A remark related to the {Frobenius} problem,
  Fibonacci Quarterly 31~(1) (1993) 32--36.
\newblock \href {https://doi.org/10.1080/00150517.1993.12429318}
  {\path{doi:10.1080/00150517.1993.12429318}}.

\end{thebibliography}

\end{document}